\def\fullversion{0}
  \newcommand{\BDPQ}{{\textsc{LaB-PQ}}\xspace}
  \newcommand{\BDPQfull}{\textbf{La}zy-\textbf{B}atched \textbf{P}riority \textbf{Q}ueue\xspace}
  \newcommand{\PQ}{\mathbb{PQ}\xspace}
  \newcommand{\tourt}{tournament tree\xspace}
  \newcommand{\wtupdate}{\mf{Sync}\xspace}
  \newcommand{\updated}{\mb{renew}}
  \newcommand{\wtleft}{\mb{left}}
  \newcommand{\wtright}{\mb{right}}
  \newcommand{\wtparent}{\mb{parent}}
  \newcommand{\lazyinsert}{\mf{Update}\xspace}
  \newcommand{\lazyinsertbold}{\mf{\textbf{Update}}}
  \newcommand{\boldlazyinsert}{\textbf{\textsc{Update}}}
  \newcommand{\id}{{\mathit{id}}\xspace}
  \newcommand{\record}{record\xspace}
  \newcommand{\records}{records\xspace}
  \newcommand{\bool}{\mathit{Bool}\xspace}
  \newcommand{\mapping}{mapping function\xspace}
  \newcommand{\krhograph}{$(k,\rho)$-graph\xspace}
  \newcommand{\deltastepping}{$\Delta$-Stepping\xspace}
  \newcommand{\deltastarstepping}{$\Delta^{*}$-Stepping\xspace}
  \newcommand{\SSSPalgo}{$\rho$-Stepping\xspace}
  \newcommand{\ssspours}{$\rho$-Stepping\xspace}
  \newcommand{\deltas}{$\Delta$-Stepping\xspace}
  \newcommand{\deltass}{$\Delta^*$-Stepping\xspace}
  \newcommand{\radiuss}{Radius-Stepping\xspace}
  \newcommand{\ourbf}{\textit{PQ-BF}\xspace}
  \newcommand{\ourdelta}{\textit{PQ-$\Delta^{*}$}\xspace}
  \newcommand{\ourrho}{\textit{PQ-$\rho$}\xspace}
  \newcommand{\addmark}{\mf{Mark}\xspace}
  \newcommand{\addmarkbold}{\mf{\textbf{Mark}}\xspace}
  \newcommand{\extract}{\mf{Extract}\xspace}
  \newcommand{\boldextract}{\textbf{\textsc{Extract}}\xspace}
  \newcommand{\traverse}{\mf{ExtractFrom}\xspace}
  \newcommand{\collect}{\mf{Reduce}\xspace}
  \newcommand{\extcond}{\mf{ExtDist}\xspace}
  \newcommand{\finishcond}{\mf{FinishCheck}\xspace}
  \newcommand{\boldextcond}{\mf{\bf ExtDist}\xspace}
  \newcommand{\boldfinishcond}{\mf{\bf FinishCheck}\xspace}
  \newcommand{\seq}{\mathit{seq}\xspace}
  \newcommand{\inQ}{\mathit{inQ}\xspace}
  \newcommand{\codeskip}{{\vspace{.05in}}}
  \newcommand{\para}[1]{{\bf \emph{#1}}\,}
  \newcommand{\mathtext}[1]{{\mathit{#1}}}
  \newcommand{\emp}[1]{\emph{\textbf{#1}}}
  \newcommand{\variablename}[1]{{\texttt{#1}}}
  \newcommand{\forkins}{\texttt{fork}}
  \newcommand{\thread}{thread}
  \newcommand{\depth}{span}
  \newcommand{\TAS}[0]{$\mf{TestAndSet}$}
  \newcommand{\tas}{{\texttt{test\_and\_set}}}
  \newcommand{\cas}{{\texttt{compare\_and\_swap}}}
  \newcommand{\WriteMin}{\mf{WriteMin}\xspace}
  \newcommand{\R}{\mathbb{R}}
  \newcommand{\N}{\mathbb{N}}
  \newcommand{\Z}{\mathbb{Z}}
  \newcommand{\true}{\emph{true}}
  \newcommand{\false}{\emph{false}}
  \newcommand{\ifconference}{{{\ifx\fullversion\undefined}}}
\def\dfnt@space@setup{%
  \dfnt@preskip=\parskip
    \dfnt@postskip=0pt}
\newtheoremstyle{exampstyle}
  {.05in} 
  {.05in} 
  {} 
  {.5em} 
  {\sc \bfseries} 
  {.} 
  {.5em} 
  {} 
\theoremstyle{exampstyle} \newtheorem{compactdef}{Definition}
\theoremstyle{exampstyle} 
\renewenvironment{proof}[1][\proofname]{\par
  \vspace{-\topsep}
  \pushQED{\qed}%
  \normalfont
  \topsep0pt \partopsep0pt 
  \trivlist
  \item[\hskip\labelsep
        \itshape
    #1\@addpunct{.}]\ignorespaces
}{%
  \popQED\endtrivlist\@endpefalse
  \addvspace{3pt plus 3pt} 
}
\newcommand{\smallsmallskip}{\vspace{.05in}}
\crefname{section}{Sec.}{Sec.}
\crefname{theorem}{Thm.}{Thm.}
\crefname{lemma}{Lem.}{Lem.}
\crefname{corollary}{Col.}{Col.}
\crefname{table}{Tab.}{Tab.}
\begin{document}
\title[Efficient Parallel Shortest-Path Algorithms]{Efficient Stepping Algorithms and Implementations for \\Parallel Shortest Paths}
  \author{Xiaojun Dong}
  \affiliation{\institution{UC Riverside}}
  \email{xdong038@cs.ucr.edu}
  \author{Yan Gu}
  \affiliation{\institution{UC Riverside}}
  \email{ygu@cs.ucr.edu}
  \author{Yihan Sun}
  \affiliation{\institution{UC Riverside}}
  \email{yihans@cs.ucr.edu}
  \author{Yunming Zhang}
  \affiliation{\institution{MIT}}
  \email{yunming@mit.edu}

\begin{abstract}
  The single-source shortest-path (SSSP) problem is a notoriously hard problem in the parallel context.
  In practice, the $\Delta$-stepping algorithm of Meyer and Sanders has been widely adopted.
  However, $\Delta$-stepping has no known worst-case bounds for general graphs, and
  the performance highly relies on the parameter $\Delta$, which requires exhaustive tuning.
  The parallel SSSP algorithms with provable bounds, such as Radius-stepping, either have no implementations available or are much slower than $\Delta$-stepping in practice.

  We propose the \emph{stepping algorithm framework} that generalizes existing algorithms such as $\Delta$-stepping and Radius-stepping. The framework allows for similar analysis and implementations for all stepping algorithms. We also propose a new abstract data type, lazy-batched priority queue (\textsc{LaB-PQ}) that abstracts the semantics of the priority queue needed by the stepping algorithms.
  We provide two data structures for \textsc{LaB-PQ}, focusing on theoretical and practical efficiency, respectively. 
  Based on the new framework and \textsc{LaB-PQ}, we show two new stepping algorithms, $\rho$-stepping and $\Delta^*$-stepping, that are simple, with non-trivial worst-case bounds, and fast in practice.
  We also show improved bounds for a list of existing algorithms such as Radius-Stepping.

  Based on our framework, we implement three algorithms: Bellman-Ford, $\Delta^*$-stepping, and $\rho$-stepping.
  We compare the performance with four state-of-the-art implementations.
  On five social and web graphs, $\rho$-stepping is 1.3--2.6x faster than all the existing implementations.
  On two road graphs, our $\Delta^*$-stepping is at least 14\% faster than existing ones, while $\rho$-stepping is also competitive.
  The almost identical implementations for stepping algorithms also allow for in-depth analyses among the stepping algorithms in practice.

\hide{
  Some potential titles:

efficient stepping algorithms and implementations for parallel shortest path

simple and efficient algorithms and implementations for parallel shortest path

Theory and implementation of stepping algorithms for parallel shortest path problem

Provable efficient parallel shortest path problems with fast implementation}
\end{abstract} 

\begin{CCSXML}
<ccs2012>
   <concept>
       <concept_id>10003752.10003809.10003635.10010037</concept_id>
       <concept_desc>Theory of computation~Shortest paths</concept_desc>
       <concept_significance>500</concept_significance>
       </concept>
   <concept>
       <concept_id>10003752.10003809.10010170.10010171</concept_id>
       <concept_desc>Theory of computation~Shared memory algorithms</concept_desc>
       <concept_significance>500</concept_significance>
       </concept>
   <concept>
       <concept_id>10002950.10003624.10003633.10010917</concept_id>
       <concept_desc>Mathematics of computing~Graph algorithms</concept_desc>
       <concept_significance>500</concept_significance>
       </concept>
 </ccs2012>
\end{CCSXML}

\ccsdesc[500]{Theory of computation~Shortest paths}
\ccsdesc[500]{Theory of computation~Shared memory algorithms}
\ccsdesc[500]{Mathematics of computing~Graph algorithms}

\keywords{Single-source Shortest Paths; Parallel Algorithms; Shared-memory Algorithms; Stepping Algorithms; Parallel Priority Queue; Batch-dynamic Data Structures; $\rho$-stepping; $\Delta^*$-stepping}

\maketitle


\makeatletter
\newcommand{\removelatexerror}{\let\@latex@error\@gobble}
\makeatother


\section{Introduction}
\label{sec:intro}

Given a weighted graph $G=(V,E,w)$ with $n=|V|$ vertices, $m=|E|$ edges, edge weight function $w:E\to \R^+$, and a source $s\in V$, the single-source shortest-path (SSSP) problem is to find the shortest paths from $s$ to all other vertices in the graph.
In this paper, we consider general positive edge weights.
Sequentially, the best known bound for SSSP is $O(m+n\log n)$ using Dijkstra's algorithm~\cite{dijkstra1959} with Fibonacci heap~\cite{fredman1987fibonacci}.
However, SSSP is notoriously hard in parallel.
Despite dozens of papers and implementations over the past decades, all existing solutions have some unsatisfactory aspects. 

Practically, most existing parallel SSSP implementations ~\cite{zhang2020optimizing,dhulipala2017,nguyen2013lightweight,beamer2015gap} are based on \deltas~\cite{meyer2003delta},
which is a hybrid of Dijkstra's algorithm \cite{dijkstra1959} and the Bellman-Ford algorithm~\cite{bellman1958routing,ford1956network}.
It determines the correct shortest distances in increments of $\Delta$, in step~$i$ settling down all the vertices with distances in $[i\Delta, (i+1)\Delta]$.
Within each step, the algorithm runs Bellman-Ford as substeps.

Although \deltas{} is the state-of-the-art practical parallel SSSP algorithm, two challenges still remain. Theoretically, \deltas{} has been analyzed on random graphs~\cite{meyer2001single,crauser1998parallelization}, but no bounds has been shown for the general case.
Practically, the parameter~$\Delta$ can largely affect the algorithm's performance.
The best choice of $\Delta$ depends on the graph structure, weight distribution, and the implementation itself.
\cref{fig:intro:delta} shows the running time of three state-of-the-art \deltas{} implementations \cite{zhang2018graphit,dhulipala2017,nguyen2013lightweight} and our own \deltass{} (a variant of \deltas{}, see \cref{sec:framework}) with different $\Delta$ values, on real-world graphs (more details in \cref{sec:exp}).
A badly-chosen $\Delta$ can greatly affect the performance, and the best choices of $\Delta$ are very inconsistent for different graphs (even with the same weight distribution) and implementations.
Hence, in practice, one needs exhausting searches for $\Delta$ in the parameter space as preprocessing.

Theoretically, there has been a rich literature of parallel SSSP algorithms~\cite{ullman1991high,klein1997randomized,cohen1997using,Shi1999,cohen2000polylog,spencer1997time,blelloch2016parallel} with $o(nm)$ work and $o(n)$ span (critical path length). Most of these algorithms rely on adding shortcuts to achieve the bounds.
While these algorithms are inspiring, none of them have implementations or show practical advantages over \deltas{} on real-world graphs.
We believe that one potential reason is the use of shortcuts and hopsets.
To achieve $O(n^{1-\epsilon})$ span, these algorithms need to add $\Omega(n^{1+\epsilon})$ shortcuts.
More shortcut edges contribute to more work, memory usage and footprint, hiding the advantages in the span improvement.

\ifx\fullversion\undefined
There has also been prior work on parallelizing the priority queue in Dijkstra's algorithm~\cite{brodal1998parallel,deo1992parallel,alistarh2015spraylist,sundell2005fast,linden2013skiplist,shavit2000skiplist,liu2012lock,henzinger2013quantitative,zhou2019practical,calciu2014adaptive,bingmann2015bulk,sanders2019sequential,sanders1998randomized,sanders2000fast}. 
However, they do not provide interesting worst-case work and span bounds, or better performance than \deltas{} in practice.

We summarize existing work on parallel SSSP in \cref{sec:related}.
\else
There has also been prior work on parallelizing the priority queue in Dijkstra's algorithm.
Early work on PRAM~\cite{brodal1998parallel,deo1992parallel} parallelizes priority queue operations, but the worst-case span bound is still $\Theta(n)$.
Other previous papers consider concurrent~\cite{alistarh2015spraylist,sundell2005fast,linden2013skiplist,shavit2000skiplist,liu2012lock,henzinger2013quantitative,zhou2019practical,calciu2014adaptive}, external-memory or other settings~\cite{bingmann2015bulk,sanders2019sequential,sanders1998randomized,sanders2000fast}.
They do not provide interesting worst-case work and span bounds, or better performance than \deltas{} in practice.

We summarize existing work on parallel SSSP in \cref{sec:related}.
\fi

\begin{figure*}
  \centering
  \vspace{-1em}
  \begin{tabular}{cccc}
    \multicolumn{4}{c}{\includegraphics[width=\columnwidth]{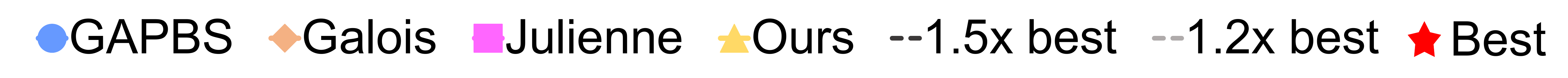}}\\
    \includegraphics[width=0.5\columnwidth]{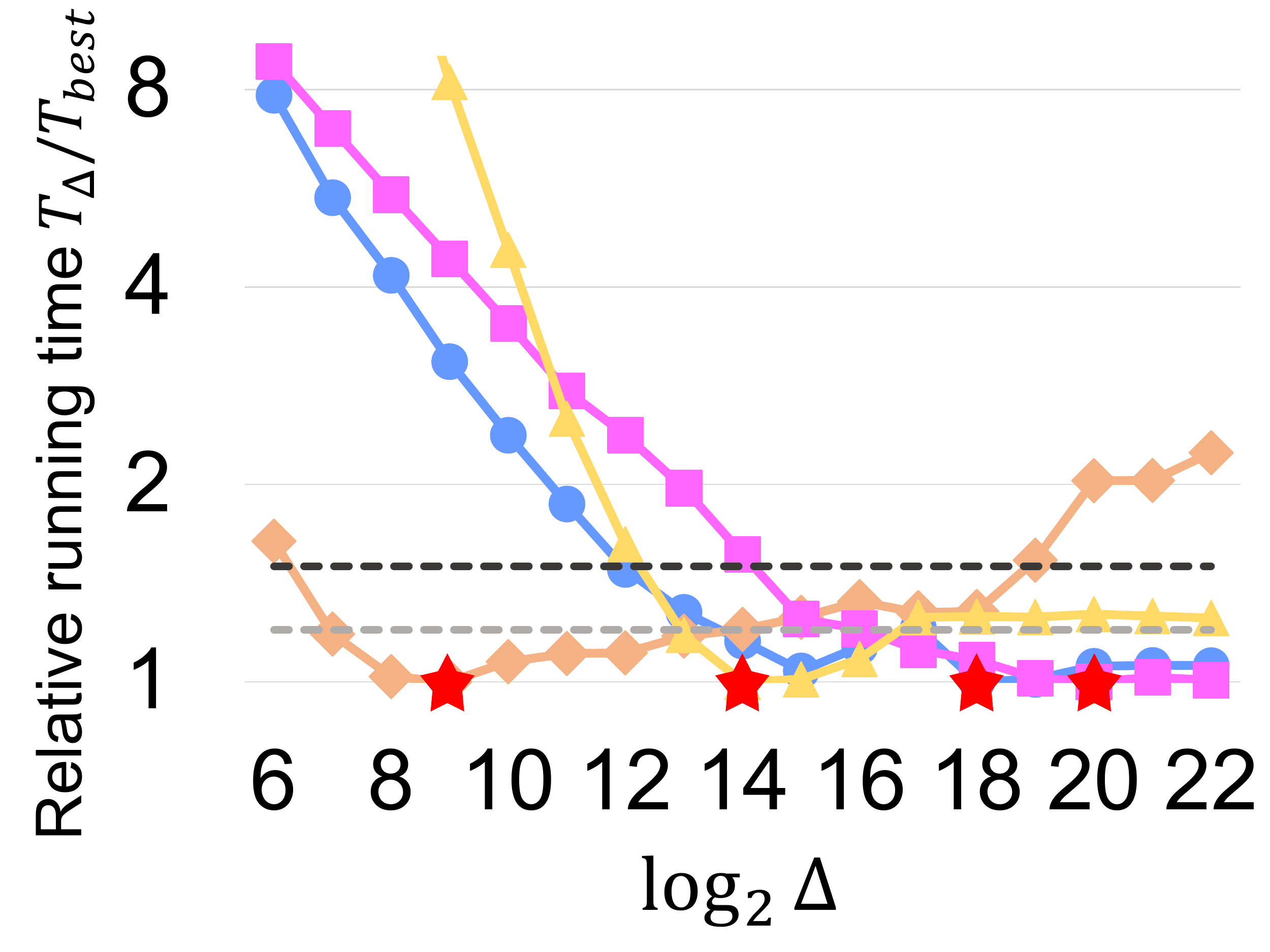} & \includegraphics[width=0.5\columnwidth]{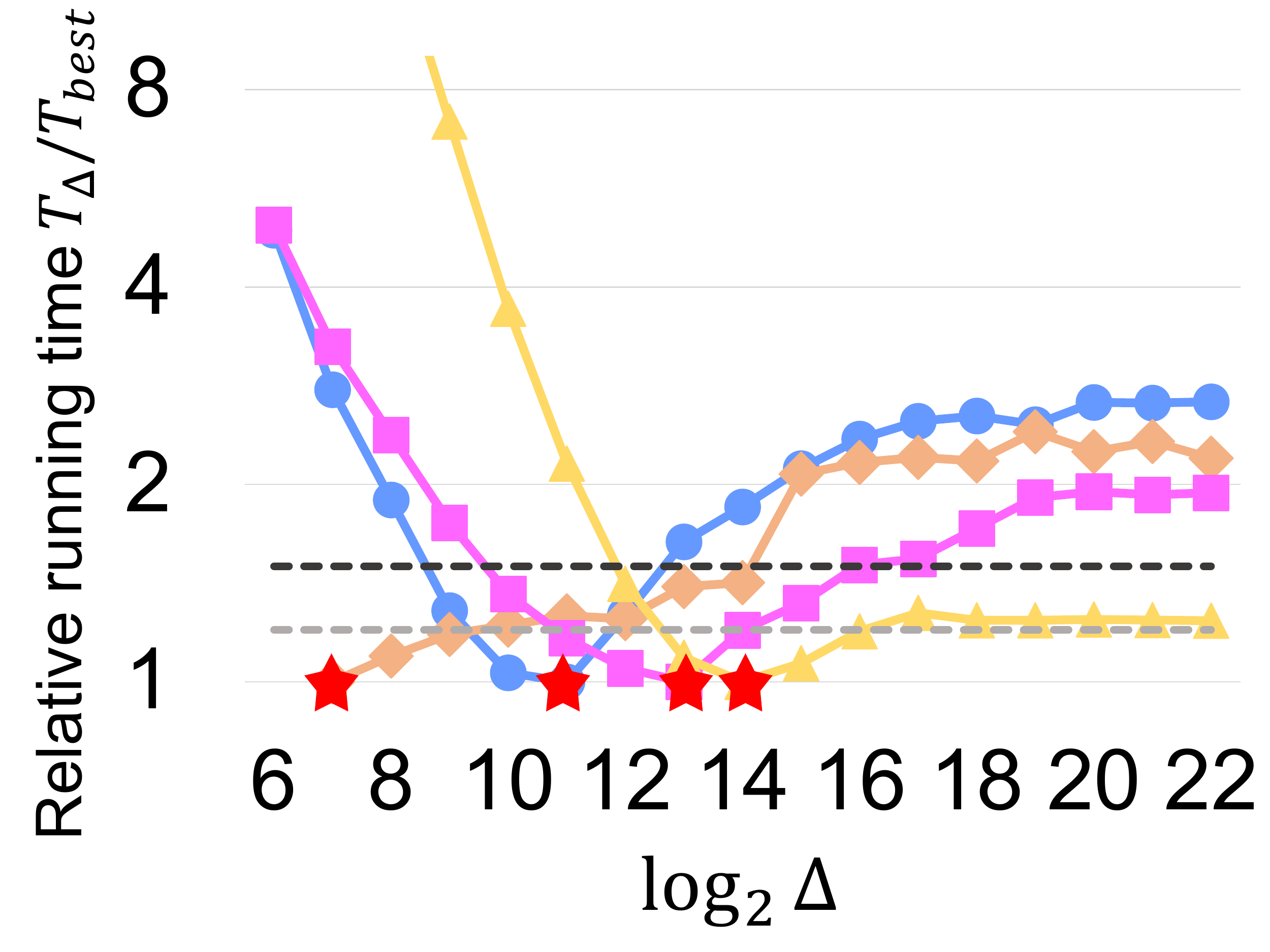} & \includegraphics[width=0.5\columnwidth]{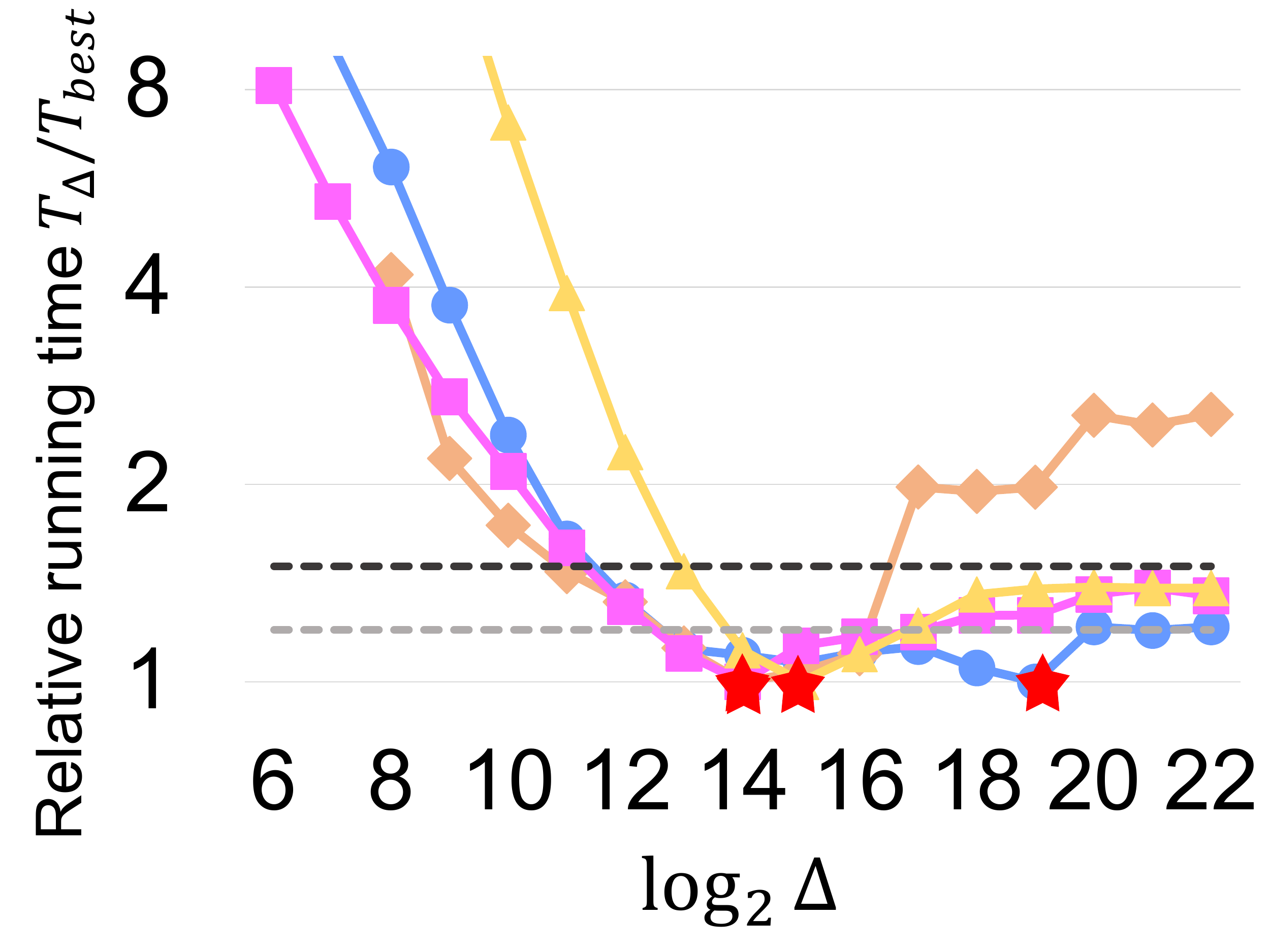} &
    \includegraphics[width=0.5\columnwidth]{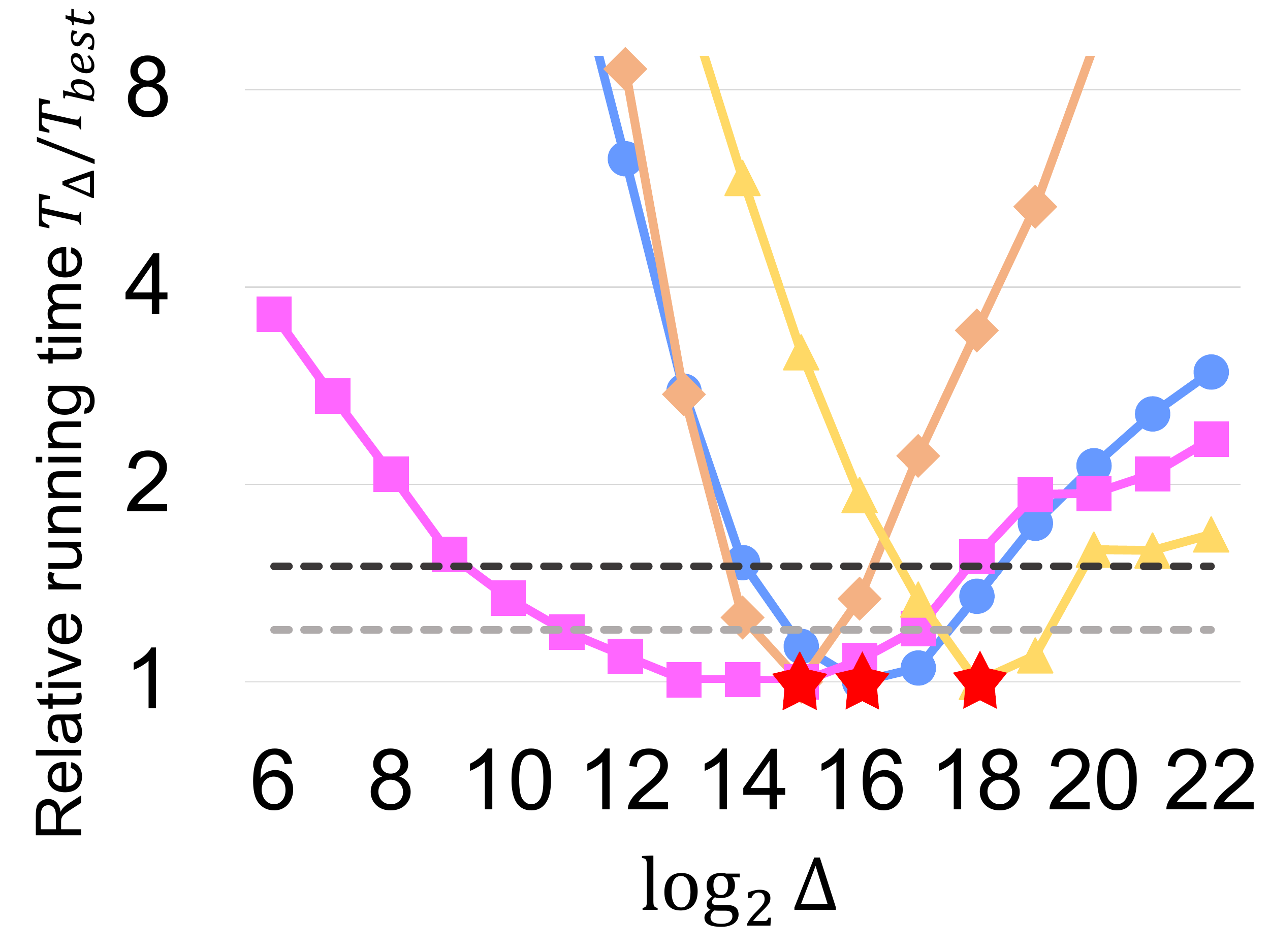}\\
    \small (a). Twitter (TW) &\small (b). Friendster (FT) &\small (c). WebGraph (WB) &\small (d). Road USA (USA) \\
  \end{tabular}
  \caption{\small \textbf{$\Delta$-stepping relative running time with varying $\Delta$},  \mdseries including social networks (Twitter and Friendster), web graph (WebGraph), and road network (Road USA). A complete version with seven graphs is presented in \ifx\fullversion\undefined the full version of this paper \cite{ssspfull}.
\else
\cref{fig:alldelta}.
\fi
  We use 96 cores (192 hyperthreads).
  We vary $\Delta$ and report the running time divided by the best running time across all $\Delta$ values.  The best choice of $\Delta$ for each implementation is marked as a red star.  
  \ifx\fullversion\undefined \else We have the following interesting findings. (1). On the same graph, the best delta can be very different for different implementations (e.g., on Twitter, Julienne's best $\Delta$ is $2^{12}$ times larger than Galois's). The best value of delta for one algorithm can make another implementation much slower (e.g., Galois's best $\Delta$ on Friendster makes all other implementations more than $4\times$ slower). The selection of $\Delta$ for one \deltas implementation does not generalize to other \deltas implementations. (2) For each implementation, the best choices of $\Delta$ vary a lot on different graphs ($2^8$ for GAPBS), although they have similar edge weight range and distribution. (3). On the same graph, the performance is very sensitive to the value of $\Delta$. Usually, $2$--$4\times$ off may lead to a 20\% slowdown, and $4$--$8\times$ off may lead to a 50\% slowdown. A badly-chosen parameter delta can largely affect the performance. (4). For the same implementation, on different graphs, the performance variance changing with $\Delta$ can be unstable. For example, Galois has very stable performance across $\Delta$ values on com-orkut (see \cref{fig:alldelta}) and Twitter, but is very unstable on other graphs. Thus, the stable performance on one graph does not guarantee that we can avoid searching the full parameter space for another graph.\fi
  \label{fig:intro:delta}}
\vspace{-2.5em}
\end{figure*}
\begin{figure}
  \includegraphics[width=\columnwidth]{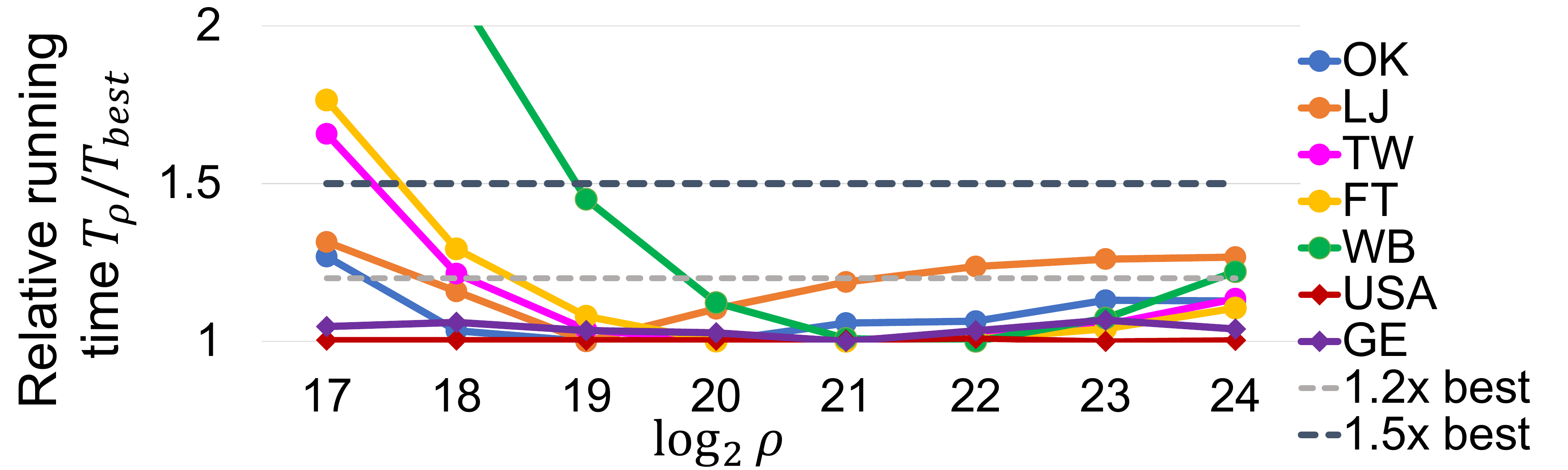}
  \caption{\small\textbf{Relative running time of \ssspours{} with varied $\rho$.} \mdseries We use 96 cores (192 hyperthreads). We vary $\rho$ and tested the average running time on 100 random sources, and divided by the time with the best $\rho$.
  \ifx\fullversion\undefined \else
  We can see that: (1) the trends are pretty consistent among all graphs; (2) when $\rho$ is between $2^{20}$ to $2^{24}$, the performance is almost always within $1.2\times$ the best performance (except for two LJ's data points); (3) the best choices of $\rho$ are within $2^{19}$ to $2^{22}$, although the graph sizes vary by almost three orders of magnitudes; and (4) If we pick $\rho$ to be $2^{21}$, all runtimes are within 10ms off from the best cases (numbers in Table \ref{tab:alltime}). \fi
  \label{fig:rho-time}}
  \vspace{-1.8em}
\end{figure}
\begin{figure}
  \includegraphics[width=\columnwidth]{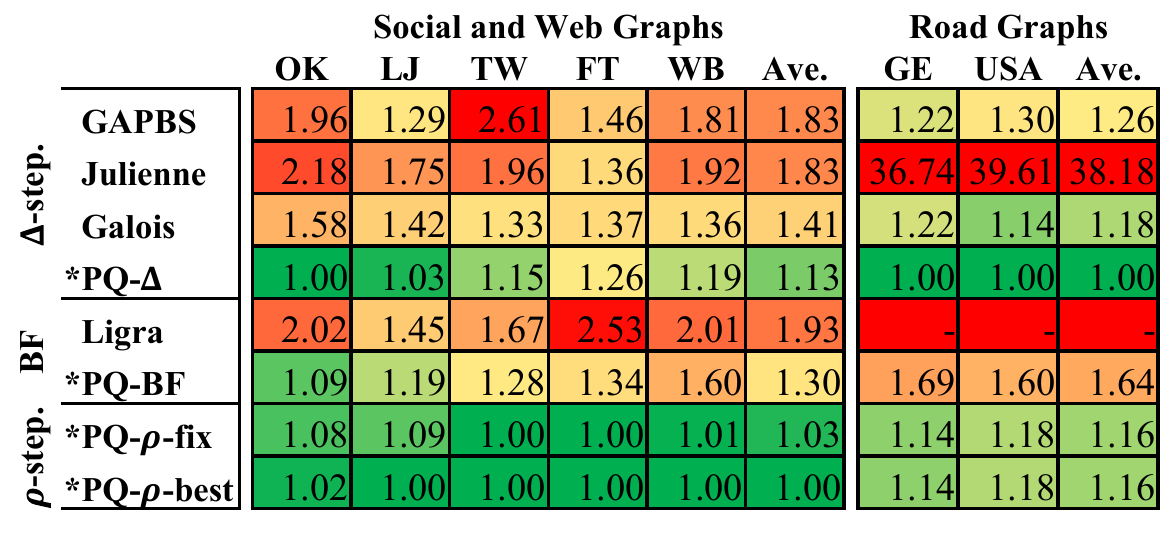}
  \vspace{-2em}
  \caption{\small \label{fig:exp:heat} \textbf{The heat map of the parallel running time relative to the fastest on each graph.} \mdseries We use 96 cores (192 hyperthreads). 
  Each column is a graph instance. ``Ave.'' gives the average numbers over five social/web graphs and two road graphs, respectively.  ``*'' denotes our implementations. \ourrho{}-fix means to use a fixed parameter $\rho$ across all graphs in \ssspours{}, and \ourrho{}-best denotes the best running time using all values of $\rho$.}
  \vspace{-1.8em}
\end{figure}

\hide{

\caption{\textbf{$\Delta$-stepping relative running time with varying $\Delta$ on different graph types},  \mdseries including social networks (Twitter and Friendster), web graph (WebGraph), and road network (Road USA). A complete version with seven graphs is presented in \cref{fig:alldelta}.
  We use 96 cores (192 hyperthreads).
  We vary $\Delta$ and report the running time divided by the best running time across all $\Delta$ values.  The best choice of $\delta$ for each implementation is marked as a red star.  We have the following interesting findings. (1). On the same graph, the best delta can be very different for different implementations (e.g., on Twitter, Julienne's best $\Delta$ is $2^{12}$ times larger than Galois's). The best value of delta for one algorithm can make another implementation much slower (e.g., Galois's best $\Delta$ on Friendster makes all other implementations more than $4\times$ slower). The selection of $\Delta$ for one \deltas implementation does not generalize to other \deltas implementations. (2) For each implementation, the best choices of $\Delta$ vary a lot on different graphs ($2^8$ for GAPBS), although they have similar edge weight range and distribution. (3). On the same graph, the performance is very sensitive to the value of $\Delta$. Usually, $2$--$4\times$ off may lead to a 20\% slowdown, and $4$--$8\times$ off may lead to a 50\% slowdown. A badly-chosen parameter delta can largely affect the performance. (4). For the same implementation, on different graphs, the performance variance changing with $\Delta$ can be unstable. For example, Galois has very stable performance across $\Delta$ values on com-orkut (see \cref{fig:alldelta}) and Twitter, but is very unstable on other graphs. Thus, the stable performance on one graph does not guarantee that we can avoid searching the full parameter space for another graph. We show more analysis in \cref{sec:exp}.\label{fig:intro:delta}}

  \caption{\textbf{Relative running time of \ssspours{} with varied $\rho$.} \mdseries We use 96 cores (192 hyperthreads). We vary $\rho$ and tested the average running time on 100 random sources, and divided by the time with the best $\rho$.  \mdseries  We can see that: (1) the trends are pretty consistent among all graphs; (2) when $\rho$ is between $2^{20}$ to $2^{24}$, the performance is always within $1.2\times$ the best performance (except for two LJ's data points); (3) the best choices of $\rho$ are within $2^{19}$ to $2^{22}$, although the graph sizes vary by almost three orders of magnitudes; and (4) If we pick $\rho$ to be $2^{21}$, all runtimes are within 10ms off from the best cases (numbers in Table \ref{tab:alltime}). \label{fig:rho-time}}

  \caption{ \label{fig:heat} \textbf{The heat map of the parallel running time relative to the fastest on each graph.} \mdseries We use 96 cores (192 hyperthreads). Table \ref{tab:alltime} gives the exact running time.  \ourdelta{}, \ourbf{} and \ourrho{} (noted with $*$) are our implementations.  ``Ave.'' gives the average numbers on social/web graphs (average over the five graphs) and road graphs (average over the two graphs), respectively.  The other columns are graph instances.  \ourrho{} is the new algorithm proposed in this paper. We use \ourrho{}-fix to denote the running time of a fixed parameter $\rho$ across all graphs, and \ourrho{}-best to denote the best running time using all values of $\rho$.}
}



\myparagraph{Our approach}.
The three previous research directions on parallel SSSP (practical implementations, theoretical bounds, parallel priority queues) are mostly studied independently.
We aim to design parallel SSSP algorithms combining the advantages---as simple as those using parallel priority queues, achieving worst-case guarantees that match the existing bounds, and as fast as (or faster than)
\deltas{} in practice.
Our key algorithmic insights include three components: a \emp{stepping algorithm framework}, which abstracts general ideas in some existing parallel SSSP algorithms, an abstract data type (ADT) \emp{\BDPQfull (\BDPQ{})} with efficient implementations, which extracts the semantics of the priority queue needed by stepping algorithms, and two new stepping algorithms \emp{\ssspours{} and \deltastarstepping}, which are efficient both in theory and practice.

Our stepping algorithm framework (\cref{alg:framework}) abstracts the common idea in some existing ``stepping'' algorithms (e.g., \radiuss{} \cite{blelloch2016parallel} and \deltas{}~\cite{meyer2003delta}): in each \emph{step}, the algorithm relaxes all vertices with tentative distances within
a certain threshold, as a batch and in parallel.
The two extreme cases are the two textbook algorithms: Dijkstra's algorithm with batch size 1, and Bellman-Ford algorithm with batch size $n$.
We formalize several algorithms in this framework (\cref{tab:framework}).
Interestingly, some variants of parallel Dijkstra~\cite{bingmann2015bulk,zhou2019practical,alistarh2015spraylist} also use a similar high-level idea.

The proposed ADT \BDPQ{} abstracts the priority queue needed by the stepping algorithms.
It supports \lazyinsert{} to commit an update to the data structure, which can be lazily batched and executed in parallel. It also supports \extract{} to return all \record{s} with keys within a certain threshold in parallel.
The \BDPQ{} is inspired by the recent work on \emph{batch-dynamic data structures}~~\cite{shun2014phase,acar2019parallel,anderson2020work,tseng2019batch,Blelloch2016justjoin,sun2018pam}, where multiple updates or queries are applied to the data structure in batches in parallel. 
One advantage of \BDPQ{} is that we do not explicitly generate the batches, but do it \emph{lazily}.
On top of the ADT, all stepping algorithms can easily use \BDPQ's interface as a black box. 
Underneath it, we provide efficient data structures to support \BDPQ{}. We show a theoretically efficient implementation of \BDPQ{} based on the tournament tree (\cref{sec:tour-tree}). It improves the cost bounds for existing parallel SSSP algorithms such as Radius-Stepping~\cite{blelloch2016parallel} and Shi-Spencer~\cite{Shi1999}.
In practice, we show simple implementations based on flat arrays, which makes our stepping algorithms outperform state-of-the-art software~\cite{zhang2020optimizing,dhulipala2017,nguyen2013lightweight,beamer2015gap}. 

Based on the stepping algorithm framework and \BDPQ{}, we also propose a new parallel SSSP algorithm, referred to as \emph{\SSSPalgo}, which is simple and efficient both in theory and in practice.
The high-level idea of \SSSPalgo is to relax a fixed number of unsettled vertices with small tentative distances in each step.
While a similar (but not the same) idea have been used in some parallel Dijkstra's algorithms~\cite{bingmann2015bulk,zhou2019practical,alistarh2015spraylist}, none of them have interesting bounds or practical performance comparable to \deltas.
In this paper, we formally analyze \SSSPalgo and show work and span bounds.
\ssspours{} achieves a better span bound than \radiuss with a slightly higher work bound (\cref{thm:rho-steps}).
The work bound also applies to directed graphs (the bounds for \radiuss only holds for undirected graphs).
Practically, our \SSSPalgo is 1.3-2.6$\times$ faster than previous implementations on social and web graphs, and is competitive on road graphs (\cref{fig:exp:heat}).

In addition to theoretical guarantees and practical performance,
another advantage of \ssspours{} is that, it needs no preprocessing (e.g., adding shortcuts in \radiuss{}) or time-consuming parameter searching (e.g., finding best $\Delta$ in \deltas{}).
Our experiments (\cref{fig:rho-time}) show that, the best choice of $\rho$ is consistent and insensitive across the real-world graphs we tested.

Inspired by the stepping algorithms and \BDPQ{}, we also show \deltastarstepping{}, a variant of \deltastepping, which is simple, has non-trivial worst-case bounds (\cref{tab:bounds}), and fast in practice (\cref{fig:exp:heat}).

\hide{
\yan{The following paragraph needs revision}
\SSSPalgo achieves comparable or better performance as \deltas{}, and similar work-span tradeoff in theory compared to \radiuss{}, while also avoids expensive preprocessing or parameter searching in \deltas or \radiuss.
Compared to \radiuss, \SSSPalgo has the same span bound and is only off by a little in the work bound, and is more practical.
Compared to \deltas, \SSSPalgo is preprocessing-free, faster in almost all tested graphs, and has better theoretical guarantees on undirected graphs.
}
\myparagraph{Our Contributions.} 
Combining our \BDPQ{} with existing algorithms and our new algorithms, we achieve new bounds and efficient implementations for parallel SSSP.
These results are due to the abstraction of stepping algorithm framework and \BDPQ{}, which greatly simplifies algorithm design, analysis, and implementation.

In theory, we show new bounds for \radiuss{}~\cite{blelloch2016parallel}, Shi-Spencer~\cite{Shi1999}, \deltass{}, and \ssspours{}.
We note that, with no shortcuts or hopsets, it seems unlikely to show $o(n)$ worst-case span (consider a chain).
However, tighter bounds can depend on certain graph parameters, which may exhibit a good property on real-world graphs.
For example, although parallel Bellman-Ford has worst-case span of $\tilde{O}(n)$, a more precise bound is $\tilde{O}(d)$, where $d$ is the shortest-path tree depth.
Indeed, on social networks with small $d$, parallel Bellman-Ford is reasonably fast (Table \ref{tab:alltime}).
To capture this, Blelloch et al.~\cite{blelloch2016parallel} proposed a graph invariant \krhograph{} that indicates how ``parallel'' a graph is.
Intuitively, a graph is a \krhograph{} if every vertex reaches $\rho$ nearest vertices in $k$ hops.
We extend this concept to analyze multiple stepping algorithms.
Our experiments show that the real-world social or web graphs we tested are $(\log n,O(\sqrt{n}))$-graphs, and road graphs we tested are $(\sqrt{n},O(n))$-graphs (\cref{fig:k-rho}).
Under our framework, the stepping algorithms share common subroutines in analyses, such as the extraction lemma (\cref{lem:num-ext}) and the distribution lemma (\cref{lem:distribute}).

In practice, our framework and array-based \BDPQ{} give unified implementations for Bellman-Ford, \deltass{} and \ssspours{}. 
Our implementations achieve the best performance on all graphs (see \cref{fig:exp:heat}).
On the social and web graphs, \ssspours{} is 1.3-2.6$\times$ faster than existing implementations.
On road graphs, our \deltass is consistently the fastest and \ssspours is competitive to previous ones.
This indicates the effectiveness of our framework since all optimizations are easily applicable to all algorithms.
We also provide an in-depth experimental study based on our framework, especially to understand the tradeoff between work and parallelism. We show how different stepping algorithms explore the frontier in steps (\cref{fig:visted-per-step,fig:enqueue}), the parameter space (\cref{fig:intro:delta,fig:rho-time}), and eventually draw interesting conclusions in \cref{sec:exp-sum}.


\hide{

The reason that we can achieve simplicity and both theoretical and practical efficiency for \SSSPalgo is due to the algorithmic framework and the abstraction of \BDPQ{}.
Unlike the parallel priority queue in previous work, \BDPQ is an ADT, not a specific data structure.
Theoretically, we can totally ignore how \BDPQ is implemented, which significantly simplifies the \SSSPalgo algorithm.
Hence, we can focus on the similarities of the algorithms in the stepping algorithm framework, borrow the analysis in the recent \radiuss paper~\cite{blelloch2016parallel} and apply to \SSSPalgo.
The practical efficiency of \SSSPalgo relies on the implementation of \BDPQ, and we can borrow the highly-tuned implementation for \deltas~\cite{dhulipala2017,zhang2020optimizing} and apply to our implementation.
As a proof-of-concept, the \BDPQ data type isolates and bridges the theoretical work and practical optimizations for parallel SSSP.
}


\ifx\fullversion\undefined
Due to page limit, we postpone some analysis and full experimental results to the full version of the paper \cite{ssspfull}.
\else
\fi
We summarize our contributions of this paper as follows.

\begin{itemize}[topsep=1.5pt, partopsep=0pt,leftmargin=*]
\setlength{\itemsep}{0pt}
  \item A stepping algorithm framework, which unifies multiple parallel SSSP algorithms. 
  \item A new ADT \BDPQ{} and two implementations, which are used in our analysis and implementations, respectively.
  \item A new parallel SSSP algorithm \ssspours{}, which is preprocessing-free, simple and efficient both in theory and in practice.
  \item A new variant of \deltastepping (\deltastarstepping), which is simple, with theoretical guarantee, and fast in practice.
  \item New analyses for stepping algorithms based on \krhograph{}, which include parameterized work and span bounds for \ssspours{} (\cref{thm:rho-steps}) and \deltass{} (\cref{thm:deltass}), and improved work bounds for \radiuss{} (\cref{cor:radius}) and Shi-Spencer (\cref{cor:shispencer}). 
  \item Efficient parallel implementations of Bellman-Ford, \deltass{} and \ssspours{}, which outperform existing ones (\cref{tab:alltime}).
  \item In-depth experimental study of parallel SSSP algorithms.
\end{itemize}

\hide{
The contribution is two-fold.
The first contribution is the new abstraction of \emph{batch-dynamic priority queues (BDPQ)} that bridges the different priority queue implementations and supports a variety of parallel SSSP algorithms.
This abstraction and algorithm framework are introduced in Section~\ref{sec:framework}.
Using a theoretically efficient algorithm for BDPQ introduced in Section~\ref{sec:winning-tree}, we can get improved bounds for existing algorithms.
Meanwhile, a practically efficient implementation for BDPQ gives unified and efficient implementations for \deltas and Bellman-Ford.

Based on BDPQ, the second and main contribution of this paper is a new parallel SSSP algorithm \emph{\SSSPalgo}.
\SSSPalgo has several advantages: it is simple, preprocessing-free, practically efficient, and has theoretical guarantees.
\SSSPalgo is very simple on top of BDPQ (which can be treated as a black box).
\SSSPalgo is similar to practically efficient \deltas and theoretically efficient \radiuss, but \SSSPalgo does not require the expensive preprocessing step in \deltas and \radiuss.
Compared to \radiuss, \SSSPalgo has the same span bound and is only off by a little in the work bound, but instead, \SSSPalgo is practical.
Compared to \deltas, \SSSPalgo is preprocessing-free, almost faster in all graph instances, and has better theoretical guarantees on undirected graphs.}

\section{Preliminaries}\label{sec:prelim}

\myparagraph{Computational Model}.
\ifconference
We use the
work-\depth{} model for fork-join parallelism with binary forking to analyze parallel
algorithms~\cite{CLRS,blelloch2020optimal}, which is recently used in many papers on parallel algorithms~\cite{agrawal2014batching,blelloch2010low,BCGRCK08,BG04,Blelloch1998,blelloch1999pipelining,BlellochFiGi11,BST12,Cole17,CRSB13,BBFGGMS16,dinh2016extending,chowdhury2017provably,blelloch2018geometry,dhulipala2020semi,BBFGGMS18,Dhulipala2018,blelloch2020randomized,gu2021parallel}.
\else
This paper uses the work-\depth{} model for fork-join parallelism with binary forking to analyze parallel
algorithms~\cite{CLRS,blelloch2020optimal}, which is used in many recent papers on parallel algorithms~\cite{agrawal2014batching,Acar02,blelloch2010low,BCGRCK08,BG04,Blelloch1998,blelloch1999pipelining,BlellochFiGi11,BST12,Cole17,CRSB13,BBFGGMS16,dinh2016extending,chowdhury2017provably,blelloch2018geometry,dhulipala2020semi,BBFGGMS18,Dhulipala2018,blelloch2020randomized,gu2021parallel}.
\begin{figure*}
\centering
  \includegraphics[width=1.5\columnwidth]{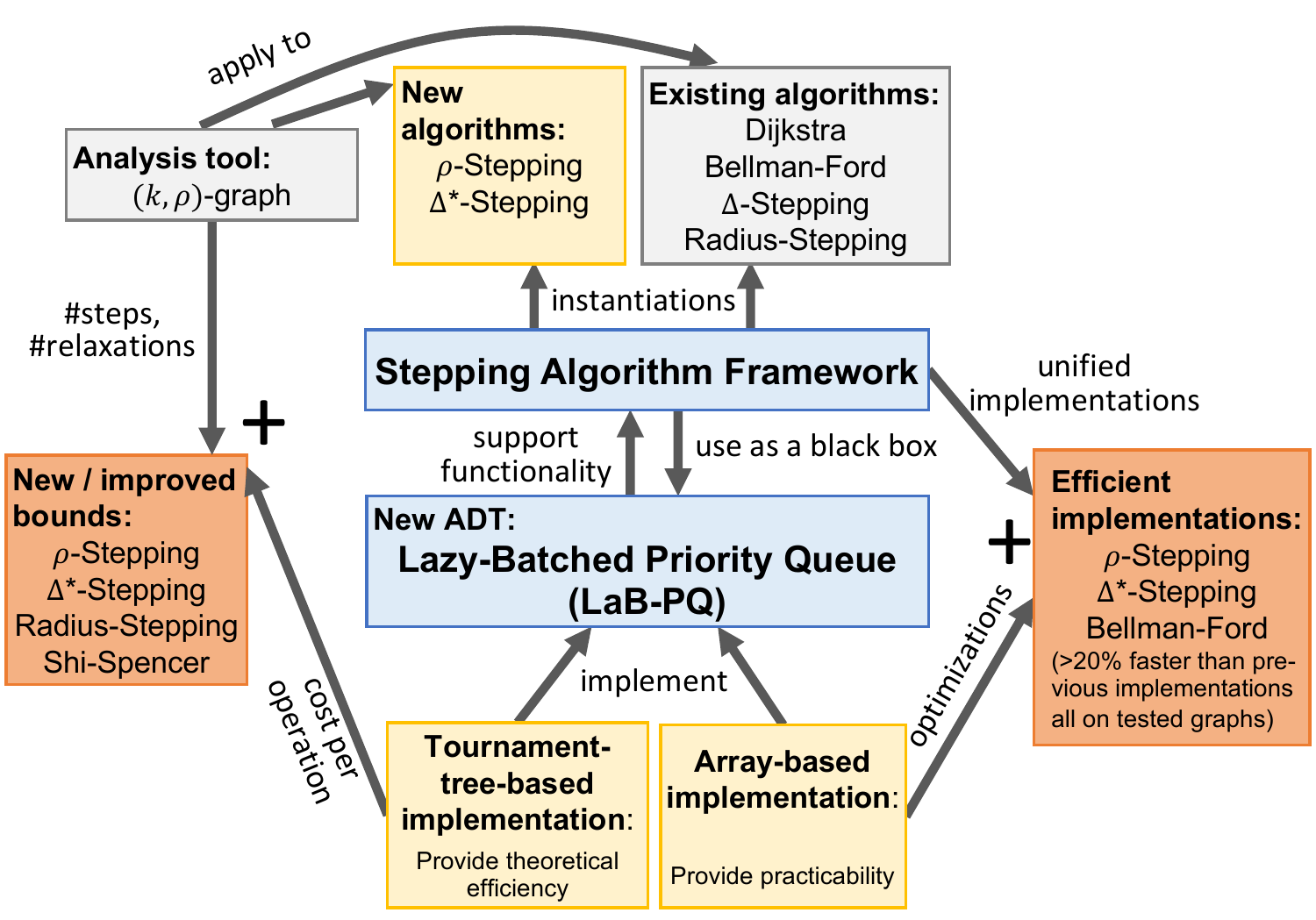}
  \caption{\textbf{An overview of all components in this paper and how they are put together.}  \mdseries The two blue boxes are the abstractions, one for the algorithms and one as an ADT.  The yellow boxes are new algorithms and data structures in this paper.  Grey boxes are existing results we use in this paper.  The two orange boxes are the outcomes of all the techniques, including new work and span bounds for parallel SSSP, and faster implementations as compared to state-of-the-art software. \label{fig:overview}}
  \includegraphics[width=0.8\columnwidth]{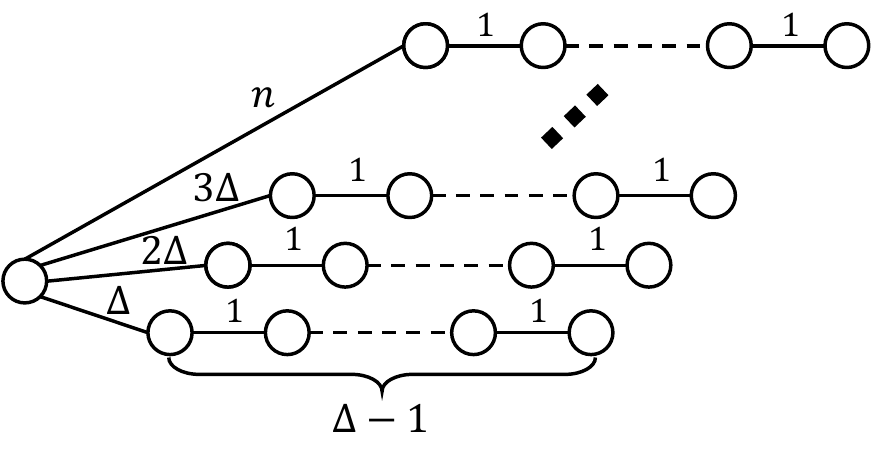}
  \caption{\textbf{An example that incurs $O(n)$ span for \deltas{} on a graph with $\Theta(\Delta)$ shortest path tree depth.} \mdseries On this graph, \deltas{} algorithm needs to run $O(n/\Delta)$ rounds (steps) since the longest distance is $O(n)$. Within each step, we need to run $O(\Delta)$ Bellman-Ford substeps, which will settle the chains respectively.  The \deltass variant proposed in this paper only requires $O(n/\Delta+\Delta)$ rounds (steps) for this graph instance. \label{fig:delta-stepping-span-example}}
  \vspace{-1em}
\end{figure*} 
\fi
We assume a set of \thread{}s that share a common memory.  Each \thread{} supports standard RAM instructions, and a \forkins{}
instruction that forks two new child \thread{}s.
When a \thread{} performs a \forkins{}, the two child \thread{}s all start by running the next instruction, and the original \thread{} is suspended until all children terminate.
A computation starts with a single {root} \thread{} and finishes when that root \thread{} finishes.
An algorithm's \defn{work} is the total number of instructions and
the \defn{span} (depth) is the length of the longest sequence of dependent instructions in the computation.
We can execute the computation using a randomized work-stealing scheduler in practice.
We assume unit-cost atomic operation \WriteMin{}$(p,\mathit{v})$\footnote{a more practical assumption is to charge $O(t)$ work and $O(\log t)$ span when $t$ operations priority update to a memory location.  It does not change the overall bound since forking $t$ parallel tasks requires $\Omega(\log t)$ span, which is already captured.}, which reads the memory location pointed to by $p$, and write value $v$ to it if $v$ is smaller than the current value.
We also use atomic operation \TAS{}$(p)$, which reads and attempts to set the boolean value pointed to by $p$ to \true{}.
It returns \true{} if successful and \false{} otherwise.

\myparagraph{Graph Notations}.
We consider a weighted graph $G=(V, E, w)$. WLOG, we assume $G$ is a connected, simple graph, with minimum edge weight $\min_{e\in E} w(e)=1$, and 
no parallel edges)
We use $L=\max_{e\in E} w(e)$.
For $v \in V$, define $N(v)=\{u\,|\,(v,u)\in E\}$ as the \defn{neighbor set} of $v$.
We use $d(u,v)$ as the shortest-path distance in $G$ between two vertices $u$ and
$v$.  
A \defn{shortest-path tree} rooted at vertex $u$ is a spanning tree $T$ of $G$ such that the path distance in $T$ from $u$ to any other $v\in V$ is $d(u,v)$. 

\myparagraph{\krhograph}. We use the concept of \krhograph{} in \cite{blelloch2016parallel} to analyze stepping algorithms.
\krhograph is a graph invariant highly related to the analysis of parallel SSSP algorithms.
Intuitively, a graph is a \krhograph{} if any vertex can reach its $\rho$ nearest neighbors in $k$ hops.
More formally, we define the \emp{hop distance} $\hat{d}(u,v)$ from a vertex $v$ to $u$ as the number of edges on the shortest (weighted) path from $v$ to $u$ using the fewest edges.
Let $r_{\rho}(v)$ be the $\rho$-th closest distance from $v$, and $\bar{r}_k(v)$ the shortest distance from $v$ to another vertex more than $k$-hops away.

\begin{compactdef}[~\krhograph~\cite{blelloch2016parallel}~]
We say a graph $G=(V,E,w)$ is a \krhograph{} if for all $v\in V$, $r_{\rho}(v)\le \bar{r}_k(v)$.
\end{compactdef}

For a given graph $G=(V,E)$, we denote $k^{G}_\rho$ to be the smallest value for $k$ to make $G$ a \krhograph. With clear context, we omit the superscription. $k_n$ is the shortest-path tree depth.

\myparagraph{Others.}
We use $\log n$ as a short form of $1+\log_2(n+1)$. We say $O(f(n))$ \defn{with high probability} (\whp{}) to indicate $O(cf(n))$ with probability at least $1-n^{-c}$ for $c \geq 1$, where $n$ is the input size.

\section{Frameworks}\label{sec:framework}

\subsection{The \BDPQ{} Abstraction}
An abstract data type \emph{\BDPQfull{}}, or \emp{\BDPQ{}}, denoted as $\PQ$, maintains a universe of \records{} $(\id{},k)$, where $\id{}\in I$ is the unique \emp{identifier} for this \record{} and $k\in K$ is the \emp{key}.
In some applications, each \record{} also has a \emp{value} $v\in V$.
In this paper, if not specified, we assume an empty value type for simplicity.
\ifconference
\else
In many applications, the identifier type $I$ is the natural number set $\N$.
\fi
In all SSSP algorithms in this paper, the identifiers are vertex labels from 1 to $n$.
The total ordering of all keys is determined by a comparison function $<_K:K\times K\mapsto \bool{}$.
A \BDPQ{} $Q\in \PQ$ is associated with a \emp{\mapping{}} $\delta_{Q}:I\mapsto K$, which maps an identifier to its corresponding key (or key-value) that can change dynamically over time.
With clear context, we omit the subscription $Q$, and use $\delta[\id]$ to denote the mapping from $\id$ to key.
In the SSSP algorithms of this paper, this \mapping{} maps each vertex label to its (tentative) distance.
In our implementation, this \mapping{} is passed to \BDPQ{} by a pointer to the tentative distance array. More formally, a \BDPQ{} $\PQ$ is parameterized on the following:
\smallsmallskip

\begin{center}
\noindent {\small\begin{tabular}{@{}ll@{}}
  \hline
  $\boldsymbol{I}$ & Unique identifier type \\
  $\boldsymbol{K}$ & Key type \\
  $\boldsymbol{V}$ & (Optional) Value type \\
  $\boldsymbol{<_K}:K\times K\mapsto \bool$ &  Comparison function on $K$\\
  $\boldsymbol{\delta_Q}:I\mapsto K\times V$ & A mapping from an id to its key (or key-value)\\
  \midrule
\end{tabular}}
\end{center}

A \BDPQ{} maintains a subset of identifiers in the universe.
It can extract \records{} with (relatively) small keys in parallel based on $\delta[\cdot]$.
The interface of the \BDPQ{} includes two functions: \lazyinsert{} and \extract{} (see Table \ref{tab:interface}).
We note that these two functions are sufficient for SSSP application. We discuss more functionalities of \BDPQ{} in
\ifx\fullversion\undefined
the full version of this paper~\cite{ssspfull}.
\else
\cref{sec:fully-dynamic}.
\fi

\boldlazyinsert{}$(\id)$ function commits an update to $Q$ regarding the record with identifier $\id$.
  It ``notifies'' $Q$ that the new key for this \record{} is now in $\delta[\id]$.
  If $\id$ is not in $Q$ yet, \lazyinsert{} inserts it to $Q$.
  Multiple \lazyinsert{} can be executed concurrently.
We note that the change of the \record{} is embodied in the change of $\delta[\id]$, and thus the data structure only needs to know the \record{}'s $\id$ to address the modification.
An important observation is that, we do not have to modify $Q$ immediately, but can execute them \emph{lazily}.
These changes make no difference to any other operations on $Q$ before the next \extract{}.
Compared to the classic ``batch-dynamic'' setting, our interface avoids explicitly generating the batch, which simplifies the algorithm and improves performance.

\boldextract{}$(\theta)$ returns all identifiers in $Q$ with key $\le \theta$, and then deletes them from $Q$.
Note that the result of \extract{} reflects all previous \emph{modifications} to $Q$, including \lazyinsert{} functions and deletions from the previous \extract{}.
It then extracts the corresponding \record{s} based on the latest 
view of $Q$.
An \extract{} function \emph{cannot} be executed concurrently with other functions (\lazyinsert{} or another \extract{}).
This is required for \BDPQ{}'s correctness.
\hide{
Generating the sequence as output does not increase the asymptotical cost of the algorithm, but can be expensive in practice because it increases memory footprint. In our implementations, we use some optimizations to avoid physically generating the sequence (more details in Section \ref{sec:implementation}).}

\myparagraph{Augmenting \BDPQ{}.} In some applications, we need a ``sum'' (the \emph{augmented value} of type $A$) of all \record{s} (keys and possible values) in the \BDPQ{} .
We refer to this as $Q.\collect()$.
This function first map each record in $Q$ to a value of type $A$, and use a binary commutative 
and associative operator~$\oplus$ ($(A,\oplus)$ is a commutative monoid) to compute abstract sum of all \record{s} in $Q$ using $\oplus$.

\begin{table}
  \small
  \begin{tabular}{@{}p{2.5cm}@{  }@{  }p{5.5cm}@{}}
    \hline
    $Q.\lazyinsert(\id)$: \flushright$\PQ \times I \mapsto \Box$ &
    Modify the \record{} with identifier $\id$ in $Q$ to $\delta[\id]$. If $\id\notin Q$, first add it to $Q$.\\
    $s=Q.$\extract{}$(\theta)$:\flushright $\PQ\times K\mapsto \seq$& Return identifiers in $Q$ with keys no more than $\theta$ and delete them from $Q$.\\
    \hline
  \end{tabular}\vspace{-.2in}
  \caption{\small \textbf{Interface of \BDPQ{}.} \label{tab:interface}}\vspace{-.1in}
\end{table}

\subsection{The Stepping-Algorithm Framework}\label{sec:step-algo}

\begin{table*}
  \centering\small
  \vspace{-1em}
  \begin{tabular}{lp{3.5cm}<{}p{3.6cm}<{}cc}
    \toprule
    \bf Algorithm & \boldextcond{} & \boldfinishcond & \bf Work & \bf Span\\
    \midrule
    Dijkstra~\cite{dijkstra1959} & $\displaystyle\theta\gets\min_{v\in Q}(\delta[v])$ & - & $\tilde{O}(m)$ & $\tilde{O}(n)$ \\
    Bellman-Ford~\cite{bellman1958routing,ford1956network} & $\theta\gets+\infty$ & - & $\tilde{O}(k_nm)$ & $\tilde{O}(k_n)$ \\
    \deltas~\cite{meyer2003delta} & $\theta\gets i\Delta$ & if no new $\delta[v]<i\Delta$, $i\gets i+1$  & - & - \\
    \deltass (new)& $\theta\gets i\Delta$ & - & $\tilde{O}\left(k_nm\right)$ & $\tilde{O}\left(\frac{k_n(\Delta+L)}{\Delta}\right)$ \\
    \radiuss~\cite{blelloch2016parallel} & $\displaystyle\theta\gets\min_{v\in Q}(\delta[v]+r_\rho(v))$ & if there exists $\delta[v]<\theta$, do not recompute \extcond & $\tilde{O}(k_\rho m)$ & $\tilde{O}\left(\frac{k_\rho n}{\rho}\cdot \log{L}\right)$ \\
    \SSSPalgo (new) & $\theta\gets$ $\rho$-th smallest $\delta[v]$ in $Q$ & - &  $\tilde{O}(k_n m)$ & $\tilde{O}\left(\frac{k_\rho n}{\rho}\right)$ (undirected) \\
    \bottomrule
  \end{tabular}\vspace{-.15in}
  \caption{\small \textbf{SSSP Algorithms in the stepping algorithm framework}, their \extcond{} and \finishcond{}, and the work and span bounds based on the \BDPQ implementation in \cref{sec:bdpq}. Here $L$ is the longest edge in the graph (assuming the shortest has length 1). $\rho$, $k_\rho$ and $k_n$ are related to $(k,\rho)$-graph defined in \cref{sec:prelim}.   $\tilde{O}()$ omits $\log n$ and lower-order terms for simplicity, and the full bounds are shown in \cref{tab:bounds}. \label{tab:framework}}\vspace{-.15in}
\end{table*}
{
\begin{algorithm}
\caption{ The Stepping Algorithm Framework.\label{alg:framework}}
\fontsize{9pt}{10pt}\selectfont
\SetKwFor{parForEach}{ParallelForEach}{do}{endfor}
\KwIn{A graph $G=(V,E,w)$ and a source node $s$.}
\KwOut{The graph distances $d(\cdot)$ from $s$.}
\DontPrintSemicolon
$\delta[\cdot]\leftarrow +\infty$, associate $\delta$ to a \BDPQ{} $Q$\\
$\delta[s]\leftarrow 0$, $Q.\lazyinsert{}(s)$\\
\While{$|Q|>0$\label{line:while-loop}} {
\parForEach {$u\in Q.\extract(\extcond)$\label{line:extract}} {
    \parForEach{$v\in N(u)$\label{line:innerloop2}} {
    \If {$\WriteMin(\delta[v],\delta[u]+w(u,v))$ \label{line:writemin}} {
      $Q.\lazyinsert{}(v)$
    }
  }
}
Execute \finishcond
}
\Return {$\delta[\cdot]$}\\[.15em]
\end{algorithm}
}

On top of the \BDPQ interface, we also propose a simple \emp{stepping algorithm framework},
in order to reveal the internal connection of the existing SSSP algorithms.
Recall the two sequential textbook algorithms, Dijkstra's algorithm~\cite{dijkstra1959} and Bellman-Ford algorithm~\cite{bellman1958routing,ford1956network}.
Dijkstra only visits one vertex at a time and thus is work-efficient, but it is inherently sequential.
Bellman-Ford visits all vertices in a step so it requires redundant work, but can be easily parallelized.
Many parallel SSSP algorithms integrate the idea in both algorithms, and visit a subset of unsettled vertices close to the source node.
Hence, they require less work than Bellman-Ford, and have better parallelism than Dijkstra.
These algorithms are referred to as stepping algorithms (e.g., \deltas and \radiuss) since they process a batch of vertices in a step. This is captured by \BDPQ{} in the stepping algorithm framework.

We present this stepping algorithm framework in \cref{alg:framework}.
This framework requires two user-defined functions, \extcond and \finishcond.
Many SSSP algorithms can be instantiated by plugging in different \extcond and \finishcond functions (see  \cref{tab:framework}).
\cref{alg:framework} starts with associate the distance array $\delta$ to a \BDPQ{} $Q$.
It then runs in \emp{steps}. 
In each step, we process vertices with distances within a threshold~$\theta$, which is computed by \extcond and used as the parameter of \extract{}.
The extracted vertices will relax their neighbors using \WriteMin (\cref{line:writemin}). If successful, we call \lazyinsert{} on the corresponding neighbor. Some algorithms (e.g., \deltas) contain substeps in each step.
This is captured by \finishcond{}---if the condition is not true, the threshold $\theta$ will not be recomputed.
We say a vertex $v$ is \defn{settled} the last time it is extracted from the \BDPQ{} and relaxes all its neighbors (and thus its distance does not change thereafter).
We define the \defn{frontier} as all vertices in $Q$, which are those waiting to be explored to relax their neighbors.


The stepping algorithm framework applies to various algorithms as shown in \cref{tab:framework}.
We now briefly introduce them.

\myparagraph{Dijkstra and Bellman-Ford}.
Dijkstra's algorithm visits and settles the vertex with the closest distance in the frontier.
By setting $\theta$ as $\min_{v\in Q}(\delta[v])$, \cref{alg:framework} works the same as Dijkstra's algorithm, with the exception that multiple vertices with the same distances will be processed together, which does not affect correctness and efficiency.
Finding the closest vertex can be supported using $\collect()$ and taking $\min$ on keys.
Bellman-Ford visits all vertices in the frontier in each step, so we set $\theta$ as infinity, and in each step \cref{alg:framework} relaxes the neighbors of all vertices in~$Q$.

\myparagraph{\deltas}.
As a hybrid of Dijkstra and Bellman-Ford,
\deltas{} visits and settles all the vertices with shortest-path distances between $i\Delta$ and $(i+1)\Delta$ in step $i$.
Within each step, the algorithm runs Bellman-Ford as substeps.
Hence we can set $\theta$ to $i\Delta$, and use \finishcond to check if any newly relaxed vertex still has distance within $i\Delta$.
If not, we increment $i$ and proceed to the next step.

\myparagraph{\deltass}. We note that \finishcond is not necessary for \deltas, just like other stepping algorithms.
In fact, all existing implementations~\cite{zhang2020optimizing,nguyen2013lightweight,beamer2015gap,dhulipala2017} relaxed \finishcond in different ways. In this paper, we show that removing \finishcond in \deltas (referred to as \emp{\deltass}) can lead to better bounds (\cref{thm:deltass}) and good practical performance (\cref{sec:exp}).

\myparagraph{\radiuss}.
In \radiuss, we precompute $r_\rho(v)$, the distance from each vertex $v$ to the $\rho$-th closest vertex, for all vertices.
Then in each step, \radiuss sets the threshold $\theta$ as $\min_{v\in Q}(\delta[v]+r_\rho(v))$, and then uses Bellman-Ford as substeps to compute the distances for vertices no more than the threshold.
\finishcond is needed by the theoretical analysis, which bounds the number of total substeps to be $O({(k_\rho n/\rho)}\cdot \log{\rho L})$.

To implement \radiuss in our framework, we need an augmented \BDPQ.
We set $r_\rho(u)$ of a vertex $u$ as the value of each \record{}.
We map each record to $k+v$ for a \record{} with key $k$ (distance) and value $v$ (vertex radius), and set the operator $\oplus$ as $\min$.
The threshold in \extract is $\theta=\min_{v\in Q}(\delta[v]+r_\rho(v))$, computed by $Q.\collect()$.
In \cref{sec:bdpq}, we show that maintaining the augmented values does not affect the asymptotical cost bounds.

\myparagraph{\SSSPalgo}.
In this paper, we propose a new algorithm \SSSPalgo{} in the stepping algorithm framework.
\ssspours{} extracts the $\rho$ nearest vertices in the frontier, and relaxes their neighbors.
The threshold $\theta$ is the $\rho$-th smallest element in $Q$.
We overload the notation of $\rho$ from \radiuss{} because they share high-level similarities in the theoretical analysis.
The only step for \SSSPalgo in addition to the stepping algorithm framework is finding the $\rho$-th closest distance among all vertices in the frontier (the \extcond).
In our implementation, we simply use a sampling scheme that randomly pick $s=O(n/\rho+\log n)$ elements, sort them and pick the $(\rho s/n)$-th one.
More details on how to find the $\rho$-th element is in
\ifx\fullversion\undefined
the full paper,
\else
\cref{app:cpam},
\fi
and an efficient implementation is in \cref{sec:impl-theta-sampling}.

Picking the a subset of vertices with closest distances and relaxing their neighbors is not a groundbreaking idea, and has been used in the literature (e.g.,~\cite{bingmann2015bulk,zhou2019practical,alistarh2015spraylist}).
However, the extracting process in previous work is either sequential or concurrent, so none of the existing algorithms support non-trivial work and span bounds, or practical efficiency as compared to \deltas.
In this paper, we argue that this simple solution can achieve both theoretical and practical efficiency.
Theoretically, we show that:
\begin{theorem}[Cost for \SSSPalgo]
\label{thm:rho-steps}
  On a $(k_{\rho},\rho)$-graph $G$, the \SSSPalgo{} algorithm has in $O\left({k_{n}m}\log\frac{n^2}{m\rho}\right)$ work and $O\left(\frac{k_n n\log n}{\rho}\right)$ span.
  If $G$ is undirected, the span is $O\left(\frac{k_\rho n\log n}{\rho}\right)$.
\end{theorem}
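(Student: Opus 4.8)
The plan is to bound the cost of \cref{alg:framework} under the \SSSPalgo{} instantiation by separately controlling (i) the total number of steps $T$ of the while loop, and (ii) the per-step cost charged against the tournament-tree \BDPQ{} of \cref{sec:bdpq}. Correctness itself is not the issue: since the algorithm only performs monotone \WriteMin{} relaxations and never discards a vertex permanently while its key can still decrease, it is a valid label-correcting algorithm and every vertex is eventually settled; so the entire theorem is a cost statement. The two workhorses will be the extraction lemma (\cref{lem:num-ext}), which I use to bound how often a single vertex is extracted, and the distribution lemma (\cref{lem:distribute}), which I use to bound $T$.

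First I would bound the work spent on edge relaxations. The key is to use the extraction lemma to show that every vertex $v$ is extracted $O(k_n)$ times: between two consecutive extractions of $v$ its tentative distance $\delta[v]$ must strictly decrease through a relaxation, and because \SSSPalgo{} always extracts the $\rho$ smallest keys (hence in roughly distance-sorted order), the number of such decreases is governed by the hop length of $v$'s shortest path, which is at most $k_n$. Summing over all vertices, the number of \WriteMin{} relaxation attempts is $\sum_v (\mathrm{ext}_v)\cdot\deg(v)=O\bigl(k_n\sum_v\deg(v)\bigr)=O(k_n m)$, and the number of successful relaxations, i.e.\ \lazyinsert{} calls, is $R=O(k_n m)$ over the whole execution.

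Next I would account for the \BDPQ{} overhead, which is exactly where the $\log(n^2/(m\rho))$ factor comes from. The $R$ \lazyinsert{}s are committed lazily and processed in $T$ batches, one per \extract{}; a batch of size $b$ on a tournament tree over $n$ leaves costs $O(b\log(1+n/b))$ work. Writing $b_i$ for the batch size of step $i$ with $\sum_i b_i=R$, the map $b\mapsto b\log(1+n/b)$ is concave, so Jensen's inequality gives $\sum_i b_i\log(1+n/b_i)\le R\log(1+nT/R)$. Plugging in $R=O(k_n m)$ and the step bound $T=O(k_n n/\rho)$ (established below) makes $nT/R=\Theta(n^2/(m\rho))$, giving work $O\bigl(k_n m\log(n^2/(m\rho))\bigr)$. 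The \extract{}/select cost $\sum_i O(\rho\log(1+n/\rho))=O(k_n n\log(1+n/\rho))$ and the cost of sampling the $\rho$-th smallest key are both dominated by this term since $m\ge n$, so the stated work bound follows.

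The main obstacle is bounding the number of steps $T$, and in particular sharpening it from $k_n$ to $k_\rho$ in the undirected case. Here I would invoke the distribution lemma: since each step extracts the $\rho$ smallest keys, the threshold $\theta$ advances so that a window of steps settles a contiguous distance band, and the $(k_\rho,\rho)$-graph property bounds how many steps are ``wasted'' re-extracting a vertex before it reaches its final distance. This yields $T=O(k_n n/\rho)$ in general; since each step costs $O(\log n)$ span (a tournament-tree \extract{}, forked \WriteMin{} relaxations subject to the $O(\log t)$ contention charge from the preliminaries, and sampling the $\rho$-th key), the span is $O(T\log n)=O(k_n n\log n/\rho)$. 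For undirected graphs the extra ingredient is the symmetry of the radius---$u$ lying within distance $r_\rho(v)$ of $v$ forces $v$ to lie within $r_\rho$ of $u$---so the $(k_\rho,\rho)$ property can be applied around the current settled frontier rather than along the full shortest-path tree, replacing $k_n$ by $k_\rho$ and giving $T=O(k_\rho n/\rho)$ and span $O(k_\rho n\log n/\rho)$. I expect the two delicate points to be (a) rigorously justifying the $O(k_n)$ per-vertex extraction count under the $\rho$-smallest extraction rule, and (b) the directed/undirected gap, where only the undirected symmetric radius lets the progress argument charge against $k_\rho$ rather than the full tree depth $k_n$.
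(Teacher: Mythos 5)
Your overall architecture matches the paper's: the work bound is obtained exactly as in the paper, by combining the per-vertex extraction bound (\cref{lem:num-ext}), the tournament-tree costs (\cref{thm:main-tour-tree}), and the concavity/Jensen argument that is precisely the paper's distribution lemma (\cref{lem:distribute}), with $U=O(k_nm)$ updates over $S=O(k_nn/\rho)$ steps giving $O\left(k_nm\log\frac{n^2}{m\rho}\right)$. However, two of your supporting arguments have genuine gaps. First, your justification of the $O(k_n)$ extraction count is not valid as stated: the number of times $\delta[v]$ can decrease is \emph{not} bounded by the hop length of $v$'s shortest path (a vertex can be relaxed by many different neighbors, far more than $k_n$ times), so ``decreases governed by hop length'' does not bound extractions. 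The paper's argument (\cref{lem:num-ext}) is different: whenever $v$ is extracted, the \emph{earliest unsettled vertex} $v_i$ on $v$'s fewest-hop shortest path must also be extracted and settled in that same step, because $v_{i-1}$ is already settled and has relaxed $v_i$, and $d(s,v_i)\le d(s,v)$; since that path has at most $k_n$ hops, $v$ is extracted at most $k_n$ times. Relatedly, your step count $T=O(k_nn/\rho)$ needs the full-extract/partial-extract case split of \cref{thm:rho-steps-general} (steps with $|Q|<\rho$ do not extract $\rho$ vertices and must be counted separately, at most $k_n$ of them); this counting is what bounds $T$, not the distribution lemma, which only balances work across steps.

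The more serious gap is the undirected span bound, which is the hardest part of the theorem. Your key claim --- that in an undirected graph, $u$ lying within distance $r_\rho(v)$ of $v$ forces $v$ to lie within $r_\rho(u)$ of $u$ --- is false: undirectedness gives $d(u,v)=d(v,u)$, but the radii $r_\rho(u)$ and $r_\rho(v)$ can differ arbitrarily, so $u\in\mathcal{N}_\rho(v)$ does not imply $v\in\mathcal{N}_\rho(u)$. The paper's proof (\cref{thm:rho-steps-un}) is a substantially different induction: after $(2k_\rho+3)t$ steps the $t\rho$ closest vertices to $s$ are settled. For a vertex $v$ in the next band, one takes the fewest-hop shortest path, lets $v_i$ be the last path vertex in the settled set $\mathcal{T}_{t\rho}$, and places a pivot $v_j$ at $j=i+k+2$; then one proves (i) $v_{i+1}\notin\mathcal{N}_\rho(v_j)$ (\cref{lem:nonn1}), (ii) $\mathcal{N}_\rho(v_j)\cap\mathcal{T}_{t\rho}=\varnothing$ (\cref{lem:nonn2}, which is where the symmetry $d(u,v_j)=d(v_j,u)$ is actually used, via a shortest-path contradiction through $v_{i+1}$), and (iii) every vertex of $\mathcal{T}_{(t+1)\rho}\setminus\mathcal{T}_{t\rho}$ is either $v_{i+1}$ or within $k$ hops of $v_j$, since otherwise $|\mathcal{T}_{(t+1)\rho}\setminus\mathcal{T}_{t\rho}|>\rho$. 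This bounds every band vertex to within $2k_\rho+2$ hops of the settled region, and a step-by-step settling argument (\cref{lem:updaterho}) converts hops into steps. None of this is recoverable from the symmetry claim you propose, so as written your proof establishes only the directed bounds, not the $O\left(\frac{k_\rho n\log n}{\rho}\right)$ span for undirected graphs.
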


We will first show implementations of \BDPQ and the cost, and then formally prove this result in \cref{sec:discussion}.
\ssspours{} also has good practical performance, which is shown in \cref{sec:exp}.

\section{\BDPQ Implementation} \label{sec:bdpq}

We now discuss how to efficiently support \BDPQ{} in \cref{alg:framework}.
We present two data structures for \BDPQ with the goal of theoretical and practical efficiency, respectively.
The obliviousness for data structures from the algorithm's perspective is an advantage of the \BDPQ ADT.

In our analysis, 
we define a \emp{batch of modifications} as all \lazyinsert{} operations between two invocations of \extract{} functions.
The \emp{modification work} on a batch $B$ is all work paid to \lazyinsert{} all \record{s} in $B$, as well as any later work (done by a later \extract{}) to actually apply the updates.
We define a \emp{batch of extraction} as all \record{s} returned by an \extract{} function.
The \emp{extraction work} on a batch $B$ is all work paid to output the batch from the \extract{} function, as well as any later work (done by the next \extract{}) to actually remove them from $Q$.

\subsection{Related Work}

Early PRAM and BSP algorithms had explored parallel priority queues in a variety of approaches~\cite{brodal1998parallel,chen1994fast,das1996optimal,pinotti1991parallel,ranade1994parallelism,crupi1996parallel,baumker1996realistic}, and heavily rely on synchronization-based techniques such as pipelining.
These algorithms do not have better bounds than recent batch-dynamic search trees~\cite{Blelloch2016justjoin,sun2018pam,blelloch2020optimal,sun2019implementing,sun2019parallel} when mapping to the fork-join model.
Other previous papers considered the concurrent, external-memory, and other settings~\cite{alistarh2015spraylist,sundell2005fast,linden2013skiplist,shavit2000skiplist,liu2012lock,henzinger2013quantitative,zhou2019practical,calciu2014adaptive,bingmann2015bulk,sanders2019sequential,sanders1998randomized,sanders2000fast}.
These data structures also do not have better bounds than batch-dynamic search trees since they do not focus on optimizing work or span.
However, existing batch-dynamic search trees or other data structures (e.g., skiplists) maintaining the total ordering of the \record{s}, incur an $\Omega(\log(n))$ work lower bound per \record{} update (more details are in the full paper).
Our key observation is that maintaining total ordering, which incurs overhead both theoretically and practically, is not necessary for a parallel priority queue.

To the best of our knowledge, the only parallel data structure that has similar bounds to our new data structure is the batch-dynamic binary heap~\cite{wang2020parallel}.
However, it has a few disadvantages: it does not support efficient batch-extract, is very complicated (no implementation available), and the span is suboptimal ($O(\log^2 n)$ in the binary fork-join model).
Our new tournament-tree based \BDPQ{} supports full features in the \BDPQ, has $O(\log n)$ span, and is arguably much simpler.

\subsection{Tournament-Tree-Based Implementation}\label{sec:tour-tree}

We start with introducing the tournament tree (aka.\ winner tree).
It is a complete binary tree with $n$ external nodes (leaves) and $n-1$ interior nodes.
A \tourt stores the \record{s} in the leaves.
In our use case, we only need to store the \record{} $\id$ in the leaves using the \BDPQ interface.
Each interior node stores $k\in K$ ($K$ is in key type for the records) that takes the smaller key (defined by $<_K$) from its children.
\cref{fig:tour-tree}(a) illustrates a \tourt when keys are integers and $<_K$ is $<_{\Z}$.

\begin{figure}
  \centering
\vspace{-.5em}
\includegraphics[width=0.8\columnwidth]{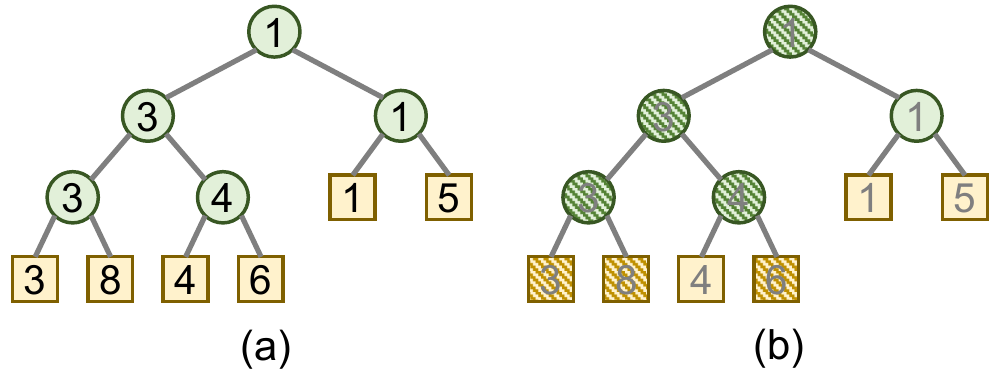}
  \caption{\small \textbf{A tournament tree.}  Square leaf nodes store the records and round interior nodes keep the smallest key in their subtrees. (a) is a \tourt{} containing 6 records $3,8,4,6,1$ and $5$. (b) shows an update on a batch of $3,8$ and $6$. The shaded nodes are marked as \emph{renewed}.\label{fig:tour-tree}}
  \vspace{-2em}
\end{figure}

We now discuss how to use a \tourt to implement \BDPQ{}.
We will use $t.\wtleft$, $t.\wtright$ and $t.\wtparent$ to denote the left child, right child and parent of a node~$t$.
For simplicity, we assume the universe of the records has a fixed size $n$ (for SSSP $n=|V|$), and the \tourt has $n$ leaf nodes each with a boolean flag $\mb{inQ}$ indicating if this record is in (has been inserted to) the \BDPQ $Q$.
We note that this is sufficient for the SSSP algorithms.
The dynamic version (where the size of the \tourt{} changes with the size of \BDPQ{}) will be described in the full paper.


\begin{algorithm}
\caption{The Tournament-tree based \BDPQ.\label{alg:tour-tree}}
\fontsize{9pt}{10pt}\selectfont
\SetKw{KwMaintain}{Maintains}
\KwMaintain{A \tourt $T$ with $n$ leaf nodes each corresponding to a record.}
    \vspace{0.5em}
\DontPrintSemicolon
\SetKwBlock{InPar}{In Parallel:}{}
\SetKwProg{myfunc}{Function}{}{}

\myfunc{\upshape $\addmark($record id $\id$, boolean flag $\mb{newflag})$} {
  Let $t$ be the tree leaf corresponding to \mb{id}\\
  $t.\inQ\gets \mb{newflag}$\\
  \While {\upshape $t\ne T.root$ and $\TAS(t.\wtparent.\updated)$} {
    $t\gets t.\wtparent$
  }
}
\codeskip{}
\myfunc{\upshape $\wtupdate($node $t) \to k\in K$\label{line:update-start}} {
  \If {$t$ is leaf} {
    \lIf{$t.\mb{inQ}$} {\Return{$\delta[t.id]$}}
    \lElse {\Return{$+\infty$}
  }
  }
  \lIf {$t.\updated=0$} {\Return{$t.k$}}
  $t.\updated \gets 0$\\
  \InPar {
    $\mb{leftKey}\gets \wtupdate(t.\wtleft)$\\
    $\mb{rightKey}\gets \wtupdate(t.\wtright)$
  }
  \Return {$t.k \gets\min(\mb{leftKey}, \mb{rightKey}$)}\label{line:update-end}
}
\codeskip{}
\myfunc{\upshape $\traverse(\text{threshold }\theta\text{, node }t) \to \mb{seq}$\label{line:traverse-start}} {
  \If {$t$ is a leaf} {
    \If {$(\delta[t.\id] \le \theta)$}{
    $\addmark(t.\id, 0)$\label{line:delete} \tcp*{Marked as not in Q}
    \Return{$\{t.\mb{id}\}$}}
    \lElse{\Return{$\{\}$}}
  }
  \lIf {$\theta<t.k$} {\Return{$\{\}$}\tcp*[f]{empty seq}}
  \InPar {
    $\mb{leftseq}\gets\traverse(\theta,t.\wtleft)$\\
    $\mb{rightseq}\gets\traverse(\theta,t.\wtright)$
  }
  \Return $\mb{leftseq}+\mb{rightseq}$\label{line:traverse-end}
}
\codeskip{}
\myfunc{\upshape $\extract(\text{threshold }\theta$)} {
  $\wtupdate(T.\mb{root})$\\
  \Return $\traverse(\theta,T.\mb{root})$
}
\codeskip{}
\myfunc{\upshape $\lazyinsert(\id$)} {
  $\addmark(\id{},1)$\\
}
\end{algorithm}

A \tourt $T$ on $n$ records contains $2n-1$ nodes in total. The first $n-1$ nodes are interior nodes.
For an interior node $t$, we use $t.k$ to denote the key stored in node $t$.
To support the \BDPQ{} interface, each interior node contains a bit flag $\updated$ indicating if any key in its subtree has been modified after the last update of $t.k$.
This flag is initially set to 0 (\false).

Constructing such a tree with given initial values simply takes linear work and $O(\log n)$ span using divide-and-conquer: construct both subtrees recursively in parallel, and update the root's key based on the two children's keys.

We next present the implementation of \BDPQ's interface using \tourt{}. Due to page limit, we only show the pseudocode (\cref{alg:tour-tree}) and a high-level overview here.
The analysis are given in the full paper. 
We first introduce a helper function \addmark{}$(\id{}, \mathtext{newflag})$.

\addmarkbold{}$(\id{}, \mathtext{newflag})$ first sets the record $\id$'s $\mathtext{inQ}$ flag to be $\mathtext{newflag}$---0 means the record should be deleted, and 1 means inserted.
Whichever value $\mathtext{newflag}$ is, this means the \record{} of $\id$ has been updated.
Then, the algorithm marks the $\updated$ flags of the nodes on the tree path from the updated leaf to the root.
This process is executed using \TAS{}.
If the \TAS{} fails, we know that another \addmark{} has marked the rest of the path, so the current \addmark{} terminates immediately. 
An example is shown in \cref{fig:tour-tree}(b).
Updating the nodes' keys is postponed to the next \extract{} function.

\para{\upshape \lazyinsertbold{}.}
The \lazyinsert{} algorithm simply calls \addmark{}$(\id{}, 1)$.

\para{\upshape \boldextract{}.} \extract{} first uses a function \wtupdate{} to update the keys for all nodes with $\updated$ flag as 1.
It then calls \traverse{} to output all \record{s} with keys no more than $\theta$.
Those output keys are also marked as deleted from $T$ (\cref{line:delete}).

The $\wtupdate{}(t)$ function recursively restores the keys in the interior nodes using a divide-and-conquer approach (\cref{line:update-start}--\ref{line:update-end}), and returns the key at the current node $t$.
The return value of a leaf node is either the record's key or infinity, depending on the $\inQ$ flag.
For an interior node, \wtupdate{} will update the key to be the smaller one of its two children and return this key.
After all interior tree nodes have been updated, we use \traverse{} to acquire all records with keys no more than $\theta$.
This step can be parallelized similarly using divide-and-conquer (\cref{line:traverse-start}--\ref{line:traverse-end}): we can traverse the left and right subtrees respectively and concatenate the two results. A subtree is skipped when the key at the subtree root (minimum key in the subtree) is larger than $\theta$.
Note that if we want the output sequence in a consecutive array, we can traverse for two rounds---the first round computes the number of extracted records, and the second round writes them to the corresponding slots. 

We can implement \collect similarly.
We keep a collective status $t.a\in A$ ($A$ is the augmented value type) for each interior node $t$, and it is updated in the \mf{Update} function in \cref{line:update-start}--\ref{line:update-end} similar to the update for $k$.
We do not need to update $t.a$ for node $t$ if the subtree rooted at $t$ remains unchanged, which is captured by $t.\updated$. 

\begin{theorem}\label{thm:main-tour-tree}
  Consider a \tourt{} on a universe of $n$ records, implemented with algorithms in \cref{alg:tour-tree}.
  The modification work on a size-$b$ batch is $O\left(b\log(n/b)\right)$.
  The extraction work on a size-$b$ batch is $O\left(b\log(n/b)\right)$.
  The span of \extract{} and \lazyinsert{} is $O(\log n)$.
\end{theorem}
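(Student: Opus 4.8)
The plan is to reduce all three bounds to a single combinatorial fact about complete binary trees and then verify that each subroutine of \cref{alg:tour-tree} only touches the nodes that fact accounts for. First I would establish the key lemma: in a complete binary tree with $n$ leaves, the union of the root-to-leaf paths of any $b$ leaves has size $O(b\log(n/b))$. I would prove it by summing over the $\log n$ levels: level $i$ (counting from the root) contains at most $2^i$ nodes and is met by at most $b$ of the paths, hence contributes $\min(2^i,b)$ ancestors; splitting the sum at $i^\ast=\log_2 b$ gives $\sum_{i\le i^\ast}2^i=O(b)$ from the top and $b\log(n/b)$ from the remaining $\log(n/b)$ levels. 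This lemma is the crux, and every work bound is an instance of it.

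For the \textbf{modification work} I would split into the \lazyinsert{} side and the later ``apply'' work. Each \lazyinsert{} calls \addmark{}, which sets the leaf's $\inQ$ flag and then walks toward the root, \TAS{}-ing each $\updated$ flag and halting at the first flag already set. The amortization is that every successful \TAS{} sets a distinct node's flag exactly once, so over a batch of $b$ insertions the number of successful \TAS{}s equals the number of distinct marked ancestors, which is $O(b\log(n/b))$ by the lemma; each \addmark{} contributes at most one failed \TAS{}, adding only $O(b)$. A structural claim I would prove next is that the marked set is \emph{ancestor-closed}: whenever a walk halts at an already-marked node $u$, some earlier walk had marked $u$ and continued upward, so by induction all ancestors of $u$ are marked. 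Consequently the \wtupdate{} pass run by the next \extract{} recurses only into marked nodes and their immediate children, visiting $O(b\log(n/b))$ nodes, which bounds the apply work.

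For the \textbf{extraction work} I would analyze \traverse{} and the deletions it triggers. \traverse{} prunes any subtree whose root key exceeds $\theta$; since after \wtupdate{} a node's key is the minimum over its in-queue leaves, a node survives pruning exactly when it is an ancestor of some extracted leaf. Thus the visited interior nodes are precisely the ancestors of the $b$ extracted leaves, $O(b\log(n/b))$ of them, and charging each its (at most two) children keeps the count $O(b\log(n/b))$. Each extracted leaf then issues \addmark{}$(\cdot,0)$, whose marking cost and whose later \wtupdate{} cost are again bounded by the same ancestor count. To output the result within the stated work I would invoke the two-round scheme (count, compute offsets, then write) so that \texttt{leftseq}$+$\texttt{rightseq} is never materialized.

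The \textbf{span} bounds are the easiest part and I would handle them last. A single \lazyinsert{} performs a sequential root-ward walk of length at most the tree height $O(\log n)$; forking $b$ walks in parallel adds only $O(\log b)$, and contending \TAS{}s cause the losers to halt early, so the span stays $O(\log n)$. Both \wtupdate{} and \traverse{} are divide-and-conquer routines that fork on the two children and return immediately on unmarked or pruned subtrees, so their longest dependence chain is a single root-to-leaf path of length $O(\log n)$, and the count/write phases of the two-round output are themselves $O(\log n)$-span tree walks. The main obstacle I anticipate is not any single bound but making the amortization tight: naively charging each marking walk its full $O(\log n)$ length yields only $O(b\log n)$, so the argument must combine the early-termination of \TAS{} with the ancestor-union lemma to recover $O(b\log(n/b))$, and must split the shared \wtupdate{} pass between the current insertion batch and the previous deletion batch, using subadditivity of the ancestor-union to bound $O((b_1+b_2)\log(n/(b_1+b_2)))$ by $O(b_1\log(n/b_1))+O(b_2\log(n/b_2))$.
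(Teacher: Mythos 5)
Your proposal is correct and follows essentially the same route as the paper's proof: both reduce all three work bounds to the ancestor-union fact $|S(X)|=O\left(|X|\log(n/|X|)\right)$ for complete binary trees, bound \addmark{} by the once-per-node success of \TAS{} plus $O(b)$ failed attempts, bound \wtupdate{} and \traverse{} by noting they recurse only into marked (resp.\ unpruned) nodes and their children, and bound the span by the tree height. Your additions---the explicit level-by-level proof of the ancestor-union lemma (the paper cites prior work for it), the ancestor-closedness of the marked set, and the subadditive splitting of the shared \wtupdate{} pass between the insertion and deletion batches---are refinements of steps the paper treats implicitly, not a different argument.
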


\ifx\fullversion\undefined
The formal analysis of \cref{thm:main-tour-tree} is given in
the full paper.
We will see how to use \cref{thm:main-tour-tree} in \cref{sec:sssp}.
\else
To prove this theorem, we will use the following lemma.
\begin{lemma}
\label{lem:wtfunctions}
Given $b$ invocations of \addmark{} function in a batch, the total cost of these \addmark{} functions is $O\left(b\log\frac{n}{b}\right)$.
If there are $b$ invocations of \addmark{} function in the last batch, the total cost of the \wtupdate{} is $O\left(b\log\frac{n}{b}\right)$.
If the \traverse{} algorithm extracts $b$ (smallest) elements from \tourt{} $T$, the total cost of \traverse{} is $O\left(b\log\frac{n}{b}\right)$.
\end{lemma}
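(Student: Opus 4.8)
The plan is to reduce all three bounds to a single combinatorial fact and then check that each routine spends only $O(1)$ work per node in the relevant set. The fact I would establish first is: in a complete binary tree with $n$ leaves, the union of the root-to-leaf paths of any fixed set of $b$ leaves contains $O(b\log(n/b))$ nodes. I would prove this by counting per level: at height $i$ above the leaves there are $n/2^i$ nodes in total, and at most $b$ of them can lie on the $b$ paths, so the number of touched nodes is $\sum_{i=0}^{\log n}\min(b,\,n/2^i)$. Splitting this sum at $i^\star=\log_2(n/b)$ gives $b\log_2(n/b)$ from the $i\le i^\star$ levels and a geometric tail bounded by $2b$ from the $i>i^\star$ levels, so the total is $O(b\log(n/b))$. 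Under the paper's convention $\log n = 1+\log_2(n+1)$ this quantity is always $\Omega(b)$, which gracefully covers the regime $b=\Theta(n)$ (where the bound becomes $O(n)$, matching the $2n-1$ tree nodes).

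For the \addmark{} bound I would argue as follows. A single \addmark{} writes the leaf's $\inQ$ flag in $O(1)$ time and then walks toward the root, performing a \TAS{} on the $\updated$ flag of each successive ancestor, halting at the first ancestor whose \TAS{} fails (the flag was already set) or at the root. The crucial observation is that, within one batch, each $\updated$ flag transitions from $0$ to $1$ by exactly one \emph{successful} \TAS{}; the \TAS{} serializes these writes so the flag is ``won'' only once no matter how many invocations race on it. Therefore the successful \TAS{}es are in bijection with the distinct interior ancestors of the $b$ touched leaves, of which there are $O(b\log(n/b))$ by the combinatorial fact. Each invocation additionally performs at most one \emph{failing} \TAS{} (its terminating step) and one leaf write, contributing $O(b)$ in aggregate. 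Summing gives $O(b\log(n/b))$; the subtle point to get right is that failed attempts must be charged $O(1)$ to their own invocation, not to the node, precisely because the \TAS{} guarantees each node is won once.

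For \wtupdate{} and \traverse{} I would exploit the same node set. After a batch of $b$ \addmark{}s, the interior nodes carrying $\updated=1$ are exactly the distinct interior ancestors of the $b$ touched leaves. \wtupdate{} recurses into a node's children only when that node has $\updated=1$ and otherwise returns its stored key immediately, so the internal nodes of its recursion tree are precisely those $O(b\log(n/b))$ marked nodes; since each internal node of a binary call tree spawns two calls, the total number of calls is within a constant factor and each does $O(1)$ work, yielding $O(b\log(n/b))$. The argument for \traverse{} is identical with the predicate ``$t.k\le\theta$'' playing the role of the $\updated$ flag: a subtree is entered iff its minimum key is at most $\theta$, i.e.\ iff it contains one of the $b$ extracted records, so the entered nodes are the ancestors of the $b$ extracted leaves. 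The only extra cost is the \addmark{}$(\cdot,0)$ invoked at each extracted leaf; these form a batch of $b$ deletions and, by the \addmark{} bound just established, cost $O(b\log(n/b))$ in total, so the overall \traverse{} cost remains $O(b\log(n/b))$.

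The main obstacle is not any single calculation but the bookkeeping that keeps all three bounds anchored to the same $O(b\log(n/b))$ ancestor count. Concretely, I expect the delicate steps to be (i) proving the combinatorial fact cleanly, including the edge behavior near $b=n$; (ii) handling concurrency in \addmark{}, so that the $O(b\log(n/b))$ successful marks are counted once despite parallel races on shared flags while the at-most-one failed \TAS{} per invocation is absorbed into the additive $O(b)$; and (iii) folding the deletion-marking performed inside \traverse{} back into the \addmark{} bound rather than double-counting it. Once these are in place, each routine's work is a constant times the size of a set already shown to be $O(b\log(n/b))$.
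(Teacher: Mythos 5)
Your proposal is correct and follows essentially the same route as the paper's proof: all three bounds are reduced to the fact that the ancestors of $b$ leaves in a complete binary tree number $O\left(b\log\frac{n}{b}\right)$, the \TAS{} guarantees each interior node is marked by only one successful attempt, \wtupdate{} and \traverse{} visit only those marked (resp.\ key-qualifying) nodes and their children, and the deletion marks inside \traverse{} are folded back into the \addmark{} bound. The only differences are presentational: you prove the ancestor-counting fact directly by a per-level sum (the paper cites prior work on balanced trees and notes the complete-tree case is a simplification), and your charging of failed \TAS{} attempts to their own invocations is slightly more explicit than the paper's statement that each node in $S(X)$ is visited at most once.
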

\begin{proof}
  We first show that, in a \tourt{}, for a subset $X$ of tree leaves, if we denote $S(X)$ as the set of all ancestors of nodes in $X$, then $|S(X)|=O\left(|X|\log\frac{n}{|X|}\right)$.
  This has been shown on more general self-balanced binary trees \cite{Blelloch2016justjoin,blelloch2018geometry}, and just a simplified case suffices as \tourt{s} are complete binary trees.

  First, \addmark{} modifies the flag $\updated$ for each tree path node all modified leaves to the root. Let $X$ be all tree leaves corresponding to the invocations \addmark{} functions in this batch. Therefore $|X|=b$.
  By definition of $S(X)$, we know that only nodes in $X\cup S(X)$ are visited.
  Because of the \tas{} operation, a node $v\in S(X)$ recursively call \addmark{} on its parent $u$ if and only if $v$ successfully set the $\updated$ mark of $u$. This means that for any interior node $u\in S(X)$, this can happen only once. Therefore, every node in $S(X)$ is visited by the \addmark{} function at most once. This proves that the cost of all \addmark{} functions $O\left(b\log\frac{n}{b}\right)$.

  For \wtupdate{}, it restores all relevant interior tree nodes top-down. Let $X$ be all tree leaves corresponding to the invocations \addmark{} functions in the last batch (so that they need to be addressed in this \wtupdate{}). Therefore $|X|=b$.
  Note that the algorithm skips a subtree when the $\updated$ flag is \false{}. Therefore, the total number of visited nodes must be a subset of all nodes in $X\cup S(X)$ and their children. Given that each node has at most two children, this number can also be asymptotically bounded by $|X\cup S(X)|$, which are those marked \true{} in the $\updated$ flags. This gives the same bound as the total cost of \addmark{} functions in a batch.

  For \extract{}, denote the leaves corresponding to all the output \record{s} as set $X$ of size $b$.
  Note that we skip a subtree if its key (minimum key in its subtree) is larger than $\theta$. This means that we visit an interior node only if at least one of its descendants will be included in the output batch. Therefore, all visited nodes are also $X\cup S(X)$ and all their children.
  For all visited leaves, \extract{} also calls \addmark{} to set $\mb{inQ}$ flag to 0 (\false). As proved above, the total cost of these \addmark{} functions is $O\left(b\log\frac{n}{b}\right)$.
  Therefore, the total cost of \traverse{} is also $O\left(b\log\frac{n}{b}\right)$ to extract a batch of size~$b$.
\end{proof}

With \cref{lem:wtfunctions} that shows the cost of each function in \cref{alg:tour-tree}, we can now formally prove \cref{thm:main-tour-tree}.

\medskip
\begin{proof}[Proof of \cref{thm:main-tour-tree}]
  Recall that the \emp{modification work} on a batch $B$ as all work paid to \lazyinsert{} all \record{s} in $B$, as well as any later work (possibly done by a later \extract{}) to actually apply the updates, and the \emp{extraction work} on a batch $B$ is all work paid to output the batch from the \extract{} function, as well as any later work (possibly done by the next \extract{}) to actually remove them from $Q$. In \tourt{}, the modification work includes all \addmark{} operations called by the \lazyinsert{} functions in the batch, as well as later cost in the \wtupdate{} operation to restore keys of all the relevant interior nodes. From \cref{lem:wtfunctions}, the total cost is $O(b\log(\frac{n}{b}))$ for a batch of size $B$.
  The extraction work on a batch $B$ includes the work done by \traverse{} to get the output sequence, as well as to restore keys of all the relevant interior nodes in the next \wtupdate{}.

  For both \extract{} and \lazyinsert{}, the span is no more than the height of the tree, which is $O(\log n)$.
\end{proof}
\fi



\subsection{Array-Based Implementation}

\cref{alg:tour-tree} uses a tree-based structure to provide tight work bounds for applying a batch of modifications or extractions.
This is asymptotically better than batch-dynamic search trees~\cite{Blelloch2016justjoin,sun2018pam,blelloch2020optimal}.
However, in practice, maintaining a tree-based data structure can be expensive because of larger memory footprint and random access.
Even though we can implement a \tourt{} in a flat array (no pointers), it still requires extra storage for interior nodes and incurs frequent random accesses (following tree path).
When the batch size $b$ approaches $n$ and $O(\log(n/b))$ becomes small, the theoretical advantage of \tourt{s} becomes insignificant, and is asymptotically the same as just loop over all \record{s}.

This is observed by the practitioners. Most (if not all) practical SSSP implementations just keep an array for all records without maintaining sophisticated structures.
This is because parallel SSSP algorithms usually use a very large value of $b$ to get sufficient parallelism.
To implement \lazyinsert{} on an array, we can just set a flag to indicate a \record{} is added to $Q$.
For \extract, we loop over the entire array and pack all records with keys within~$\theta$ in parallel, which takes linear work and $O(\log n)$ span.
While efficiently implementing the array requires many subtle details (shown in \cref{sec:impl-bdpq}), asymptotically, the following bound is easy to see.

\begin{theorem}\label{thm:array-based}
  The array-based \BDPQ requires $O(b)$ modification work on a size-$b$ batch.
  The extraction work on a size-$b$ batch is $O(n)$.
  The span of \extract{} and \lazyinsert{} is $O(\log n)$.
\end{theorem}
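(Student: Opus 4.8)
The plan is to verify each of the three bounds directly from the array-based implementation described in the text, since this theorem is elementary once the data structure is fixed. The array-based \BDPQ{} maintains a single flat array of size $n$ holding all records, where each record stores a boolean $\inQ$ flag (indicating membership in $Q$) alongside its key, which is read through the \mapping{} $\delta[\cdot]$. No interior structure is maintained, so every operation reduces to either a constant-time flag toggle or a single linear scan over the array.

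First I would bound the \textbf{modification work}. A \lazyinsert{}$(\id)$ on the array amounts to setting the $\inQ$ flag of the slot for $\id$ to true; because the \mapping{} $\delta[\id]$ already carries the updated key, no additional propagation is needed. Each such operation is $O(1)$, so a batch of size $b$ costs $O(b)$ work, with no deferred cost charged to a later \extract{} (unlike the \tourt{} case, there are no interior keys to restore). This gives the first claimed bound.

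Next I would bound the \textbf{extraction work}. An \extract{}$(\theta)$ must scan the entire array, testing for each slot whether $\inQ$ holds and whether $\delta[\id]\le\theta$, then compacting the qualifying identifiers into an output sequence via a parallel pack (prefix-sum) over the $n$ entries. The scan and the pack both touch all $n$ slots regardless of how many records satisfy the predicate, so the extraction work on a size-$b$ batch is $O(n)$ (independent of $b$). Deleting the extracted records is just clearing their flags, which is subsumed in the same $O(n)$ scan. This is the key contrast with \cref{thm:main-tour-tree}: the array pays $O(n)$ per extraction rather than output-sensitive $O(b\log(n/b))$.

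Finally I would bound the \textbf{span}. The \lazyinsert{} flag-set is a single memory write, hence $O(1)\subseteq O(\log n)$ span. The \extract{} consists of a parallel map over the array followed by a parallel prefix-sum to compact, both of which have $O(\log n)$ span on $n$ elements in the binary-forking model; forking $n$ parallel tasks itself requires $\Theta(\log n)$ span, so this is tight. I do not anticipate a genuine obstacle here — the statement is deliberately straightforward (``easy to see,'' as the text notes). The only point requiring mild care is making explicit that the extraction cost is $O(n)$ rather than output-sensitive, and that the compaction into a consecutive array (handled by the two-round scan already sketched for the \tourt{}, or a single prefix-sum here) does not inflate either the work or the $O(\log n)$ span.
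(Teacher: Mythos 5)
Your proposal is correct and follows essentially the same route as the paper, which states this bound without a formal proof: \lazyinsert{} is a constant-time flag set (hence $O(b)$ modification work per batch), and \extract{} is a full linear scan plus parallel pack over all $n$ slots, giving $O(n)$ work and $O(\log n)$ span in the binary-forking model. Your added observations---that the array incurs no deferred restoration cost unlike the \tourt{}, and that deletion is subsumed in the same scan---are exactly the details the paper leaves implicit when it calls the bound ``easy to see.''
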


\section{Analysis for Stepping Algorithms}\label{sec:sssp}
\begin{table*}[!t]
\vspace{-1em}
\small
  {\centering
    \def\arraystretch{1.25}
  \newcommand\encircle[1]{\raisebox{.6pt}{\textcircled{\raisebox{-.3pt} {\scriptsize #1}}} }
  \newcommand{\undirectedbound}{\scriptsize (U)}
  \newcommand{\directedbound}{\scriptsize (D)}
  \begin{tabular}{@{}l@{}c@{}c@{  }@{  }c@{}cc}
    \toprule
    \multirow{2}[0]{*}{\bf Algorithm} & \multicolumn{2}{c}{\bf Work} & \bf \multirow{2}[0]{*}{Span} & \multicolumn{2}{c}{\bf \hfill Previous Best\hfill}\\
    & \bf Tournament-tree-based & \bf Array-based & & \bf Work & \bf Span \\
    \midrule
    Dijkstra~\cite{dijkstra1959,brodal1998parallel} & $O\left(m\log \frac{n^2}{m}\right)$ & $O(m+n^2)$ & $O(n\log n)$ & $O(m \log n)$ & same \\
    Bellman-Ford~\cite{bellman1958routing,ford1956network} & $O(k_nm)$ & $O(k_nm)$ & $O(k_n\log n)$ & same & same \\
    \deltass & $O\left(k_nm\log\frac{nL}{m\Delta}\right)$ & $O\left(k_nm+\frac{k_nn(\Delta+L)}{\Delta}\right)$ & $O\left(\left(\frac{k_n (\Delta+L)}{\Delta}\right)\log n\right)$ & - & - \\
    \radiuss{$^{\dagger}$}~\cite{blelloch2016parallel} & $O\left(k_\rho m\log\frac{n^2\log{\rho L}}{m\rho}\right)$ \undirectedbound{} & $O\left(k_\rho m+\frac{k_\rho n^2}{\rho}\cdot \log{\rho L}\right){}$ \undirectedbound & $O\left(\frac{k_\rho n}{\rho}\cdot \log{\rho L}\log n\right)$ \undirectedbound{} &$O\left(k_\rho m\log n\right){}$ \undirectedbound{} & same\\
    Shi-Spencer$^{\dagger}$~\cite{Shi1999} & $O\left((m+n\rho)\log\frac{n^2}{m+n\rho}\right)$ \undirectedbound{} & $O\left(m+n\rho+\frac{n^2}{\rho}\right)$ \undirectedbound{} & $O\left(\frac{n\log n}{\rho}\right)$  \undirectedbound{} &$O\left((m+n\rho)\log n\right)$ \undirectedbound{} & same\\
    \multirow{2}[0]{*}{\SSSPalgo} & \multirow{2}[0]{*}{$O\left(k_n m\log\frac{n^2}{m\rho}\right)$} & $O\left(k_nm+\frac{n^2k_\rho}{\rho}\right)$  \undirectedbound & $O\left(\frac{k_\rho n\log n}{\rho}\right)$ \undirectedbound & \multirow{2}[0]{*}{-} & \multirow{2}[0]{*}{-} \\
    && $O\left(k_nm+\frac{n^2k_n}{\rho}\right)$  & $O\left(\frac{k_n n\log n}{\rho}\right)$  \\
    \bottomrule
  \end{tabular}
  \vspace{-1em}
  \caption{\small\textbf{New work and span bounds for the stepping algorithms and comparison to previous results.} (U) indicates the bound only works for undirected graphs. 
  (-) indicates no non-trivial bound is known to the best of our knowledge.
  (same) indicates the previous bound matches the tournament-tree-based work or the span.
  All new work bounds for \deltass, \radiuss, Shi-Spencer, and \ssspours are based on the distribution lemma (\cref{lem:distribute}) and the \BDPQ{} bounds.
  \radiuss and Shi-Spencer (noted with $^{\dagger}$) require preprocessing. 
  \vspace{-1em}
  \label{tab:bounds}}
}
\end{table*}

With the stepping algorithm framework (\cref{alg:framework}) and \BDPQ's implementation, we can now formally analyze the cost bounds for the stepping algorithms, which are summarized in \cref{tab:bounds}.
Our new bounds are parameterized by the definition of \krhograph shown in~\cite{blelloch2016parallel}.
We first show some useful results for all stepping algorithms in \cref{sec:cost-prelim}, and use them to prove the results in \cref{sec:cost-algo}.
We later show the span for \ssspours{} on undirected graphs in \cref{sec:undirected}, and compare with existing algorithms in \cref{sec:discussion}.

\subsection{Useful Results for All Stepping Algorithms}\label{sec:cost-prelim}

We first show two useful lemmas for all stepping algorithms.

\begin{lemma}[Number of extractions]\label{lem:num-ext}
In a stepping algorithm, a vertex $v\in V$ will not be extracted from the priority queue (\cref{line:extract} in \cref{alg:framework}) more than $k_n$ times.
\end{lemma}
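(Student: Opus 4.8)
\textbf{Proof plan for \cref{lem:num-ext}.}

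The plan is to relate each extraction of a vertex $v$ to a strict decrease in its hop distance along the shortest path that currently realizes its tentative distance. The key intuition is that in any stepping algorithm, whenever $v$ is extracted from the \BDPQ{} at \cref{line:extract}, the value $\delta[v]$ at that moment equals the weighted length of some path from $s$ to $v$ whose edges were all relaxed in earlier steps. I would make this precise by assigning to each extraction of $v$ a \emph{witness path}: the path $P$ from $s$ to $v$ in the relaxation forest such that $\delta[v]=w(P)$ at the time of extraction. First I would argue that for $v$ to be extracted a subsequent time, its key $\delta[v]$ must have strictly decreased between the two extractions (since a vertex is only re-inserted via \lazyinsert{} after a successful \WriteMin{} on \cref{line:writemin}, which by definition only fires on a strict decrease). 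Thus distinct extractions of $v$ correspond to strictly decreasing values of $\delta[v]$, and in particular to distinct witness paths of strictly decreasing weight.

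The heart of the argument is to bound the number of such witness paths by $k_n$, the shortest-path-tree depth. Here I would use the structural property that when $v$ is relaxed via an edge $(u,v)$, the vertex $u$ must itself already have been extracted and settled to a value $\delta[u]$ that is a \emph{correct shortest distance} along its own witness path. I would prove by induction on the number of extractions that after $v$ has been extracted $j$ times, the current value $\delta[v]$ equals $d(s,v)$ realized by a path using at most some bounded number of hops, and more sharply, that the hop-length of the witness path \emph{strictly decreases} (or the value strictly decreases toward the final $d(s,v)$ achievable in $\hat{d}(s,v)\le k_n$ hops) with each successive settling of the predecessor chain. The cleanest version is: the $j$-th extraction of $v$ corresponds to a witness path on which the predecessor $u$ had been settled at least $j-1$ times, so inductively the witness path uses a number of edges that decreases by at least one per extraction, bounded by the hop distance $\hat{d}(s,v)\le k_n$.

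Concretely, the key steps in order are: (1) observe that re-insertion requires a strict \WriteMin{} success, so successive extractions have strictly decreasing keys; (2) define the witness path and show its final weight is $d(s,v)$, attained by a path of at most $k_n$ hops by definition of the shortest-path-tree depth $k_n$; (3) set up an induction on extraction count showing that the $i$-th extraction of $v$ can only occur after its predecessor on the witness path has itself been extracted for the $i$-th time (or its hop distance has been reduced), so that the hop count of the current witness path is at least $k_n - (\text{number of prior extractions}) $; and (4) conclude that once the witness path reaches the minimum-hop shortest path (at most $k_n$ hops), no further strict decrease and hence no further extraction is possible, capping the count at $k_n$.

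The main obstacle I anticipate is making step (3) rigorous: carefully formalizing the inductive claim that ties the number of times $v$ is extracted to the hop length of its current witness path. The subtlety is that tentative distances propagate asynchronously in parallel across substeps, so I must argue about the \emph{earliest step at which a given hop-$i$ shortest path to $v$ can be realized} rather than about a clean sequential settling order. I would handle this by defining, for each $i$, the quantity $\delta_i(v)$ as the shortest distance to $v$ using at most $i$ hops, and showing that between two consecutive extractions of $v$ the relevant hop bound strictly improves; since $\delta_i(v)=d(s,v)$ once $i\ge k_n$, at most $k_n$ distinct improving values occur, bounding the extractions by $k_n$.
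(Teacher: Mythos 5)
There is a genuine gap here: the inductive claims at the heart of your step (3) are false, and your argument never invokes the one property of \extract{} that makes the lemma true. Concretely, take the Bellman-Ford instance of \cref{alg:framework} ($\theta=+\infty$ in every step) on the graph with edges $s\to a_1$ (weight $1$), $a_1\to a_2$ (weight $1$), $a_2\to v$ (weight $1$), $a_1\to v$ (weight $50$), and $s\to v$ (weight $100$). The successive extractions of $v$ have keys $100$, $51$, $3$, with witness paths $s\to v$, then $s\to a_1\to v$, then $s\to a_1\to a_2\to v$: the witness hop count \emph{increases} ($1,2,3$) rather than ``decreases by at least one per extraction,'' and the witness predecessor of the third extraction ($a_2$) has been extracted only once, not ``at least $j-1=2$ times.'' Your other formulation --- that the current witness path has at least $k_n$ minus the number of prior extractions many hops --- fails already at the first extraction ($1$ versus $k_n=3$). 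Cheaper paths discovered later generally use \emph{more} hops (this is exactly why chains are the hard case), so no induction of the shape you propose can go through. Your fallback in the last paragraph (at the $j$-th extraction, $\delta[v]\le\delta_j(v)$, the best $(\le j)$-hop distance) is a plausible true statement, but you give no mechanism for it, and any proof of it needs the ingredient below.

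The missing ingredient is the defining semantics of \extract{}$(\theta)$: it removes \emph{every} record with key at most $\theta$, so extraction batches are downward-closed in key. The lemma is simply false without this property: on a star with edges $s\to a_i$ of weight $i$ and $a_i\to v$ of weight $2(t-i)+1$, a queue allowed to extract arbitrary subsets can release $a_1,a_2,\dots,a_t$ one per step and force $t$ extractions of $v$, even though $k_n=2$. The paper's proof pivots entirely on this co-extraction property: fix the minimum-hop shortest path $P=\{s=v_0,v_1,\dots,v_l=v\}$, where $l=\hat{d}(s,v)\le k_n$. Whenever $v$ is extracted, let $v_i$ be the earliest unsettled vertex on $P$; its predecessor $v_{i-1}$ is already settled and has relaxed $v_i$ in an earlier step, so $\delta[v_i]=d(s,v_i)\le d(s,v)\le\delta[v]\le\theta$, hence $v_i$ is extracted \emph{in the same step} and becomes settled (its key can never decrease again, so by your own observation (1) it is never re-inserted). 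Each extraction of $v$ thus extends the settled prefix of $P$ by at least one vertex, and the extraction in which the prefix reaches $v$ itself is the last one, giving at most $l\le k_n$ extractions. Your observation about \WriteMin{} and strict decreases is correct and is needed for the ``settled implies never re-extracted'' step, but on its own it bounds nothing; the counting must be done against a fixed minimum-hop shortest path via co-extraction, not against per-vertex witness paths.
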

\begin{proof}
  Consider the shortest path $P=\{v_0=s, v_1, v_2, \ldots, v_l=v\}$ from the source $s$ to $v$ with fewest hops.
  Since we assume the edge weights are positive, we know that $d(s,v_i)<d(s,v_{j})$ for $i<j$.
  Hence, whenever $v$ is extracted from the priority queue, the earliest unsettled vertex $v_i$ in $P$ must also be extracted and settled.
  This is because $v_{i-1}$ is already settled and have relaxed $v_i$ in previous rounds, and $d(s,v_i)\le d(s,v)$.
  Based on the definition of the \krhograph, we have $l\le k_n$, which proves the lemma.
\end{proof}

\begin{lemma}[Distribution]\label{lem:distribute}
If a stepping algorithm has $S$ steps, and incurs $U$ updates (relaxations), the total work is $\displaystyle O\left(U\log(nS/U)\right)$ using tournament-tree-based \BDPQ.
\end{lemma}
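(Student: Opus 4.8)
The plan is to bound the total work by summing the per-batch extraction and modification costs from \cref{thm:main-tour-tree} and then applying convexity. Recall that \cref{thm:main-tour-tree} tells us that both the modification work and the extraction work on a size-$b$ batch are $O(b\log(n/b))$. Since the algorithm runs in $S$ steps, there are $S$ extraction batches (one per \extract{} call) and the modification work is naturally partitioned across these same $S$ steps. Let $b_1, b_2, \ldots, b_S$ denote the sizes of the batches (say, the extraction batch sizes, and we will argue the modification batches are comparably bounded). The total work is then $O\left(\sum_{i=1}^{S} b_i\log(n/b_i)\right)$, and the key constraint is that $\sum_{i=1}^S b_i = O(U)$, since each update (relaxation) triggers at most one \lazyinsert{}, and each extracted record corresponds to work that can be charged against the updates that placed it into $Q$.

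The heart of the argument is to maximize $\sum_i b_i\log(n/b_i)$ subject to $\sum_i b_i \le c\cdot U$ over $S$ batches. First I would observe that the function $x\mapsto x\log(n/x)$ is concave in $x$. By Jensen's inequality (or a standard Lagrange-multiplier / smoothing argument), the sum $\sum_i b_i\log(n/b_i)$ is maximized when all the $b_i$ are equal, i.e.\ $b_i = U/S$ for each $i$. Substituting this in gives
\begin{equation*}
\sum_{i=1}^S b_i\log\frac{n}{b_i} \le S\cdot\frac{U}{S}\log\frac{nS}{U} = U\log\frac{nS}{U},
\end{equation*}
which is exactly the claimed bound $O\left(U\log(nS/U)\right)$. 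This handles the extraction work directly, and the modification work is bounded identically since every modification is an \lazyinsert{} charged to one of the $U$ updates, distributed over the $S$ steps.

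The main obstacle I anticipate is making the bookkeeping precise on two fronts. First, I must confirm that the total number of records placed into the queue across all batches is genuinely $O(U)$: each successful \WriteMin{} on \cref{line:writemin} of \cref{alg:framework} triggers exactly one \lazyinsert{}, so the number of insertions is at most $U$, and each extraction removes a record that was previously inserted, so the extracted counts are also dominated by $U$. Second, the concavity argument implicitly assumes the $\log(n/b_i)$ factors are well-behaved; I would need to note that when some $b_i$ is small (close to $1$), the term $b_i\log(n/b_i)$ is at most $O(\log n)$, and the total over all $S$ steps remains controlled because the equal-split maximizer dominates. I would also want to verify the edge case where $U < S$ (fewer updates than steps), but since empty or tiny batches only contribute lower-order terms, the bound $U\log(nS/U)$ still absorbs them. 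Overall, the proof is a clean application of \cref{thm:main-tour-tree} combined with the concavity of $x\log(n/x)$; the only real care needed is in the charging argument establishing $\sum_i b_i = O(U)$.
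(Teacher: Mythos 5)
Your proposal is correct and follows essentially the same route as the paper's proof: apply the per-batch bound of \cref{thm:main-tour-tree} to each of the $S$ steps, charge extraction work against the corresponding updates so that only one of the two needs to be analyzed, and then use concavity of $x\log(n/x)$ (Jensen) to show the sum is maximized at the equal split $b_i = U/S$, yielding $U\log(nS/U)$. The only cosmetic difference is that you index batches by extraction sizes and bound modifications by comparison, whereas the paper indexes by relaxation counts $u_i$ and bounds extractions by comparison; the charging argument and the convexity step are identical.
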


\begin{proof}
  The work of a stepping algorithm consists of modification work for relaxations (updates) and extraction work applied to the \BDPQ{}.
  Each extracted vertex corresponds to a previous successful relaxation, and an update and an extraction have the same cost per vertex.
  Hence, we only need to analyze modification costs since extraction costs are asymptotically bounded.

  The $U$ updates are distributed in $S$ steps. Let $u_i$ be the number of relaxations applied in step $i$ ($\sum_i{u_i}=U$).
  The overall work across all steps is $W=O(\sum_i{u_i\log(n/u_i)})$.
  Since $u_i\log(n/u_i)$ is concave, $\sum_i{u_i\log(n/u_i)}\le S\cdot((\sum_i{u_i}/S)\log (n/(\sum_i{u_i}/S)))=U\log(nS/U)$, which proves the lemma.
\end{proof}

\subsection{Cost Bounds for Stepping Algorithms}\label{sec:cost-algo}

With \cref{lem:distribute,lem:num-ext} for the stepping algorithms and \cref{thm:main-tour-tree,thm:array-based} for \BDPQ's cost, we can now show the cost bounds shown in \cref{tab:bounds} except for one given in \cref{sec:undirected}.

Dijkstra's algorithm has $O(n)$ steps and $O(m)$ relaxations, \cref{lem:distribute} gives $O(m\log(n^2/m))$ work which is essentially better than Brodal et al.'s algorithm~\cite{brodal1998parallel} (their span is also $O(n\log n)$ on the fork-join model).
Bellman-Ford has $O(k_n)$ steps and $O(k_nm)$ relaxations, so the work is $O(k_nm)$ and the span is $O(k_n\log n)$.
The following theorem shows the number of steps for \ssspours.

\begin{theorem}[Number of steps for \ssspours]
\label{thm:rho-steps-general}
  On a $(k_{\rho},\rho)$-graph, the \SSSPalgo{} algorithm finishes in $O\left(k_{n}n/\rho\right)$ steps.
\end{theorem}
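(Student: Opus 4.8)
The plan is to combine the extraction bound of \cref{lem:num-ext} with the fact that each step of \ssspours{} extracts a large batch. First I would recall that \cref{lem:num-ext} bounds the number of times each vertex leaves the priority queue by $k_n$, so summing over all $n$ vertices, the \emph{total} number of extractions across the whole execution is at most $k_n n$. The second ingredient is that in \ssspours{} the threshold $\theta$ passed to \extract{} is the $\rho$-th smallest key in the frontier $Q$, so a single \extract{} returns exactly the vertices with key $\le \theta$, i.e.\ at least $\min(\rho,|Q|)$ vertices.

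Call a step \emph{full} if $|Q|\ge\rho$ at its start and \emph{small} otherwise. A full step extracts at least $\rho$ vertices, each of which is charged against the global extraction budget; since that budget is at most $k_n n$, there can be at most $k_n n/\rho$ full steps. This already yields the desired count for the bulk of the computation, and it is the easy half of the argument.

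The delicate half is bounding the small steps, in which the frontier is drained entirely but may contribute fewer than $\rho$ extractions, so they cannot be charged to the budget directly. Here I would first establish that every nonempty step settles at least one vertex permanently: the minimum-key vertex $v^\star$ of the current frontier satisfies $\delta[v^\star]=d(s,v^\star)$ (the standard Dijkstra invariant, which also underlies \cref{lem:num-ext}, since on a true shortest path to $v^\star$ the earliest unsettled vertex has already been relaxed to its exact distance and hence cannot have a larger key), so once $v^\star$ is extracted it never re-enters $Q$. The obstacle is that a small step might settle only this one vertex, so a naive settlement count gives only $O(n)$ steps, which is weaker than $k_n n/\rho$ when $k_n<\rho$. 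To obtain the sharp bound I would instead argue that small steps make \emph{sequential hop progress}: once a small step empties $Q$, the frontier is refilled only by neighbors of the vertices just extracted, so any maximal run of small steps that keeps the frontier below $\rho$ corresponds to advancing along min-hop shortest paths and therefore has length $O(k_n)$, while the number of disjoint size-$\le\rho$ batches processed over the whole run is $O(n/\rho)$. Multiplying these two quantities gives $O(k_n n/\rho)$ small steps, and adding the $O(k_n n/\rho)$ full steps proves the claimed bound.

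I expect this last hop-accounting for the small steps to be the main obstacle: the full-step bound is immediate from \cref{lem:num-ext}, but ruling out a proliferation of tiny frontiers requires linking the depth of the sequential dependencies among small steps to the min-hop depth $k_n$ and the available width to $\rho$, which is precisely where the \krhograph{} structure and the Dijkstra invariant must be applied with care.
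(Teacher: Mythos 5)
Your first half (the full steps) is exactly the paper's argument: by \cref{lem:num-ext} the total number of extractions over the whole execution is at most $k_n n$, and each step with $|Q|\ge\rho$ consumes at least $\rho$ of this budget, so there are $O(k_n n/\rho)$ such steps. The gap is in your accounting for the small steps. Your per-run observation is fine---within a maximal run of small steps, each step empties $Q$ and hence settles the earliest unsettled vertex on every min-hop shortest path, so a single run has length at most $k_n$---but the quantity you multiply it by, namely that the number of maximal runs of small steps is $O(n/\rho)$, is unjustified and appears false as stated. A run of small steps can settle as few as one vertex, and extracted vertices may re-enter $Q$ up to $k_n$ times, so the only run-count bound your setup provides is (number of full steps)$+1=O(k_n n/\rho)$; multiplying that by the per-run length $O(k_n)$ gives $O(k_n^2 n/\rho)$ small steps, which does not prove the theorem. (Also, ``disjoint size-$\le\rho$ batches'' cannot number $O(n/\rho)$: disjoint batches of size \emph{at most} $\rho$ would have to number at least, not at most, $n/\rho$ to cover the work.)

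The fix---and this is what the paper does---is to run your hop-progress argument globally rather than per run. Every small (partial) step, wherever it occurs and however it is interleaved with full steps, extracts \emph{all} of $Q$, and therefore settles the earliest unsettled vertex on the min-hop shortest path to every still-unsettled vertex $v$; the settled prefixes of these paths never shrink. Hence after $k_n$ small steps \emph{in total}, every path (of hop length at most $k_n$) is fully settled and the algorithm terminates, so the entire execution contains at most $k_n$ small steps. Combined with the $O(k_n n/\rho)$ full steps and $\rho\le n$, this gives $O(k_n n/\rho)$ steps. In short, you have both key ingredients (the extraction budget and hop progress on emptied frontiers); the error is localizing the hop-progress argument to runs and then needing a run count you cannot supply.
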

\begin{proof}
  In \SSSPalgo{}, each step can either be a \emph{full-extract}, where $|Q|\ge \rho$ so we extract $\rho$ vertices with closest tentative distances, or a \emph{partial-extract}, where $|Q|<\rho$ so we extract all but fewer than $\rho$ vertices.
  There can be at most $O\left(k_{n}n/\rho\right)$ full-extracts, since \cref{lem:num-ext} shows that each vertex can only be extracted for $k_n$ times.
  Given that we have $n$ vertices in total, there can be at most $O\left(k_{n}n/\rho\right)$ full-extracts.
  We now show that at most $k_n$ partial-extracts can occur.
  Similar to the analysis for \cref{lem:num-ext}, once a partial-extract occurs, at least one vertex on the shortest path $P$ from source $s$ to any vertex $v$ is settled.
  Based on the definition of the \krhograph, we have $|P|\le k_n$, so in total, at most $k_n$ partial-extracts can occur.
  Putting both parts together proves the theorem.
\end{proof}
Combining the result with \cref{lem:distribute} gives the work bound of \ssspours{} in \cref{tab:bounds}.

We now show that we can get better work bounds for \radiuss{} using \BDPQ.
\radiuss extracts all vertices with distance within $\min_{v\in Q}(\delta[v]+r_\rho(v))$ in each step. The original papers uses a search tree to support this operation.
We note that our \BDPQ fully captures the need in \radiuss{}.
By replacing the search tree with our \tourt and plugging in the numbers of relaxations and steps, we get the following results.

\begin{corollary}\label{cor:radius}
\radiuss~\cite{blelloch2016parallel} uses $O\left(k_\rho m\log\frac{n^2\log{\rho L}}{m\rho}\right)$ work and $O\left(\frac{k_\rho n}{\rho}\cdot \log{\rho L}\log n\right)$ span, with $O(m\log n+n\rho^2)$ work and $O(\rho\log\rho+\log n)$ span for preprocessing.
\end{corollary}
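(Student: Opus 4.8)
The plan is to instantiate \radiuss{} in the stepping-algorithm framework (\cref{alg:framework}) backed by the augmented tournament-tree \BDPQ{}, and then feed the step count and relaxation count into the distribution lemma (\cref{lem:distribute}). The only facts I need about \radiuss{} itself are inherited from the original analysis~\cite{blelloch2016parallel}: it performs $U=O(k_\rho m)$ relaxations in total (each vertex relaxes its incident edges $O(k_\rho)$ times, a bound tighter than the generic $k_n$ of \cref{lem:num-ext} thanks to the radius threshold), and the number of substeps, governed by \finishcond{}, is $S=O\!\left((k_\rho n/\rho)\log{\rho L}\right)$. The new ingredient is that I replace the search-tree queue of~\cite{blelloch2016parallel}, which charged $\Theta(\log n)$ per relaxation and hence gave $O(k_\rho m\log n)$ work, with the \BDPQ{} of \cref{thm:main-tour-tree}, whose amortized cost is sensitive to the per-step batch sizes.

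First I would substitute $U$ and $S$ into \cref{lem:distribute}, giving work
\[
O\!\left(U\log\frac{nS}{U}\right)=O\!\left(k_\rho m\log\frac{n\cdot (k_\rho n/\rho)\log{\rho L}}{k_\rho m}\right)=O\!\left(k_\rho m\log\frac{n^2\log{\rho L}}{m\rho}\right),
\]
matching the claimed work bound. For the span, each substep consists of one \extract{} (which first triggers the pending \wtupdate{} and then \traverse{}) followed by one parallel round of \WriteMin{} relaxations and the corresponding \lazyinsert{} calls; by \cref{thm:main-tour-tree} every \extract{} and \lazyinsert{} has span $O(\log n)$, and the relaxation round is $O(\log n)$ as well, so a single substep is $O(\log n)$ span. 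Multiplying by $S=O\!\left((k_\rho n/\rho)\log{\rho L}\right)$ substeps yields the span $O\!\left((k_\rho n/\rho)\cdot\log{\rho L}\log n\right)$.

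Two points need a little care. First, \radiuss{} requires the threshold $\theta=\min_{v\in Q}(\delta[v]+r_\rho(v))$, which is an augmented query computed by $\collect()$ under the $\min$ monoid applied to the mapped values $\delta[v]+r_\rho(v)$; here I would invoke the remark from \cref{sec:bdpq} that the augmented value is restored inside the same \wtupdate{} pass and therefore does not change the asymptotic modification or extraction work, leaving the distribution-lemma accounting intact. Second, I would charge the preprocessing that computes $r_\rho(v)$ for every vertex separately: this is the bounded-radius search of~\cite{blelloch2016parallel}, costing $O(m\log n+n\rho^2)$ work and $O(\rho\log\rho+\log n)$ span, and is independent of the main loop. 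The stated bounds are the undirected ones, inheriting the undirectedness requirement of \radiuss{}.

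The main obstacle is not the arithmetic but justifying that the $O(k_\rho m)$ relaxation count and the $O((k_\rho n/\rho)\log{\rho L})$ substep count carry over verbatim after swapping the queue. I must verify that the \BDPQ{} exposes exactly the semantics \radiuss{} relied on, namely a threshold \extract{} together with lazy relaxations, so that no additional relaxations or substeps are introduced and the original structural analysis applies unchanged. Because \BDPQ{} is an ADT and \radiuss{} only ever interacts with it through \extract{}, \lazyinsert{}, and $\collect()$, this is immediate, which is precisely why the corollary reduces to plugging the known numbers into \cref{lem:distribute}.
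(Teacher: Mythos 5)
Your proposal is correct and takes essentially the same route as the paper: the paper also obtains \cref{cor:radius} by replacing \radiuss{}'s search tree with the (augmented) tournament-tree \BDPQ{}, importing the $O(k_\rho m)$ relaxation count and $O\left((k_\rho n/\rho)\log{\rho L}\right)$ substep count from the original analysis in~\cite{blelloch2016parallel}, and plugging these into the distribution lemma (\cref{lem:distribute}), with the augmented $\collect()$ threshold handled inside \wtupdate{} at no asymptotic cost and the preprocessing bounds charged separately.
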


We can also improve another parallel SSSP algorithm Shi-Spencer~\cite{Shi1999} by replacing their original search-tree-based priority queue with our \tourt{} (more details in
\ifx\fullversion\undefined
the full paper).
\else
\cref{app:shi-spencer}).
\fi

\begin{corollary}\label{cor:shispencer}
Shi-Spencer algorithm~\cite{Shi1999} can be computed using $O\left((m+n\rho)\log\frac{n^2}{m+n\rho}\right)$ work and $O\left(\frac{n\log n}{\rho}\right)$ span, with $O(m+n\rho^2\log n\log\rho)$ work and $O(\log n\log\rho)$ span for preprocessing.
\end{corollary}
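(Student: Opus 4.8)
The plan is to follow exactly the recipe already used for \radiuss{} in \cref{cor:radius}: keep the original Shi--Spencer control flow \cite{Shi1999} intact and only swap its balanced-search-tree priority queue for the tournament-tree \BDPQ{} of \cref{alg:tour-tree}, then read off the cost from the distribution lemma (\cref{lem:distribute}) together with the \BDPQ{} bounds of \cref{thm:main-tour-tree}. The only algorithm-specific facts I need are (i) the number of steps $S$ the original algorithm performs and (ii) the total number of relaxations (updates) $U$ it issues; everything else is mechanical.

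First I would recall the structure of the Shi--Spencer algorithm on an undirected graph: as preprocessing it augments $G$ with, for every vertex, shortcut edges to (roughly) its $\rho$ nearest vertices, after which it runs a Dijkstra-style batched relaxation over the augmented graph. This yields the two quantities I need. The augmented edge count, hence the number of relaxations, is $U = O(m + n\rho)$ (one per original edge plus one per shortcut, settling each vertex once), and the shortcut structure guarantees that all distances are resolved in $S = O(n/\rho)$ steps, each extracting a batch of about $\rho$ vertices. These are properties of the original algorithm that are independent of which priority queue is used, so I would either cite them directly or re-establish them in the appendix using a step-counting argument analogous to \cref{lem:num-ext}. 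This is also consistent with the previous best $O((m+n\rho)\log n)$ work, which corresponds to $O(m+n\rho)$ search-tree operations each at $O(\log n)$.

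Given $S$ and $U$, the work bound is immediate from \cref{lem:distribute}: plugging in $S = O(n/\rho)$ and $U = O(m+n\rho)$ gives
\[
O\!\left(U\log\frac{nS}{U}\right) = O\!\left((m+n\rho)\log\frac{n^2}{\rho(m+n\rho)}\right),
\]
and since $\rho \ge 1$ the logarithm is at most $\log\frac{n^2}{m+n\rho}$, yielding the claimed $O\left((m+n\rho)\log\frac{n^2}{m+n\rho}\right)$. This is a genuine improvement over $O((m+n\rho)\log n)$ precisely because the batched \tourt{} pays $O(\log(n/b))$ rather than $O(\log n)$ per element. The span follows from running $S = O(n/\rho)$ steps, each performing one \extract{} and one batch of \lazyinsert{} calls at $O(\log n)$ span by \cref{thm:main-tour-tree}, for a total of $O\left(\frac{n\log n}{\rho}\right)$.

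The remaining piece, which I expect to be the real obstacle, is the preprocessing analysis: building the $\rho$-nearest-neighbor shortcuts within $O(m + n\rho^2\log n\log\rho)$ work and $O(\log n\log\rho)$ span. This is where the $\rho^2$ and the double-logarithmic factors arise, intuitively because each of the $n$ vertices must examine and combine information about $\Theta(\rho)$ candidates over $O(\log\rho)$ rounds of a bounded-radius search, with an $O(\log n)$ factor for the parallel sorting/merging primitives inside each round. I would carry this out separately, following the neighborhood-construction analysis in \cite{Shi1999} and the analogous preprocessing step in \cref{cor:radius}, since it depends on the concrete construction procedure rather than on the \BDPQ{} abstraction. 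A final sanity check is that the preprocessing span $O(\log n\log\rho)$ does not dominate the main $O\left(\frac{n\log n}{\rho}\right)$ span, which holds throughout the relevant parameter regime $\rho \le n$.
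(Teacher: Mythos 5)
There is a genuine gap: your proof rests on a structural picture of Shi--Spencer that does not match the actual algorithm, and the two ``algorithm-specific facts'' you defer to a citation or a later appendix are precisely the ones that fail. The paper's own argument (in its appendix on applying \BDPQ{} to Shi--Spencer) does not model the algorithm as a Dijkstra-style batched relaxation over a single global priority queue with $S=O(n/\rho)$ batch extracts. Instead, it identifies the bottleneck as the \emph{per-vertex} search trees: Shi--Spencer maintains, for every vertex, a 2--3 tree of its $\rho$ nearest (unsettled) vertices; settling a vertex removes it from the trees of all vertices containing it, for $O(m+n\rho)$ removals in total, and the algorithm queries/extracts from these per-vertex trees $O(n)$ times in total. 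The paper replaces each such tree by a \tourt{}-based \BDPQ{}, applies the removals lazily, and invokes \cref{lem:distribute} with $U=O(m+n\rho)$ updates spread over $S=O(n)$ extraction events, which yields exactly $O\left((m+n\rho)\log\frac{n^2}{m+n\rho}\right)$. Your derivation instead plugs in $S=O(n/\rho)$, producing the strictly stronger bound $O\left((m+n\rho)\log\frac{n^2}{\rho(m+n\rho)}\right)$; that should be a red flag, since the extra $\rho$ in the denominator of the logarithm is not available when lazy updates can only be amortized against $O(n)$ extraction queries rather than $O(n/\rho)$ global batch steps.

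A second, related problem is that your simplified model is essentially \ssspours{} run on the shortcut-augmented graph, not Shi--Spencer. If you tried to make that model rigorous, the claim ``settling each vertex once'' (hence $U=O(m+n\rho)$) would no longer be free: in a stepping algorithm vertices can be re-extracted up to $k_n$ times (\cref{lem:num-ext}), and guaranteeing single-settlement requires exactly the Dijkstra-style certification machinery---built on the per-vertex nearest-neighbor structures---that makes Shi--Spencer complicated in the first place. So the route cannot be repaired by ``a step-counting argument analogous to \cref{lem:num-ext}''; the correct repair is to aim the \BDPQ{} replacement at the per-vertex trees, as the paper does. Your treatment of the preprocessing bounds (citing the construction in \cite{Shi1999}) is comparable to the paper's, which also states them without re-derivation; the gap is confined to the main work bound.
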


We also derive the bounds for \ssspours on directed graphs in \cref{thm:rho-steps}, and give the formal analysis for \deltastarstepping{}:
\begin{theorem}\label{thm:deltass}
  \deltass uses $O\left(\frac{k_n(\Delta+L)}{\Delta}\right)$ steps, and thus has $O\left(k_nm\log\frac{nL}{m\Delta}\right)$ work and $O\left(\frac{k_n (\Delta+L)}{\Delta}\log n\right)$ span based on \BDPQ{}.
\end{theorem}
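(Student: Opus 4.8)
The plan is to reduce the theorem to three ingredients and combine them with the machinery already developed: the number of steps $S$, the total number of relaxations (updates) $U$, and the per-step span. Given $S$ and $U$, the work follows immediately from the distribution lemma (\cref{lem:distribute}), which gives $O(U\log(nS/U))$ with the tournament-tree-based \BDPQ{}, and the span follows by charging $O(\log n)$ to each step---the cost of one \extract{} together with its batch of \lazyinsert{} calls is $O(\log n)$ by \cref{thm:main-tour-tree}, and the parallel relaxation loop over the extracted vertices and their neighbors adds only $O(\log n)$ span. So the whole argument rests on (i) bounding $U$ and (ii) bounding $S$.

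For the updates I would invoke \cref{lem:num-ext} directly: each vertex is extracted at most $k_n$ times, and each extraction relaxes all of its neighbors, so $U = O\!\left(\sum_{v} k_n\cdot|N(v)|\right) = O(k_n m)$. This is the same count used for Bellman-Ford and is independent of $\Delta$.

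The heart of the proof is bounding the number of steps. Since \deltass{} drops \finishcond{}, the $i$-th step simply uses the strictly increasing threshold $\theta = i\Delta$, so I must show that $Q$ becomes empty after $O(k_n(\Delta+L)/\Delta)$ such thresholds. The key is a hop-by-hop settling argument. Fix any vertex $v$ and its fewest-hop shortest path $P=(s=v_0,v_1,\dots,v_\ell=v)$ with $\ell\le k_n$, and let $t_j$ be the step in which $v_j$ is extracted with its correct distance $d(s,v_j)$ (hence settled). Once $v_{j-1}$ is settled at step $t_{j-1}$, its relaxation sets $\delta[v_j]=d(s,v_j)$ and re-inserts $v_j$ into $Q$; thereafter $v_j$ is extracted at the first subsequent step whose threshold reaches $d(s,v_j)$, giving the recurrence $t_j \le \max\!\left(t_{j-1}+1,\ \lceil d(s,v_j)/\Delta\rceil\right)$. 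A straightforward induction then yields $t_j \le j + \lceil d(s,v_j)/\Delta\rceil$. Since the farthest distance satisfies $D=\max_v d(s,v)\le k_n L$ (a fewest-hop shortest path has at most $k_n$ edges, each of weight at most $L$), we get $t_\ell \le k_n + \lceil D/\Delta\rceil = O(k_n + k_nL/\Delta) = O\!\left(k_n(\Delta+L)/\Delta\right)$. Every vertex is settled by this step and is never re-inserted afterwards, so $Q$ empties and the loop halts, giving $S = O(k_n(\Delta+L)/\Delta)$. A point to handle carefully is that a vertex may be extracted early with a too-large tentative distance; this only adds to the already-bounded extraction count and does not disturb the recurrence, since correct propagation is driven solely by the settling times $t_j$.

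Finally I would substitute into the two cost formulas. Plugging $S=O(k_n(\Delta+L)/\Delta)$ and $U=O(k_nm)$ into \cref{lem:distribute} gives work $O\!\left(k_nm\log\frac{n(\Delta+L)}{m\Delta}\right)$, which simplifies to $O\!\left(k_nm\log\frac{nL}{m\Delta}\right)$ since the dominant term of $S$ is $k_nL/\Delta$ (equivalently $\Delta+L=O(L)$ in the natural regime $\Delta\le L$); and multiplying $S$ by the $O(\log n)$ per-step span gives span $O\!\left(\frac{k_n(\Delta+L)}{\Delta}\log n\right)$. I expect the main obstacle to be the step-count lemma, specifically capturing cleanly the interaction of the two independent sources of progress---the threshold advancing by one $\Delta$ per step and the shortest-path frontier advancing by one hop per step---in the recurrence and its induction, while ensuring that premature extractions with incorrect distances are accounted for rather than breaking the argument.
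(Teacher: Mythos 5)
Your proposal is correct and follows essentially the same route as the paper: bound the total relaxations by $U=O(k_nm)$ via the extraction lemma (\cref{lem:num-ext}), bound the number of steps by $S=O\left(k_n(\Delta+L)/\Delta\right)$, then apply the distribution lemma (\cref{lem:distribute}) for the work and charge $O(\log n)$ span per step. The only difference is in bookkeeping the step count: the paper uses a two-phase argument (after $\lceil k_nL/\Delta\rceil$ steps the threshold exceeds every shortest-path distance, after which at most $k_n$ Bellman-Ford-like steps remain), whereas your per-hop recurrence $t_j\le\max\left(t_{j-1}+1,\lceil d(s,v_j)/\Delta\rceil\right)$ interleaves the two sources of progress and yields the same additive bound $k_n+\lceil k_nL/\Delta\rceil$.
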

\ifx\fullversion\undefined
Due to the space limit, the proof of \deltastarstepping is given in the full paper.
We note that for the original \deltas, such bounds do not hold.
An additional factor of $k_n$ will be introduced in span if each step needs to settle down all vertices in the distance threshold.
\else
\begin{proof}[Proof of \cref{thm:deltass}]
  We first show the number of steps in \deltass.
  The farthest vertices in the shortest-path tree has distance no more than $k_n L$ where $L$ is the largest edge weight.
  After $\lceil k_n L/\Delta\rceil$ steps, all vertices are included in the threshold, so the algorithm will finish in no more than another $k_n$ steps (the number of steps of Bellman-Ford).
  Hence, in total, \deltass uses $S=O\left(\frac{k_n(\Delta+L)}{\Delta}\right)$ steps.

  Based on \cref{lem:num-ext}, we know the number of total relaxations are upper bounded $U=k_nm$.
  We can now use \cref{lem:distribute} to get the work bound:
  $$\displaystyle O\left(U\log\frac{nS}{U}\right)=O\left(k_nm\log\frac{nk_n(\Delta+L)}{k_nm\Delta}\right)=O\left(k_nm\log\frac{n(\Delta+L)}{m\Delta}\right)$$
  Note that since $m\ge n$, $\frac{n(\Delta+L)}{m\Delta}$ makes a difference only when $L>\Delta$.
  Hence, the work bound can be simplified as $O\left(k_nm\log\frac{nL}{m\Delta}\right)$.
\end{proof}

We note that for the original \deltas, such bounds do not hold.
An additional factor of $k_n$ will be introduced if we need to settle down all vertices in each step, as shown in \cref{fig:delta-stepping-span-example}.
\fi

\subsection{Number of Steps for Undirected Graphs}\label{sec:undirected}

We can show tighter span bounds for \SSSPalgo on undirected graphs, which is inspired by existing results including \radiuss~\cite{blelloch2016parallel} and Shi-Spencer's algorithm~\cite{Shi1999}.
\ifx\fullversion\undefined
\else
We first give the main theorem for this section.
\fi

\begin{theorem}[Number of steps, Undirected]
\label{thm:rho-steps-un}
  On an undirected $(k_{\rho},\rho)$-graph, the \SSSPalgo{} algorithm finishes in $O\left(k_{\rho}n/\rho\right)$ steps.
\end{theorem}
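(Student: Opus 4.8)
The plan is to mirror the structure of the directed bound (\cref{thm:rho-steps-general}) and upgrade the counting so that the shortest-path-tree depth $k_n$ is replaced by the much smaller $k_\rho$, exploiting the symmetry of distances in undirected graphs. As before, I would split the steps of \ssspours{} into \emph{full-extracts} (where $|Q|\ge \rho$ and exactly $\rho$ vertices leave the queue) and \emph{partial-extracts} (where $|Q|<\rho$). The partial-extracts are handled exactly as in \cref{thm:rho-steps-general}: each one settles at least one further vertex on every source-to-$v$ shortest path, so there are at most $k_n$ of them, and since $k_n\le \lceil n/\rho\rceil k_\rho$ this count is already $O(k_\rho n/\rho)$. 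The whole difficulty therefore concentrates on the full-extracts, where I must show their number is $O(k_\rho n/\rho)$, equivalently that the \emph{total} number of vertex extractions over all full steps is $O(k_\rho n)$; the per-vertex bound of $k_n$ from \cref{lem:num-ext} is too weak here, so an aggregate argument is required.

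The structural facts I would establish first are monotonicity properties of the frontier minimum. Let $M_i=\min_{v\in Q}\delta[v]$ at the start of full-step $i$. Using the Dijkstra-style invariant that the globally closest unsettled vertex is always in $Q$ with its correct distance, one obtains that $v^{(i)}:=\arg\min_{v\in Q}\delta[v]$ is permanently settled in step $i$ with $d(s,v^{(i)})=M_i$, that every vertex with $d(s,\cdot)<M_i$ is already settled, and that $M_1\le M_2\le\cdots$ since edge weights are at least $1$. Now I would invoke undirectedness together with the \krhograph{} property: the $\rho$ nearest vertices of $v^{(i)}$ form a ball $B_i$ with $|B_i|\ge \rho$, each $u\in B_i$ satisfies $d(v^{(i)},u)\le r_\rho(v^{(i)})$, and by the triangle inequality $d(s,u)\le M_i+r_\rho(v^{(i)})$; moreover, by the $(k_\rho,\rho)$-property each such $u$ is reachable from $v^{(i)}$ within $k_\rho$ hops along a path whose vertices lie in the same distance band. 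Because each step relaxes exactly one hop and $M$ is monotone, I would argue that once $v^{(i)}$ is settled the entire ball $B_i$ becomes settled within $O(k_\rho)$ further steps (the wavefront sweeps past distance $M_i+r_\rho(v^{(i)})$ in that many hop-propagations). Charging each full-step to the $\ge\rho$ vertices of the ball it helps settle, and amortizing so that each vertex is charged for only $O(k_\rho)$ steps, yields $\rho\cdot S_{\mathrm{full}}=O(k_\rho n)$ and hence $S_{\mathrm{full}}=O(k_\rho n/\rho)$.

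The main obstacle is exactly this charging/amortization, because \ssspours{} performs no Bellman-Ford substeps: at the moment $v^{(i)}$ is settled its ball $B_i$ is not yet populated in $Q$, and the $\rho$ vertices extracted in a given step are chosen globally by tentative distance, so the vertices of $B_i$ must compete with vertices of other, overlapping balls for the $\rho$ available extraction slots. The delicate part is to show that this competition still guarantees $\Omega(\rho)$ genuine (first-time) settlements per $O(k_\rho)$ consecutive steps without double-counting vertices shared among balls; here undirected symmetry is essential, since it is what forces the $\rho$-nearest neighbors of a near-source vertex to themselves be near the source, and hence to be swept in soon rather than lingering in the queue. Once the full-extract count $O(k_\rho n/\rho)$ is in hand, combining it with the partial-extract bound gives the stated $O(k_\rho n/\rho)$ total number of steps, and feeding $S=O(k_\rho n/\rho)$ together with $U=O(k_n m)$ into \cref{lem:distribute} recovers the corresponding undirected work and span entries.
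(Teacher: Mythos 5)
Your decomposition into full- and partial-extracts and your instinct about where undirectedness enters are reasonable, but the proposal has a genuine gap at exactly the point you flag yourself: the charging argument for full-extracts is never established, and the intermediate claim it rests on --- that once $v^{(i)}$ is settled, its ball $B_i$ of $\rho$ nearest vertices is settled within $O(k_\rho)$ further steps --- is false as stated. Nothing in the $(k_\rho,\rho)$ property bounds the number of vertices whose distance from $s$ lies below $M_i + r_\rho(v^{(i)})$; there can be $N \gg \rho k_\rho$ such vertices, all competing for the $\rho$ extraction slots of each step, so the far side of $B_i$ can wait $\Omega(N/\rho)$ steps before it is ever extracted. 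An amortized version of this claim is essentially the theorem itself, so deferring it as ``the delicate part'' leaves the proof unfinished. A second, smaller problem: your partial-extract bound invokes $k_n\le \lceil n/\rho\rceil k_\rho$, which is not proved in the paper for undirected $(k_\rho,\rho)$-graphs and is in fact a corollary of \cref{thm:rho-steps-un} (a vertex settled at step $T$ is within $T$ hops of $s$), so using it as an ingredient risks circularity.

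The paper closes the gap with a different organizing object: instead of balls around successive frontier minima, it inducts on the prefixes $\mathcal{T}_{t\rho}$ of the $t\rho$ vertices nearest to $s$, claiming that all of $\mathcal{T}_{t\rho}$ is settled and has relaxed its neighbors by step $(2k_\rho+3)t$. Two facts drive the induction step. First, an exchange argument using the undirected triangle inequality (\cref{lem:nonn1,lem:nonn2}) shows that the $\rho$-neighborhood $\mathcal{N}_\rho(v_j)$ of the vertex $v_j$ sitting $k_\rho+2$ hops beyond the settled prefix on a fewest-hop shortest path is disjoint from $\mathcal{T}_{t\rho}$; a counting argument then forces every vertex of $\mathcal{T}_{(t+1)\rho}\setminus\mathcal{T}_{t\rho}$ to lie within $2k_\rho+2$ hops of the settled prefix. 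Second --- and this is the mechanism your proposal lacks --- the inductive hypothesis caps the competition: every unsettled vertex closer to $s$ than a relevant path vertex lies in $\mathcal{T}_{(t+1)\rho}\setminus\mathcal{T}_{t\rho}$, a set of size at most $\rho$, so those path vertices can never be crowded out of an extraction; each is extracted in the very next step and the wavefront advances one hop per step (\cref{lem:updaterho}). Competition is thus controlled by an invariant rather than by amortization, and the double-counting question you raise never arises. If you want to salvage your outline, replace ``ball of the current frontier minimum'' by ``next $\rho$ vertices in distance order from $s$'' and prove your sweeping claim relative to that prefix structure --- which is precisely the paper's proof.
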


In real-world graphs, we usually have $k_\rho\ll \rho$ for large $\rho$.
Hence, by picking a large $\rho$, say $n/\log n$, \SSSPalgo only requires a small number of rounds and provides ample parallelism.
As a comparison, \radiuss requires $O\left(\frac{k_{\rho}n}{\rho}\log\rho L\right)$ steps, a factor of $O(\log\rho L)$ more for the worst-case guarantee. 

\ifx\fullversion\undefined
Due to the space limit, we show the proof in the full version of this paper, and only provide our proof sketch here.
We will show that after step $(2k_\rho+3)t$ for $t\ge 1$, \SSSPalgo{} will successfully settle at least the closest $t\rho$ vertices from $s$ and relax their neighbors.
This means we need $n/\rho\cdot O(k_\rho)=O\left(k_{\rho}n/\rho\right)$ steps.
We will show this by induction.
We note that the base case trivially holds when $t=1$, since $s$ can reach $\rho$ closest vertices in $k_{\rho}$ hops.
Assume this is true for $t$, we will show that this is also true for $t+1$.

More specifically, let $\mathcal{N}_{\rho}(u)$ be the set of $\rho$-nearest vertices from vertex $u$.
For simplicity, let $\mathcal{T}_\rho=\mathcal{N}_{\rho}(s)$ where $s$ is the source vertex.
The inductive hypothesis assumes that vertices in $\mathcal{T}_{t\rho}$ are settled.
We then show that within the next $2k_\rho+2$ steps, all vertices in $\mathcal{T}_{(t+1)\rho}\setminus\mathcal{T}_{t\rho}$ are settled (updated to the exact distance).

Let $v\in \mathcal{T}_{(t+1)\rho}\setminus\mathcal{T}_{t\rho}$ and $P=\{s=v_0, v_1, v_2, \dots, v_l=v\}$ be a the shortest path from $s$ to $v$ with the fewest hops.
Assume all vertices from $v_0$ to $v_i$ are in $\mathcal{T}_{t\rho}$, and beyond $v_i$ all vertices are not in $\mathcal{T}_{t\rho}$.
We show that $v$ is within $2k+2$ hops from $v_i$.
%
More details will be given in the full version of this paper.

\else
Our proof sketch is as follows.
We will show that after step $(2k_\rho+3)t$ for $t\ge 1$, \SSSPalgo{} will successfully settle at least the closest $t\rho$ vertices from $s$ and relax their neighbors.
We will show this by induction.
We note that the base case trivially holds when $t=1$, since $s$ can reach $\rho$ closest vertices in $k$ hops.
Assume this is true for $t$, we will show that this is also true for $t+1$.

We start with some notations.
Let $\mathcal{N}_{\rho}(u)$ be the set of $\rho$-nearest vertices from vertex $u$.
For simplicity, let $\mathcal{T}_\rho=\mathcal{N}_{\rho}(s)$ where $s$ is the source vertex.
The inductive hypothesis assumes that vertices in $\mathcal{T}_{t\rho}$ are settled.
We now show that within the next $2k_\rho+2$ steps, all vertices in $\mathcal{T}_{(t+1)\rho}\setminus\mathcal{T}_{t\rho}$ are settled (updated to the exact distance).

Let $v\in \mathcal{T}_{(t+1)\rho}\setminus\mathcal{T}_{t\rho}$ and $P=\{s=v_0, v_1, v_2, \dots, v_l=v\}$ be a the shortest path from $s$ to $v$ with the fewest hops.
Assume all vertices $v_0$ through $v_i$ are in $\mathcal{T}_{t\rho}$, and beyond $v_i$ all vertices are not in $\mathcal{T}_{t\rho}$.
We will first show that $v$ is within $2k+2$ hops from $v_i$.

Throughout the analysis, we consider $v_j$ on path $P$ where $j=i+k+2$ as a special vertex, if it exists. If not, it directly means that $v$ is no more than $k+2$ hops away from $v_i$.
We use the neighbor set of $v_j$ to show $l\le i+2k_\rho+3$ and complete the proof.
To start, we show the following lemma.
  \begin{lemma}
  \label{lem:nonn1}
  $v_{i+1}$ is not in $\mathcal{N}_{\rho}(v_j)$.
  \end{lemma}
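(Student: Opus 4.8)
The plan is to combine the defining property of the $(k_\rho,\rho)$-graph with a hop-distance argument along the fewest-hops shortest path $P$. First I would pin down the hop distance between $v_{i+1}$ and $v_j$. Since $j=i+k_\rho+2$, the subpath $(v_{i+1},\dots,v_j)$ of $P$ has exactly $k_\rho+1$ edges, and I claim $\hat{d}(v_{i+1},v_j)=k_\rho+1>k_\rho$, i.e., $v_{i+1}$ is \emph{strictly} more than $k_\rho$ hops from $v_j$. This follows by a splicing argument: a subpath of a shortest path is a shortest path, so $(v_{i+1},\dots,v_j)$ is a shortest $v_{i+1}$--$v_j$ path; if some shortest $v_{i+1}$--$v_j$ path used fewer than $k_\rho+1$ edges, substituting it into $P$ would produce an $s$--$v$ walk of the same total weight $d(s,v)$ (because $v_{i+1}$ and $v_j$ lie on the shortest path $P$), and since positive edge weights force any shortest walk to be a simple path, this yields a shortest $s$--$v$ path with strictly fewer than $l$ hops, contradicting the choice of $P$ as the fewest-hops shortest path.

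Next I would invoke the definitions directly. Because $v_{i+1}$ is more than $k_\rho$ hops away from $v_j$, the definition of $\bar{r}_{k_\rho}(v_j)$ (the shortest distance from $v_j$ to a vertex strictly more than $k_\rho$ hops away) gives $d(v_j,v_{i+1})\ge \bar{r}_{k_\rho}(v_j)$. Applying the $(k_\rho,\rho)$-graph property at $v_j$ gives $r_\rho(v_j)\le \bar{r}_{k_\rho}(v_j)$. Chaining these yields $d(v_j,v_{i+1})\ge \bar{r}_{k_\rho}(v_j)\ge r_\rho(v_j)$. Since every vertex of $\mathcal{N}_{\rho}(v_j)$ lies at distance at most $r_\rho(v_j)$ and can be realized within $k_\rho$ hops, whereas $v_{i+1}$ lies at distance at least $r_\rho(v_j)$ and strictly more than $k_\rho$ hops away, we conclude $v_{i+1}\notin \mathcal{N}_{\rho}(v_j)$.

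The main obstacle I anticipate is the boundary case $d(v_j,v_{i+1})=r_\rho(v_j)$: the inequality chain only gives $d\ge r_\rho$, so I must rule out $v_{i+1}$ being a \emph{tied} $\rho$-th nearest neighbor. The hop-distance bound is exactly what resolves this. The $(k_\rho,\rho)$ property guarantees that there are $\rho$ candidate vertices all within $k_\rho$ hops at distance at most $r_\rho(v_j)$, so $\mathcal{N}_{\rho}(v_j)$ can be, and under the tie-breaking convention used to define it is, selected among within-$k_\rho$-hop vertices; the $(k_\rho{+}1)$-hop vertex $v_{i+1}$ is therefore excluded even in the degenerate tie. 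I would state this convention explicitly. Aside from this tie handling, every step is a direct substitution into the definitions of $\hat{d}$, $\bar{r}_{k_\rho}$, $r_\rho$, and $\mathcal{N}_{\rho}$, so the splicing claim and the tie case are the only places requiring care.
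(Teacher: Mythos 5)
Your proof is correct and takes essentially the same route as the paper's: the paper likewise argues that the fewest-hops subpath from $v_{i+1}$ to $v_j$ has $k_\rho+1$ edges, so $v_{i+1}$ lies more than $k_\rho$ hops from $v_j$, while the $(k_\rho,\rho)$-graph property confines $\mathcal{N}_{\rho}(v_j)$ to vertices within $k_\rho$ hops. The only difference is that you make explicit two steps the paper leaves implicit---the splicing argument showing the subpath is itself a fewest-hops shortest path, and the handling of ties at distance exactly $r_\rho(v_j)$.
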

  \begin{proof}
  From the definition of \krhograph, all vertices in $\mathcal{N}_{\rho}(v_j)$ are within $k$ hops from $v_j$.
  Since the shortest path with fewest hops from $v_{i+1}$ to $v_j$ has $k+1$ hops, $v_{i+1}$ cannot be in $\mathcal{N}_{\rho}(v_j)$.
  \end{proof}

We now show the following lemma that $\mathcal{T}_{t\rho} \cap \mathcal{N}_{\rho}(v_j)=\varnothing$.
It says that no vertices in $\mathcal{T}_{t\rho}$ are close enough to be processed in the previous steps.
We use \cref{lem:nonn1} and $v_{i+1}$ as a separating vertex.

  \begin{lemma}
  \label{lem:nonn2}
  None of the vertices in $\mathcal{T}_{t\rho}$ is in $\mathcal{N}_{\rho}(v_j)$.
  \end{lemma}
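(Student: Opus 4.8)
The plan is to reduce the disjointness claim to a single distance comparison: I will show that every $u\in\mathcal{T}_{t\rho}$ satisfies $d(v_j,u)\ge d(v_j,v_{i+1})$, and then invoke \cref{lem:nonn1}, which already certifies $v_{i+1}\notin\mathcal{N}_{\rho}(v_j)$. Since $\mathcal{N}_{\rho}(v_j)$ consists of the $\rho$ vertices nearest to $v_j$, a vertex that is no closer to $v_j$ than the already-excluded $v_{i+1}$ must itself be excluded; hence $\mathcal{T}_{t\rho}\cap\mathcal{N}_{\rho}(v_j)=\varnothing$, which is exactly the statement. In this reduction $v_{i+1}$ plays the role of a separating vertex: it sits just past the boundary of $\mathcal{T}_{t\rho}$ on the path $P$, and \cref{lem:nonn1} guarantees it is far enough from $v_j$.

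For the distance comparison, fix $u\in\mathcal{T}_{t\rho}$. First I would use that $P$ is a shortest path, so its prefix $s=v_0,\dots,v_j$ is also a shortest path and therefore $d(s,v_j)=d(s,v_{i+1})+d(v_{i+1},v_j)$. Next, because $u$ is among the $t\rho$ closest vertices to $s$ while $v_{i+1}$ lies beyond $v_i$ on $P$ and is thus \emph{not} in $\mathcal{T}_{t\rho}$, the defining property of $\mathcal{T}_{t\rho}=\mathcal{N}_{t\rho}(s)$ gives $d(s,u)\le d(s,v_{i+1})$. Finally, since $G$ is undirected, $d$ is a genuine metric with $d(v_j,u)=d(u,v_j)$, so the triangle inequality yields $d(v_j,u)\ge d(s,v_j)-d(s,u)$. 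Chaining these three facts,
\[
d(v_j,u)\ \ge\ d(s,v_j)-d(s,u)\ \ge\ d(s,v_j)-d(s,v_{i+1})\ =\ d(v_{i+1},v_j)\ =\ d(v_j,v_{i+1}),
\]
as required. I want to stress that undirectedness is used twice here: to identify $d(v_j,u)$ with $d(u,v_j)$, and to have the symmetric metric inequality, which is precisely why this bound (and the resulting span in \cref{thm:rho-steps-un}) is stated for undirected graphs.

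The step I expect to be the main obstacle is the treatment of ties: the chain above only gives $d(v_j,u)\ge d(v_j,v_{i+1})$, possibly with equality, whereas membership in $\mathcal{N}_{\rho}(v_j)$ forces a definite choice among equidistant vertices. I would handle this by adopting the same consistent tie-breaking convention (e.g.\ by vertex identifier) used throughout the \krhograph{} analysis, so that any vertex no closer to $v_j$ than the excluded $v_{i+1}$ is also excluded from $\mathcal{N}_{\rho}(v_j)$; the same convention must be applied in the inequality $d(s,u)\le d(s,v_{i+1})$ drawn from $\mathcal{T}_{t\rho}$. Two minor bookkeeping points remain to be checked along the way, both immediate from the standing setup: that $v_{i+1}$ indeed lies outside $\mathcal{T}_{t\rho}$ (it is the first off-$\mathcal{T}_{t\rho}$ vertex of $P$ by the definition of $i$), and that the special vertex $v_j$ with $j=i+k+2$ actually exists on $P$ (the non-existence case having already been dispatched by observing $v$ is then within $k+2$ hops of $v_i$).
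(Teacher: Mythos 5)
Your proposal is correct and is essentially the paper's own proof written in contrapositive form: the paper assumes $u\in\mathcal{T}_{t\rho}\cap\mathcal{N}_{\rho}(v_j)$ and combines \cref{lem:nonn1}, the comparison $d(s,u)<d(s,v_{i+1})$ coming from $\mathcal{T}_{t\rho}$-membership, and the decomposition $d(s,v_j)=d(s,v_{i+1})+d(v_{i+1},v_j)$ to contradict the triangle inequality, which is exactly your chain of inequalities rearranged. The only real difference is that the paper sidesteps your tie-breaking concern by asserting both comparisons as strict inequalities (implicitly assuming distinct distances), whereas your appeal to a consistent identifier-based tie-break does not actually resolve the equality case (an identifier favoring $u$ over $v_{i+1}$ in both orderings is precisely the problematic configuration), so neither write-up handles ties rigorously and yours is no weaker than the paper's.
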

  \begin{proof}
    Assume to the contrary that there exists a vertex $u\in \mathcal{T}_{t\rho}\cap \mathcal{N}_{\rho}(v_j)$.
    Then we know $d(s,u)<d(s,v_{i+1})$, since $u$ is one of the $t\rho$ closest vertices from $s$ ($u\in \mathcal{T}_{t\rho}$) but $v_{i+1}$ is not.
    From \cref{lem:nonn1}, we know $v_{i+1}\notin \mathcal{N}_{\rho}(v_j)$.
    This means that if $u\in \mathcal{N}_{\rho}(v_j)$, then $d(u,v_j)<d(v_{i+1}, v_j)$.
    Combining the above two conclusions, we know that $d(s,u)+d(u,v_j)<d(s,v_{i+1})+d(v_{i+1},v_j)$, which leads to a contradiction that $v_{i+1}$ is on the shortest path from $s$ to $v_j$.
  \end{proof}

  \cref{lem:nonn2} says none of the $\rho$ closest vertices of $v_j$ are in $\mathcal{T}_{t\rho}$.
  We next show that for all vertices in $\mathcal{T}_{(t+1)\rho}\setminus\mathcal{T}_{t\rho}$ are close to $v_j$.

  \begin{lemma}
  For any $u\in \mathcal{T}_{(t+1)\rho}\setminus\mathcal{T}_{t\rho}$, it is either $v_{i+1}$, or it is within $k$ hops from $v_j$.
  \end{lemma}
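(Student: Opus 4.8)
The plan is to convert the hop bound into a membership statement and then run a counting argument. By the definition of a \krhograph, every vertex of $\mathcal{N}_{\rho}(v_j)$ lies within $k$ hops of $v_j$ (here $k=k_\rho$), so it suffices to prove that each $u\in\mathcal{T}_{(t+1)\rho}\setminus\mathcal{T}_{t\rho}$ other than $v_{i+1}$ belongs to $\mathcal{N}_{\rho}(v_j)$. I would argue by contradiction: assume a new vertex $u\neq v_{i+1}$ with $u\notin\mathcal{N}_{\rho}(v_j)$, and exhibit $(t+1)\rho$ vertices that are all strictly closer to $s$ than $u$, contradicting $u\in\mathcal{T}_{(t+1)\rho}$.

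The engine is a count over the two sets $\mathcal{T}_{t\rho}$ and $\mathcal{N}_{\rho}(v_j)$, which are disjoint by \cref{lem:nonn2}, so together they contain $t\rho+\rho=(t+1)\rho$ vertices. Each vertex of $\mathcal{T}_{t\rho}$ is closer to $s$ than the new vertex $u$ by the definition of the shells. For $w\in\mathcal{N}_{\rho}(v_j)$ I would use the triangle inequality $d(s,w)\le d(s,v_j)+d(v_j,w)$, together with $d(v_j,w)\le r_{\rho}(v_j)<d(v_j,u)$ (the latter because $u\notin\mathcal{N}_{\rho}(v_j)$). When $v_j$ lies on a shortest $s$–$u$ path this gives $d(s,w)<d(s,v_j)+d(v_j,u)=d(s,u)$, completing the count and forcing $u\in\mathcal{N}_{\rho}(v_j)$.

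The additive identity $d(s,u)=d(s,v_j)+d(v_j,u)$ holds exactly when $v_j$ sits on a fewest-hop shortest path from $s$ to $u$. This is immediate for $u=v$ and for every $v_m$ with $m\ge j$, since $v_j=v_{i+k+2}$ lies on the path $P$. Consequently $v\in\mathcal{N}_{\rho}(v_j)$, hence $\hat d(v_j,v)\le k$; and because the subpath of $P$ from $v_j$ to $v$ is itself a fewest-hop shortest path, this yields $l-j\le k$, i.e.\ $l\le i+2k+2$, which is the quantitative consequence the surrounding argument needs.

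The main obstacle is extending this to an arbitrary new vertex $u$ whose shortest path need not pass through $v_j$, where the triangle inequality points the wrong way and the per-vertex count stalls. To handle this I would first derive a thinness bound on the ball around $v_j$: since $v_{i+1}$ lies on the fewest-hop shortest $s$–$v_j$ path and $v_{i+1}\notin\mathcal{N}_{\rho}(v_j)$ by \cref{lem:nonn1}, we get $r_{\rho}(v_j)\le d(v_j,v_{i+1})=d(s,v_j)-d(s,v_{i+1})<r_{(t+1)\rho}(s)-r_{t\rho}(s)$, so $\mathcal{N}_{\rho}(v_j)$ has radius smaller than the width of the current shell. I would then compare $\mathcal{N}_{\rho}(v_j)$ and $\mathcal{T}_{(t+1)\rho}\setminus\mathcal{T}_{t\rho}$ globally: both have size $\rho$, both are disjoint from $\mathcal{T}_{t\rho}$, and the goal is to show they share all but one vertex, the single exception being $v_{i+1}$. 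The delicate point I expect to fight with is proving that at most one neighbor in $\mathcal{N}_{\rho}(v_j)$ can escape the shell $\mathcal{T}_{(t+1)\rho}$, which, via \cref{lem:nonn1}, pins that escaping-counterpart exception to be exactly $v_{i+1}$ and thereby establishes the claimed dichotomy for every new $u$.
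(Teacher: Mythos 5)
Your first three paragraphs are the paper's own proof, made more careful: argue by contradiction, count the disjoint union $\mathcal{T}_{t\rho}\sqcup\mathcal{N}_{\rho}(v_j)$ (disjointness is \cref{lem:nonn2}) against $u\in\mathcal{T}_{(t+1)\rho}$, and close the count by routing the triangle inequality through $v_j$. The caveat you state explicitly---that $d(s,u')<d(s,u)$ for all $u'\in\mathcal{N}_{\rho}(v_j)$ is only available when $v_j$ lies on a shortest $s$--$u$ path---is exactly the step the paper asserts with no justification (``any vertex $u'\in \mathcal{N}_{\rho}(v_j)$ should be closer to $s$ than $u$''). Restricted to $u=v$ (or any $v_m$ with $m\ge j$), your argument is complete, and it yields $l\le i+2k+2$, which is the only consequence of this lemma that the proof of \cref{thm:rho-steps-un} actually uses.

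The plan in your last paragraph, however, cannot be carried out, because the lemma as stated---quantified over \emph{every} $u$ in the shell---is false. First, the target ``at most one vertex of $\mathcal{N}_{\rho}(v_j)$ escapes $\mathcal{T}_{(t+1)\rho}$'' has the wrong sign: the shell has exactly $\rho$ elements and contains $v_{i+1}\notin\mathcal{N}_{\rho}(v_j)$ by \cref{lem:nonn1}, so at least one vertex of $\mathcal{N}_{\rho}(v_j)$ \emph{must} escape, and nothing caps the number of escapees at one; each extra escapee frees a slot in the shell for a vertex far from $v_j$. A concrete family (take $\rho=8$): a long unit-weight path $s,v_1,v_2,\dots$ with a small clique glued to its far end, and a star glued to $s$ with hub $u_0$ and five leaves, four placed just beyond $v_2$ in distance and one leaf $u^{*}$ placed between $v_5$ and $v_6$. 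The weights can be chosen so that $k_\rho=4$, so that $\mathcal{T}_{8}$ consists of $v_1,v_2,v_3$ together with $u_0$ and its four near leaves, and so that the shell is $\{v_4,\dots,v_{10},u^{*}\}$. Taking $v=v_{10}$ gives $i=3$ and $v_j=v_9$, and $\mathcal{N}_{8}(v_9)=\{v_5,\dots,v_8,v_{10},\dots,v_{13}\}$: three of its members ($v_{11},v_{12},v_{13}$) escape $\mathcal{T}_{16}$, and the shell vertex $u^{*}$ is neither $v_{i+1}=v_4$ nor within $k=4$ hops of $v_9$ (it is $11$ hops away), so the claimed dichotomy fails for $u^{*}$. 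The repair is therefore not to push your global comparison through, but to weaken the statement to the case you already proved---vertices $u$ whose fewest-hop shortest path from $s$ passes through $v_j$, in particular $u=v$---which is true, provable by your paragraph-two argument, and sufficient for \cref{thm:rho-steps-un}.
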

  \begin{proof}
    Again, assume to the contrary that $u$ is not within $k$ hops from $v_j$ and is not $v_{i+1}$.
    In that case, $u$ is not in $\mathcal{N}_{\rho}(v_j)$.
    In that case, any vertex $u'\in \mathcal{N}_{\rho}(v_j)$ should be closer to $s$ than $u$.
    Since $u\in \mathcal{T}_{(t+1)\rho}$, any $u'\in \mathcal{N}_{\rho}(v_j)$ should also be in $\mathcal{T}_{(t+1)\rho}$.
    Lemma \ref{lem:nonn2} shows that none of $\rho$ vertices in $\mathcal{N}_{\rho}(v_j)$ is in $\mathcal{T}_{t\rho}$. Therefore, all the $\rho$ vertices in $\mathcal{N}_{\rho}(v_j)$ and $u$ should be in $\mathcal{T}_{(t+1)\rho}\setminus\mathcal{T}_{t\rho}$, which indicates that $|\mathcal{T}_{(t+1)\rho}\setminus\mathcal{T}_{t\rho}|>\rho$, leading to a contradiction.
  \end{proof}

As a result, we know that $v\in \mathcal{T}_{(t+1)\rho}\setminus\mathcal{T}_{t\rho}$ is at most $2k+2$ hops away from a vertex in $\mathcal{T}_{t\rho}$, i.e., $l\le i+2k+2$.
Recall that $j=i+k+2$, so within $2k+2$ hops from $v_i$, we can get the shortest distance of any $v\in \mathcal{T}_{(t+1)\rho}\setminus\mathcal{T}_{t\rho}$.

  \begin{lemma}
  \label{lem:updaterho}
  After step $(2k_\rho+3)t+h+1$, vertex $v_{i+h}$ must have been settled and have relaxed all its neighbors for $h\le (l-i)$.
  \end{lemma}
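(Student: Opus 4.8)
The plan is to prove \cref{lem:updaterho} by an inner induction on $h$, showing that the fewest-hops shortest-path suffix $v_{i+1}, v_{i+2}, \dots, v_l$ gets settled at a rate of one vertex per step. For the base case $h=0$, I would simply invoke the outer inductive hypothesis of \cref{thm:rho-steps-un}: since $v_i \in \mathcal{T}_{t\rho}$, it is already settled and has relaxed all its neighbors by step $(2k_\rho+3)t \le (2k_\rho+3)t + 1$, as required.

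For the inductive step, assume $v_{i+h}$ is settled and has relaxed all its neighbors by step $(2k_\rho+3)t+h+1$. I would first argue that $v_{i+h+1}$ then sits in $Q$ with its \emph{exact} distance. Because $v_{i+h+1}\in N(v_{i+h})$, the \WriteMin{} on \cref{line:writemin} of \cref{alg:framework} sets $\delta[v_{i+h+1}]\le \delta[v_{i+h}]+w(v_{i+h},v_{i+h+1}) = d(s,v_{i+h})+w(v_{i+h},v_{i+h+1})$, and the subsequent \lazyinsert{} places $v_{i+h+1}$ into $Q$. Since $P$ is a shortest path, $d(s,v_{i+h})+w(v_{i+h},v_{i+h+1})=d(s,v_{i+h+1})$, so combined with the invariant $\delta[\cdot]\ge d(s,\cdot)$ we get $\delta[v_{i+h+1}]=d(s,v_{i+h+1})$.

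The crux, and what I expect to be the main obstacle, is then to show that $v_{i+h+1}$ is actually extracted in the \emph{very next} step $(2k_\rho+3)t+h+2$. I would do this with a counting argument: consider any vertex $u\in Q$ at that step with $\delta[u]\le d(s,v_{i+h+1})$. Since $\delta[u]\ge d(s,u)$, we have $d(s,u)\le d(s,v_{i+h+1})\le d(s,v)$, so $u$ is at least as close to $s$ as $v$ and hence $u\in\mathcal{T}_{(t+1)\rho}$; moreover $u\notin \mathcal{T}_{t\rho}$, because by the outer hypothesis every vertex of $\mathcal{T}_{t\rho}$ has already been extracted and permanently removed from $Q$. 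Thus every such $u$ lies in the set $\mathcal{T}_{(t+1)\rho}\setminus\mathcal{T}_{t\rho}$, which has size exactly $\rho$. Consequently at most $\rho$ vertices of $Q$ have tentative distance $\le d(s,v_{i+h+1})$, so $v_{i+h+1}$ ranks within the $\rho$ smallest keys, falls at or below the extraction threshold $\theta$ (the $\rho$-th smallest key), and is extracted. Being extracted while holding its exact distance, $v_{i+h+1}$ is settled and relaxes all its neighbors, establishing the claim for $h+1$.

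The difficulty of the counting step is that it hinges on two previously established facts: that all of $v_{i+1},\dots,v_l$ belong to the size-$\rho$ set $\mathcal{T}_{(t+1)\rho}\setminus\mathcal{T}_{t\rho}$, and that settled vertices never re-enter $Q$, so no stale vertices can crowd $v_{i+h+1}$ out of the top $\rho$. I would also handle two small technicalities: ties in tentative distance (harmless, since all keys $\le\theta$ are extracted, so $v_{i+h+1}$ is taken whenever its rank is at most $\rho$), and the case of a partial-extract with $|Q|<\rho$ (which extracts all of $Q$ and so extracts $v_{i+h+1}$ as well). Finally, plugging the bound $l\le i+2k_\rho+2$ into \cref{lem:updaterho} shows $v=v_l$ is settled by step $(2k_\rho+3)t+(l-i)+1\le (2k_\rho+3)(t+1)$, which is exactly what the outer induction of \cref{thm:rho-steps-un} needs.
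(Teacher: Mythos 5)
Your proof is correct and follows essentially the same route as the paper's: an inner induction along the path suffix $v_{i+1},\dots,v_l$, where each $v_{i+h+1}$ receives its exact distance from the relaxation by $v_{i+h}$ and is then extracted in the very next step because at most $\rho$ unsettled vertices (all lying in $\mathcal{T}_{(t+1)\rho}\setminus\mathcal{T}_{t\rho}$, a set of size $\rho$) can have keys at most $\delta[v_{i+h+1}]$. You simply make explicit several details the paper leaves implicit --- the invariant $\delta[\cdot]\ge d(s,\cdot)$, the fact that settled vertices never re-enter $Q$, ties, and partial extracts --- which strengthens rather than changes the argument.
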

  \begin{proof}
  The inductive hypothesis indicates that $v_i$ must have relaxed all its neighbors in the first $(2k_\rho+3)t$ steps. Therefore, as its neighbor, $v_{i+1}$ should be settled at step $(2k_\rho+3)t+1$.

  Next, we show that $v_{i+1}$ will be extracted from the \BDPQ{} to update all its neighbors no later than step $(2k_\rho+3)t+2$.
  We know that $v_{i+1}$ is no farther than $v$ from $s$, and $v$ is in $\mathcal{T}_{(t+1)\rho}\setminus\mathcal{T}_{t\rho}$.
  This means that there cannot be at least $\rho$ unsettled vertices in the frontier that are closer to $s$ than $v_{i+1}$, so $v_{i+1}$ will be extracted.
  Similarly, $v_{i+2}$ will be settled in step $(2k_\rho+3)t+2$. We can show this inductively that the lemma holds for all $h\le l-i$.
  \end{proof}
  Plugging in $h=l-i$, we can know that $v=v_l$ must have been settled and relaxed all its neighbors before step $(t+1)(2k+3)$.
  With \cref{lem:updaterho}, we directly get \cref{thm:rho-steps-un}.

Lastly, we note that \cref{thm:rho-steps-un} is an upper bound.
\cref{lem:updaterho} shows that $v_{i+h}$ is settled no later than step $(2k_\rho+3)t+h+1$, but it can be settled earlier.
This is shown in \cref{fig:visted-per-step} by our experiment, and on real-world graphs, the number of steps is very small.
\fi

\subsection{Comparisons and Discussions}\label{sec:discussion}

For \ssspours, the number of total steps is $O(k_\rho n/\rho)$ for undirected graphs and $O(k_n n/\rho)$ for directed graphs.
The undirected case is a factor of $O(\log \rho L)$ better than \radiuss (\radiuss does not have non-trivial span bound on direct graphs).
The work bound is off by a factor of $k_n/k_\rho$ on undirected graphs, but it applies to directed graphs.
Also, our experiments show that, on social and web graphs, $k_n/k_\rho$ is usually small (\cref{fig:k-rho}) for reasonably large values of $\rho$ (e.g., $\rho > \sqrt{n}$).

Both \ssspours and \deltass focus on practical considerations.
Since in practice we usually pick a large $\rho$, the number of steps is small. This leads to a small overhead for step-based synchronization.
\cref{thm:deltass} show that \deltass only incurs a factor of $1+L/\Delta$ more steps (recall $L=\max w(e)$) than Bellman-Ford, upper bounding the synchronization cost in practice (\cref{fig:visted-per-step}).
Regarding work, \cref{thm:rho-steps,thm:deltass} show that both \tourt{}-based and array-based versions are efficient when using proper parameters of $\rho$ and $\Delta$.
Exactly in our experiments, the best values $\rho$ and $\Delta$ match the analysis here (e.g., a large $\rho$ on social networks).
We note that Bellman-Ford has better work and span than both \ssspours and \deltass.
In fact, it seems hard to beat the work and span of Bellman-Ford (parameterized on $k_n$) if no shortcut edges are allowed. 
Our analysis provides worst-case guarantees for \ssspours and \deltass, and they seem good for the $(k_{\rho},\rho)$ parameters of many real-world graphs.
In practice, both \ssspours{} and \deltass{} exhibit better performance than Bellman-Ford because of visiting fewer vertices and edges (more efficient ``work'').
Since analyzing SSSP algorithms based on \krhograph is new, many interesting questions remain open.

The work for \radiuss and Shi-Spencer can be improved by at most a logarithmic term.

\section{Implementation Details}\label{sec:implementation}
\label{sec:impl-bdpq}

We implemented three algorithms in the stepping algorithm framework: \ssspours{}, \deltass{}, and Bellman-Ford, all using array-based \BDPQ{}.
Our implementations are simple, and are unified for the three algorithms (we only need to change \extcond and \finishcond accordingly, as shown in \cref{tab:framework}).
We present some useful optimizations we used in our implementation. Most of them apply to all the three algorithms.
Our code is available at: \url{https://github.com/ucrparlay/Parallel-SSSP}.

\myparagraph{Sparse-dense optimization.}
We use sparse-dense optimization similar to Ligra~\cite{ShunB2013}.
When the current frontier is small (sparse mode), we explicitly maintain an array of vertices as the frontier.
Otherwise (dense mode), we use an array of $n$ bit flags to indicate whether each vertex is in the current frontier, and skip those not in the frontier when processing them.
The dense mode has a more cache-friendly access pattern, and avoids explicitly maintaining the frontier array, but always needs $O(n)$ time to check all vertices.
Hence, the sparse mode is used when the frontier size is small than a certain threshold.

\hide{\xiaojun{This is not used in the last version}
Our additional optimization is that we can switch from the sparse mode to the dense mode within a step. Once we detect that the next frontier size exceeds a fixed threshold, we start to only mark the bit flags instead of scattering the vertices to the next frontier (see details in the next paragraph). This saves much work in the last sparse step before dense steps.
}

\myparagraph{Queue size estimation and scattering.}
One challenge in the sparse mode is maintaining the frontier array since the size can change dramatically during the execution.
Some existing implementations (e.g., Ligra) use a parallel pack to generate the next frontier sequence, which scans all edges incident the current frontier for two rounds (one for computing offsets and another round to pack). This can incur a large overhead.
To avoid this, we use a resizable hash table to maintain the next frontier, and scatter the vertices to the next frontier by putting them into random slots in the hash table. In the process of our algorithm, we use sampling to estimate the next frontier size in order to resize the hash table.

\myparagraph{Bidirectional relaxation for undirected graphs.}
We use a novel optimization for undirected graphs.
Before the algorithm relaxes all $v$'s neighbors (\cref{line:innerloop2} in \cref{alg:framework}), it first attempts to relax $v$ using all its neighbors.
This aims to update $v$'s distance first, so it will be more ``effective'' when $v$ relaxes other vertices later.
Another reason is that parallel SSSP implementation is usually I/O bounded.
Since in relaxations, we need to check $v$'s neighbors' distances anyway, we can load them to the cache and use them to relax $v$'s tentative distance first with small cost.
This optimization only applies to undirected graphs.

\myparagraph{Threshold estimation for \ssspours.}
\label{sec:impl-theta-sampling}
In both \deltas{} and \radiuss{} (although we did not implement \radiuss{}), the distance threshold can be directly computed.
In \ssspours{}, we need to compute the threshold (the $\rho$-th smallest element in the frontier) in each step.
We use the sampling-based idea as mentioned in \cref{sec:step-algo}.
In particular, at the beginning of \extract{}, we first sample $s=O(n/\rho+\log n)$ uniformly random samples from the current frontier.
Then we sort the samples and pick the threshold from the samples.
Since $s$ is small, this step is sequential and fast.
In \ssspours, if the frontier size is smaller than $\rho$, we pick $\theta$ as the maximum distances in the frontier.

In our experiments, we observe that in \ssspours{}, the threshold estimation in the first dense rounds is usually inaccurate. This is because in the early stage, the $\rho$-th closest distance in the frontier is usually far from the source, and during the relaxation, much more vertices go below this threshold.
Hence, we add a heuristic to adjust the threshold: using 10\% of $\rho$ at the first two dense rounds.

\myparagraph{Large neighbor sets.}
On road networks and the begin and end for all graphs, the frontier and its neighborhood are very small.
Relaxing the neighbors in a round-based manner leads to insufficient workload and thus overhead in the synchronization cost.
To optimize this case, we use a similar ``bucket fusion'' optimization proposed by Zhang et al.~\cite{zhang2020optimizing}, which is later integrated to GAPBS~\cite{beamer2015gap}.
In our \deltastarstepping and \ssspours, when processing a vertex $v$, instead of using $v$'s direct neighbors, we run a local BFS until we reach $t=4096$ vertices (or when the tentative distances reach more than $\theta$). We use these vertices as $v$'s neighborhood $\mathcal{N}(v)$, and update them all.
Note that this information is maintained and processed locally.
As such, we can extend multiple hops in one round.
We apply this optimization in sparse rounds with average edge degree fewer than 20, and thus call them \emph{super sparse} rounds.
This optimization can greatly optimize the performance for road networks, since as shown in \cref{fig:k-rho}, the values of $k_\rho$ on road graphs is large.
The impact on the performance for social and web graphs is smaller since the dense rounds spend the most time.

\section{Experiments}
\label{sec:exp}

\newcommand{\sfgraph}{scale-free network}
\hide{
\begin{table*}[htbp]
  \centering
  \caption{Add caption}
    \begin{tabular}{|c|r|rrr|r|r|r|r|r|r|r|}
    \hline
    \multirow{2}[4]{*}{\textbf{}} & \multicolumn{1}{c|}{\multirow{2}[4]{*}{\textbf{Graph}}} & \multicolumn{1}{c}{\multirow{2}[4]{*}{\textbf{n}}} & \multicolumn{1}{c}{\multirow{2}[4]{*}{\textbf{m}}} & \multicolumn{1}{c|}{\multirow{2}[4]{*}{\textbf{directed?}}} & \multicolumn{4}{c|}{\textbf{\deltas}} & \multicolumn{2}{c|}{\textbf{Bellman-Ford}} & \multicolumn{1}{c|}{\textbf{\ssspours}}\\
\cline{6-12}
& \multicolumn{1}{c|}{} & \multicolumn{1}{c}{} & \multicolumn{1}{c}{} & \multicolumn{1}{c|}{} & \multicolumn{1}{c}{\textbf{GAPBS}} & \multicolumn{1}{c}{\textbf{Julienne}} & \multicolumn{1}{c}{\textbf{\ourdelta}} & \textbf{Galois} & \multicolumn{1}{c}{\textbf{Ligra}} & \multicolumn{1}{c|}{\textbf{\ourbf}} & \multicolumn{1}{c|}{\textbf{\ourrho}} \bigstrut\\
    \hline
    \multirow{5}{*}{\begin{sideways}\textbf{social}\end{sideways}}
    & \textbf{OK} & 3M    & 234M  & undirected & \multicolumn{1}{r}{0.240 } & \multicolumn{1}{r}{0.268 } & \multicolumn{1}{r}{0.156 } & 0.178  & \multicolumn{1}{r}{0.248 } & 0.174  & \textbf{0.138}\\
          & \textbf{LJ} & 4M    & 68M   & directed & \multicolumn{1}{r}{0.062 } & \multicolumn{1}{r}{0.085 } & \multicolumn{1}{r}{} & 0.073  & \multicolumn{1}{r}{0.069 } &       & \textbf{0.057} \\
          & \textbf{TW} & 42M   & 1.47B & directed & \multicolumn{1}{r}{2.416 } & \multicolumn{1}{r}{1.820 } & \multicolumn{1}{r}{1.498 } & 1.218  & \multicolumn{1}{r}{1.548 } & 1.741  & \textbf{1.007} \\
          & \textbf{FD} & 65M   & 3.61B & undirected & \multicolumn{1}{r}{2.954 } & \multicolumn{1}{r}{2.748 } & \multicolumn{1}{r}{3.672 } & 3.120  & \multicolumn{1}{r}{5.122 } & 4.672  & \textbf{2.222} \\
          & \textbf{WB} & 89M   & 2.04B & directed & \multicolumn{1}{r}{0.964 } & \multicolumn{1}{r}{1.034 } & \multicolumn{1}{r}{} & 1.124  & \multicolumn{1}{r}{1.067 } &       & \textbf{0.754}\\
    \hline
    \multirow{2}{*}{\begin{sideways}\textbf{Road}\end{sideways}} & \textbf{GE} & 12M   & 32M   & undirected & \multicolumn{1}{r}{0.221 } & \multicolumn{1}{r}{6.624 } & \multicolumn{1}{r}{} & 0.314  & \multicolumn{1}{r}{} & \multicolumn{1}{r|}{} &  \bigstrut[t]\\
          & \textbf{USA} & 24M   & 58M   & undirected & \multicolumn{1}{r}{0.333 } & \multicolumn{1}{r}{10.155 } & \multicolumn{1}{r}{} & 0.659  & \multicolumn{1}{r}{} & \multicolumn{1}{r|}{} & \\
    \hline
    \end{tabular}%
  \label{tab:addlabel}%
\end{table*}%

}

\hide{ 
\begin{table*}[htbp]
  \centering
  \small
    \begin{tabular}{c|r|r@{}r@{}rr@{}r@{}rr@{}r@{}rr@{}r@{}r|r@{}r@{}r|r@{}r@{}rr@{}r@{}r}
    \hline
    \multicolumn{2}{c|}{\multirow{2}[3]{*}{\textbf{Graph}}} & \multicolumn{12}{c|}{\textbf{Social}}                                                         & \multicolumn{3}{c|}{\textbf{Web}} & \multicolumn{6}{c}{\textbf{Road}}  \bigstrut\\
\cline{3-23}
\multicolumn{2}{c|}{} & \multicolumn{3}{c}{\textbf{OK}} & \multicolumn{3}{c}{\textbf{LJ}} & \multicolumn{3}{c}{\textbf{TW}} & \multicolumn{3}{c|}{\textbf{FD}} & \multicolumn{3}{c|}{\textbf{WB}} & \multicolumn{3}{c}{\textbf{GE}} & \multicolumn{3}{c}{\textbf{USA}} \bigstrut[t]\\
\hline
    \multicolumn{2}{c|}{\textbf{n}} & \multicolumn{3}{c}{3M} & \multicolumn{3}{c}{4M} & \multicolumn{3}{c}{42M} & \multicolumn{3}{c|}{65M} & \multicolumn{3}{c|}{89M} & \multicolumn{3}{c}{12M} & \multicolumn{3}{c}{24M} \\
    \multicolumn{2}{c|}{\textbf{m}} & \multicolumn{3}{c}{234M} & \multicolumn{3}{c}{68M} & \multicolumn{3}{c}{1.47B} & \multicolumn{3}{c|}{3.61B} & \multicolumn{3}{c|}{2.04B} & \multicolumn{3}{c}{32M} & \multicolumn{3}{c}{58M} \\
    \multicolumn{2}{c|}{\textbf{directed?}} & \multicolumn{3}{c}{undirected} & \multicolumn{3}{c}{directed} & \multicolumn{3}{c}{directed} & \multicolumn{3}{c|}{undirected} & \multicolumn{3}{c|}{directed} & \multicolumn{3}{c}{undirected} & \multicolumn{3}{c}{undirected} \\
    \multicolumn{2}{c|}{\textbf{\#Threads}} & \multicolumn{1}{c@{}}{(1)} & \multicolumn{1}{c@{}}{(96h)} & \multicolumn{1}{c}{(SU)} & \multicolumn{1}{c@{}}{(1)} & \multicolumn{1}{c@{}}{(96h)} & \multicolumn{1}{c}{(SU)} & \multicolumn{1}{c@{}}{(1)} & \multicolumn{1}{c@{}}{(96h)} & \multicolumn{1}{c}{(SU)} & \multicolumn{1}{c@{}}{(1)} & \multicolumn{1}{c@{}}{(96h)} & \multicolumn{1}{c|}{(SU)} & \multicolumn{1}{c@{}}{(1)} & \multicolumn{1}{c@{}}{(96h)} & \multicolumn{1}{c|}{(SU)} & \multicolumn{1}{c@{}}{(1)} & \multicolumn{1}{c@{}}{(96h)} & \multicolumn{1}{c}{(SU)} & \multicolumn{1}{c@{}}{(1)} & \multicolumn{1}{c@{}}{(96h)} & \multicolumn{1}{c}{(SU)} \bigstrut[b]\\
    \hline
    \multirow{4}[1]{*}{\begin{sideways}\textbf{$\Delta$-step.}\end{sideways}} & \textbf{gapbs} &       & 0.240  &       &       & 0.062  &       &       & 2.416  &       &       & 2.954  &       &       & 0.964  &       &       & 0.221  &       &       & 0.333  &  \bigstrut[t]\\
          & \textbf{Julienne} &       & 0.268  &       &       & 0.085  &       &       & 1.820  &       &       & 2.748  &       &       & 1.034  &       &       & 6.624  &       &       & 10.155  &  \\
          & \textbf{Galois} &       & 0.178  &       &       & 0.073  &       &       & 1.218  &       &       & 3.120  &       &       & 1.124  &       &       & 0.314  &       &       & 0.659  &  \\
          & \textbf{\ourdelta} &       & 0.156  &       &       &       &       &       & 1.498  &       &       & 3.672  &       &       &       &       &       &       &       &       &       &  \\
    \hline
    \multirow{2}[0]{*}{\begin{sideways}\textbf{BF}\end{sideways}} & \textbf{Ligra} &       & 0.248  &       &       & 0.069  &       &       & 1.548  &       &       & 5.122  &       &       & 1.067  &       &       &       &       &       &       &  \\
          & \textbf{\ourbf} &       & 0.174  &       &       &       &       &       & 1.741  &       &       & 4.672  &       &       &       &       &       &       &       &       &       &  \\
    \hline
    \multirow{2}[0]{*}{\begin{sideways}\textbf{$\rho$-step.}\end{sideways}} & \textbf{\ourrho-fix} &       & 0.138  &       &       & 0.057  &       &       & 1.007  &       &       & 2.222  &       &       & 0.754  & \textbf{} &       &       &       &       &       &  \bigstrut[b]\\
    &\textbf{\ourrho-best} &       & 0.138  &       &       & 0.057  &       &       & 1.007  &       &       & 2.222  &       &       & 0.754  & \textbf{} &       &       &       &       &       &  \bigstrut[b]\\
    \hline
    \end{tabular}%
    \caption{Parallel and sequential running times for all implementations on all graphs. \ourdelta{}, \ourbf{} and \ourrho{} (-fix and -best) are our implementations. For all \deltas{} algorithms, we report the best running time across all values of parameter $\Delta$. For our \ssspours{}, we report the best running time across all values of parameter $\rho$ as \ourrho-best, and report the running time with a fixed value of $\rho$ as \ourrho-fix, which is $2^{21}$ for social and web graphs, and $2^15$ for road networks. We choose different $\rho$ values for social/web graphs and road networks because they have very different $k_\rho$-$\rho$ property. We explain more details in the main text.\label{tab:main-table}}%
\end{table*}%
}


\hide{
\begin{table*}[t]
\small
\vspace{-.5em}
\def\arraystretch{0.95}
\vspace{-.05in}
    \begin{tabular}{c|r|r@{}r@{}rr@{}r@{}rr@{}r@{}rr@{}r@{}r|r@{}r@{}r|r@{}r@{}rr@{ }r@{}r}
    \hline
    \multicolumn{2}{c|}{\multirow{2}[3]{*}{\textbf{Graph}}} & \multicolumn{12}{c|}{\textbf{Social}}                                                         & \multicolumn{3}{c|}{\textbf{Web}} & \multicolumn{6}{c}{\textbf{Road}} \bigstrut\\
\cline{3-23}    \multicolumn{2}{c|}{} & \multicolumn{3}{c}{\textbf{OK}} & \multicolumn{3}{c}{\textbf{LJ (D)}} & \multicolumn{3}{c}{\textbf{TW (D)}} & \multicolumn{3}{c|}{\textbf{FT}} & \multicolumn{3}{c|}{\textbf{WB (D)}} & \multicolumn{3}{c}{\textbf{GE}} & \multicolumn{3}{c}{\textbf{USA}} \bigstrut[t]\\
    \multicolumn{2}{c|}{\textbf{\#vertices}} & \multicolumn{3}{c}{3M} & \multicolumn{3}{c}{4M} & \multicolumn{3}{c}{42M} & \multicolumn{3}{c|}{65M} & \multicolumn{3}{c|}{89M} & \multicolumn{3}{c}{12M} & \multicolumn{3}{c}{24M} \\
    \multicolumn{2}{c|}{\textbf{\#edges}} & \multicolumn{3}{c}{234M} & \multicolumn{3}{c}{68M} & \multicolumn{3}{c}{1.47B} & \multicolumn{3}{c|}{3.61B} & \multicolumn{3}{c|}{2.04B} & \multicolumn{3}{c}{32M} & \multicolumn{3}{c}{58M} \\
    \multicolumn{2}{c|}{\textbf{\#threads}} &
    \multicolumn{1}{c@{}}{(1)} & \multicolumn{1}{@{}c@{}}{(96h)} & \multicolumn{1}{c}{(SU)} &
    \multicolumn{1}{c@{}}{(1)} & \multicolumn{1}{@{}c@{}}{(96h)} & \multicolumn{1}{c}{(SU)} &
    \multicolumn{1}{c@{}}{(1)} & \multicolumn{1}{@{}c@{}}{(96h)} & \multicolumn{1}{c}{(SU)} &
    \multicolumn{1}{c@{}}{(1)} & \multicolumn{1}{@{}c@{}}{(96h)} & \multicolumn{1}{c|}{(SU)} &
    \multicolumn{1}{c@{}}{(1)} & \multicolumn{1}{@{}c@{}}{(96h)} & \multicolumn{1}{c|}{(SU)} &
    \multicolumn{1}{c@{}}{(1)} & \multicolumn{1}{@{}c@{}}{(96h)} & \multicolumn{1}{c}{(SU)} &
    \multicolumn{1}{c@{}}{(1)} & \multicolumn{1}{@{}c@{}}{(96h)} & \multicolumn{1}{c}{(SU)} \bigstrut[b]\\
    \hline
    \multicolumn{1}{c|}{\multirow{4}[2]{*}{\begin{sideways}$\boldsymbol{\Delta}$-\textbf{step.}\end{sideways}}} & \textbf{GAPBS}
    & 3.42  & .240  & 14.2  & 1.14  & .103  & 11.0  & 58.6  & 2.42  & 24.2  & 85  & 2.95  & 28.7  & 50.8  & 1.92  & 26.5  & 2.01  & 0.22  & 9.1  & 1.83  & 0.33  & 5.5  \bigstrut[t]\\
    \multicolumn{1}{c|}{} & \textbf{Julienne}$^{[1]}$
    & 4.82  & .268  & 18.0  & 2.86  & .140  & 20.4  & 43.1  & 1.82  & 23.7  & 95  & \underline{2.75}  & 34.7  & 86.1  & 2.04  & 42.2  & 1.54  & 6.62  & 0.23  & 2.04  & 10.16  & 0.20  \\
    \multicolumn{1}{c|}{} & \textbf{Galois}
    & 3.07  & .178  & 17.3  & 1.75  & .120  & 14.6  & 29.1  & \underline{1.22}  & 23.9  & 104  & 3.12  & 33.4  & 45.4  & 2.24  & 20.2  & 2.73  & 0.31  & 8.71  & 2.81  & 0.66  & 4.26  \\
    \multicolumn{1}{c|}{} & \textbf{*\ourdelta}
    & 3.49  & \underline{.156}  & 22.4  & 1.41  & \underline{.098}  & 14.4  & 39.3  & 1.50  & 26.2  & 115  & 3.67  & 31.2  & 33.3  & \underline{1.39}  & 24.0  & 5.78  & \textbf{\underline{0.17}}  & 33.56  & 4.92  & \textbf{\underline{0.24}}  & 20.61  \bigstrut[b]\\
    \hline
    \multicolumn{1}{c|}{\multirow{2}[2]{*}{\begin{sideways}\textbf{BF}\end{sideways}}} & \textbf{Ligra}
    & 5.07  & .248  & 20.5  & 2.55  & .115  & 22.1  & 42.6  & \underline{1.55}  & 27.5  & 218.2  & 5.12  & 42.6  & 81.4  & 2.13  & 38.2  &    -   &    -   &  -     &  -     &  -     & - \bigstrut[t]\\
    \multicolumn{1}{c|}{} & \textbf{*\ourbf}
    & 3.70  & \underline{.174}  & 21.3  & 1.54  & \underline{.109}  & 14.1  & 44.7  & 1.74  & 25.7  & 144  & \underline{4.67}  & 30.8  & 48.5  & \underline{1.74}  & 27.9  & 12.98    &     \underline{0.30}  &   42.9    &  16.44     &  \underline{0.40}     &  41.56    \bigstrut[b]\\
    \hline
    \multicolumn{1}{c|}{\multirow{3}[2]{*}{\begin{sideways}$\boldsymbol{\rho}$-\textbf{step.}\end{sideways}}} & \textbf{*\ourrho-fix}
    & 3.55  & .131  & 27.0  & 1.46  & .102  & 14.3  & 37.0  & \textbf{0.97}  & 38.2  & 108  & 2.10  & 51.4  & 31.0  & 1.13  & 27.4  &    7.14   &    0.24   &    30.1   &   6.14    &  0.32     & 19.4 \bigstrut[t]\\
    \multicolumn{1}{c|}{} & \textbf{*\ourrho-best}
    & 3.41  & \textbf{.125}  & 27.2  & 1.28  & \textbf{.086}  & 14.9  & 36.9  & \textbf{0.97}  & 38.1  & 111  & \textbf{2.08}  & 53.2  & 29.5  & \textbf{1.12}  & 26.3  & 7.14  & 0.24  & 30.1  & 5.36  & 0.32  &  16.9 \\
    \multicolumn{1}{c|}{} & \textbf{
    } & \multicolumn{3}{c}{$(\rho=2^{19})$} & \multicolumn{3}{c}{$(\rho=2^{19})$} & \multicolumn{3}{c}{$(\rho=2^{21})$} & \multicolumn{3}{c|}{$(\rho=2^{20})$} & \multicolumn{3}{c|}{$(\rho=2^{22})$} &    \multicolumn{3}{c}{$(\rho=2^{15})$}    &  \multicolumn{3}{c}{$(\rho=2^{14})$} \bigstrut[b]\\
    \hline
    \end{tabular}%
    \vspace{-.15in}
  \caption{\small\textbf{Parallel and sequential running times for all implementations on all graphs.}  
  Our implementations are noted with $*$. (D) means directed graph.
  On each graph, we use bold numbers for the fastest running time, and use underline to denote the fastest parallel \deltas{} implementation and the fastest parallel Bellman-Ford implementation on each graph instance.
  For all \deltas{} algorithms, we report the best running time across all values of parameter $\Delta$. For our \ssspours{}, we report the best running time across all values of parameter $\rho$ as \ourrho-best, and report the running time with a fixed value of $\rho$ as \ourrho-fix, which is $2^{21}$ for social and web graphs, and $2^{15}$ for road networks. \newline
  [1]: Julienne does not achieve satisfactory performance on road graphs.
  We have checked this with the authors, and the reason is that Julienne was not optimized on road graphs. The reported numbers are the best among all possible values of $\Delta$.
  \label{tab:alltime}}
  \vspace{-.15in}
\end{table*}
}

\begin{table*}[t]
\small
\vspace{-1em}
\def\arraystretch{0.95}
\vspace{-.05in}
    \begin{tabular}{c|r|r@{}r@{}rr@{}r@{}rr@{}r@{}rr@{}r@{}r|r@{}r@{}r|r@{}r@{}rr@{ }r@{}r}
    \hline
    \multicolumn{2}{c|}{\multirow{2}[3]{*}{\textbf{Graph}}} & \multicolumn{12}{c|}{\textbf{Social}}                                                         & \multicolumn{3}{c|}{\textbf{Web}} & \multicolumn{6}{c}{\textbf{Road}} \bigstrut\\
\cline{3-23}    \multicolumn{2}{c|}{} & \multicolumn{3}{c}{\textbf{OK}} & \multicolumn{3}{c}{\textbf{LJ (D)}} & \multicolumn{3}{c}{\textbf{TW (D)}} & \multicolumn{3}{c|}{\textbf{FT}} & \multicolumn{3}{c|}{\textbf{WB (D)}} & \multicolumn{3}{c}{\textbf{GE}} & \multicolumn{3}{c}{\textbf{USA}} \bigstrut[t]\\
    \multicolumn{2}{c|}{\textbf{\#vertices}} & \multicolumn{3}{c}{3M} & \multicolumn{3}{c}{4M} & \multicolumn{3}{c}{42M} & \multicolumn{3}{c|}{65M} & \multicolumn{3}{c|}{89M} & \multicolumn{3}{c}{12M} & \multicolumn{3}{c}{24M} \\
    \multicolumn{2}{c|}{\textbf{\#edges}} & \multicolumn{3}{c}{234M} & \multicolumn{3}{c}{68M} & \multicolumn{3}{c}{1.47B} & \multicolumn{3}{c|}{3.61B} & \multicolumn{3}{c|}{2.04B} & \multicolumn{3}{c}{32M} & \multicolumn{3}{c}{58M} \\
    \multicolumn{2}{c|}{\textbf{\#threads}} &
    \multicolumn{1}{c@{}}{(1)} & \multicolumn{1}{@{}c@{}}{(96h)} & \multicolumn{1}{c}{(SU)} &
    \multicolumn{1}{c@{}}{(1)} & \multicolumn{1}{@{}c@{}}{(96h)} & \multicolumn{1}{c}{(SU)} &
    \multicolumn{1}{c@{}}{(1)} & \multicolumn{1}{@{}c@{}}{(96h)} & \multicolumn{1}{c}{(SU)} &
    \multicolumn{1}{c@{}}{(1)} & \multicolumn{1}{@{}c@{}}{(96h)} & \multicolumn{1}{c|}{(SU)} &
    \multicolumn{1}{c@{}}{(1)} & \multicolumn{1}{@{}c@{}}{(96h)} & \multicolumn{1}{c|}{(SU)} &
    \multicolumn{1}{c@{}}{(1)} & \multicolumn{1}{@{}c@{}}{(96h)} & \multicolumn{1}{c}{(SU)} &
    \multicolumn{1}{c@{}}{(1)} & \multicolumn{1}{@{}c@{}}{(96h)} & \multicolumn{1}{c}{(SU)} \bigstrut[b]\\
    \hline
    \multicolumn{1}{c|}{\multirow{4}[2]{*}{\begin{sideways}$\boldsymbol{\Delta}$-\textbf{step.}\end{sideways}}} & \textbf{GAPBS}
    & 3.42  & .240  & 14.2  & 1.14  & .103  & 11.0  & 58.6  & 2.42  & 24.2  & 84.7  & 2.95  & 28.7  & 50.8  & 1.92  & 26.5  & 2.01  & 0.22  & 9.1   & 1.83  & 0.33  & 5.5  \bigstrut[t]\\

    \multicolumn{1}{c|}{} & \textbf{Julienne}$^{[1]}$
    & 4.82  & .268  & 18.0  & 2.86  & .140  & 20.4  & 43.1  & 1.82  & 23.7  & 95.4  & 2.75  & 34.7  & 86.1  & 2.04  & 42.2  & 1.54  & 6.62  & 0.2   & 2.04  & 10.16  & 0.2  \\

    \multicolumn{1}{c|}{} & \textbf{Galois}
    & 3.08  & .194  & 15.9  & 1.72  & .113  & 15.1  & 29.7  & 1.23  & 24.2  & 92.2  & 2.76  & 33.4  & 45.0  & 1.45  & 31.1  & 2.80  & 0.22  & 12.8  & 2.72  & 0.29  & 9.3  \\

    \multicolumn{1}{c|}{} & \textbf{*\ourdelta}
    & 3.45  & \textbf{\underline{.123}}  & 28.1  & 2.04  & .082  & 25.0  & 39.3  & \underline{1.07}  & 36.9  & 115.4  & \underline{2.55}  & 45.3  & 62.8  & \underline{1.27}  & 49.6  & 5.54  & \textbf{\underline{0.18}}  & 30.7  & 4.81  & \textbf{\underline{0.26}}  & 18.8  \bigstrut[b]\\

    \hline
    \multicolumn{1}{c|}{\multirow{2}[2]{*}{\begin{sideways}\textbf{BF}\end{sideways}}} & \textbf{Ligra}
    & 5.07  & .248  & 20.5  & 2.55  & .115  & 22.1  & 42.6  & 1.55  & 27.5  & 218.2  & 5.12  & 42.6  & 81.4  & 2.13  & 38.2  &    -   &    -   &  -     &  -     &  -     & - \bigstrut[t]\\

    \multicolumn{1}{c|}{} & \textbf{*\ourbf}
    & 3.71  & \underline{.134}  & 27.7  & 2.58  & \underline{.095}  & 27.2  & 45.7  & \underline{1.18}  & 38.6  & 147.7  & \underline{2.72}  & 54.4  & 97.6  & \underline{1.71}  & 57.2  & 12.97  & \underline{0.30}  & 42.6  & 16.28  & \underline{0.41}  & 39.8  \bigstrut[b]\\
    \hline

    \multicolumn{1}{c|}{\multirow{3}[2]{*}{\begin{sideways}$\boldsymbol{\rho}$-\textbf{step.}\end{sideways}}} & \textbf{*\ourrho-fix}
    & 3.56  & .132  & 27.0  & 2.46  & .087  & 28.2  & 37.6  & \textbf{0.93}  & 40.6  & 112.7  & \textbf{2.02}  & 55.8  & 60.6  & 1.07  & 56.7  & 6.43  & 0.21  & 31.1  & 3.84  & 0.30  & 12.7  \bigstrut[t]\\

    \multicolumn{1}{c|}{} & \textbf{*\ourrho-best}
    & 3.42  & .125  & 27.5  & 2.07  & \textbf{{.080}}  & 28.6  & 37.6  & \textbf{0.93}  & 40.6  & 112.7  & \textbf{2.02}  & 55.8  & 57.5  & \textbf{1.06}  & 54.1  & 6.43  & 0.21  & 31.1  & 3.86  & 0.30  & 12.8  \\

    \multicolumn{1}{c|}{} & \textbf{
    } & \multicolumn{3}{c}{$(\rho=2^{19})$} & \multicolumn{3}{c}{$(\rho=2^{19})$} & \multicolumn{3}{c}{$(\rho=2^{21})$} & \multicolumn{3}{c|}{$(\rho=2^{21})$} & \multicolumn{3}{c|}{$(\rho=2^{22})$} &    \multicolumn{3}{c}{$(\rho=2^{21})$}    &  \multicolumn{3}{c}{$(\rho=2^{23})$} \bigstrut[b]\\
    \hline
    \end{tabular}%
    \vspace{-.15in}
  \caption{\small\textbf{Parallel and sequential running times for all implementations on all graphs.}  
  Our implementations are noted with $*$. (D): directed graph. (1): running time on one core. (96h): running time using 96 cores with hyperthreading (192 threads). (SU): speedup.
  On each graph, bold numbers are the fastest running time, and underline numbers denote the fastest \deltas{} implementation and the fastest Bellman-Ford implementation on each graph instance.
  For all \deltas{} algorithms, we report the best running time across all values of parameter $\Delta$. For \ssspours{}, we report the best running time across all values of parameter $\rho$ as \ourrho-best, and report the running time with a fixed value of $\rho=2^{21}$ as \ourrho-fix. \newline
  [1]: Julienne does not achieve satisfactory performance on road graphs.
  We have checked this with the authors, and the reason is that Julienne was not optimized on road graphs. The reported numbers are the best among all possible values of $\Delta$.
  \label{tab:alltime}}
  \vspace{-.15in}
\end{table*}

\myparagraph{Experimental setup.}
\label{sec:exp:setup}
We run all experiments on a quad-socket machine with Intel Xeon Gold 6252 CPUs with a total of 96 cores (192 hyperthreads). The system has 1.5TB of main memory and 36MB L3 cache on each socket. Our codes were compiled with \texttt{g++} 7.5.0 using \texttt{CilkPlus} with \texttt{-O3} flag. For all parallel implementations, we use all cores  and \texttt{numactl -i all}, which evenly spreads the memory pages across the
processors in a round-robin fashion.


We implemented three algorithms based on the framework in \cref{sec:framework}: Bellman-Ford (\ourbf), \deltass{} (\ourdelta), and \SSSPalgo{} (\ourrho).
We use array-based \BDPQ{} because we observe that when the output size of \extract{} is large, the array-based implementation has better performance than \tourt{} (see more details in the full paper).
For all graphs we use, the best running time is achieved using a reasonably large $\rho$.
We compare our implementations with state-of-the-art SSSP implementations: Bellman-Ford algorithm in Ligra~\cite{ShunB2013}, \deltas{} in Julienne~\cite{dhulipala2017}, GAPBS~\cite{beamer2015gap,zhang2020optimizing}, and Galois~\cite{nguyen2013lightweight}. Throughout the section, when we refer to ``\deltas{}'', it includes our \deltass{}, and the existing \deltas{} in Julienne, GAPBS and Galois.

We test seven graphs, including four social networks com-orkut (OK)~\cite{yang2015defining}, Live-Journal (LJ)~\cite{backstrom2006group}, Twitter (TW)~\cite{kwak2010twitter} and Friendster (FT)~\cite{yang2015defining}, one web graph WebGraph (WB) \cite{webgraph}, and two road graphs~\cite{roadgraph} RoadUSA (USA) and Germany (GE). The graph information is provided in Table \ref{tab:alltime}.
In almost all experiments, the social and web graphs show a similar trend.
This is because they follow similar power-law-like degree distribution. Throughout the section, we use ``\sfgraph{s}'' to refer to social and web graphs.
On \sfgraph{s}, we set edge weight uniformly at random in range $[1,2^{18})$. On road graphs, the edge weights are from the original dataset, which is up to $2^{25}$.

For all \deltas{} algorithms (except for \cref{fig:intro:delta} where we vary~$\Delta$), we report the \emph{best running time} across all $\Delta$ values. When we report average of multiple sources, we first find the best $\Delta$ value on one source, and use it for other sources. We do this for every graph-implementation combination.
For all \ssspours{} algorithms (except for in \cref{tab:alltime} where we explicitly report the best running time across $\rho$ values), we use a fixed value of $\rho=2^{21}$. For most experiments, we report the average of 10 sources.
When taking the average is meaningless, we use one representative source.

In this section, we will first discuss the overall performance of all implementations. We then compare some statistics to better understand the performance of \ourrho{}, \ourdelta{} and \ourbf{}. We evaluate the number of vertices visited by the algorithm as an indicator of the overall work. Since road graphs exhibit different properties from the \sfgraph{s}, we then discuss road graphs separately. We also analyze the two algorithms \ssspours{} and \deltas{} with their corresponding parameters.
Due to page limit, we will discuss the $k_\rho$ properties for each graph in the full version, and only show the $k_\rho$-$\rho$ curves in this paper (\cref{fig:k-rho}). 
\ifx\fullversion\undefined
Due to page limit, we postpone some figures, discussions and more experiments to the full paper~\cite{ssspfull} (e.g., using different machines and different types of sources).
\else \fi
Generally, we show that our \ssspours{} has \textbf{especially good performance on \sfgraph{s}}, and the performance gain of \ssspours is from three aspects: \textbf{good parallelism}, \textbf{less overall work}, and \textbf{more evenly distributed work to all steps}. We summarize conclusions and interesting findings at the end of this section.

\myparagraph{Overall Performance.} We present the running time of all
implementations in \cref{tab:alltime}.
In all cases, one of our implementations achieves the best performance, and is 1.14$\times$ to orders of magnitude faster than the previous implementations. We show a heat map of relative parallel running time in \cref{fig:exp:heat}.

On \sfgraph{s}, \ourrho{} and \ourdelta{} outperform all existing implementations.
\ourrho{} has better performance. 
On average over five graphs, \ourrho{} is $1.41\times$ faster than Galois, $1.83\times$ faster than Julienne and GAPBS, and $1.93\times$ faster than Ligra.

On road graphs, \ourdelta{} is the fastest, and \ourrho{} is also competitive.
Ligra did not finish in 30 seconds on road graphs, since Ligra uses plain Bellman-Ford that is inefficient for graph with deep shortest-path tree (more than $10^4$, see \cref{fig:k-rho}).
Our \ourbf{} with the neighbor-set optimization (see \cref{sec:implementation}) finishes on both graphs in about 0.4s.

We report the sequential running time of the corresponding parallel version and show self-speedup in Table \ref{tab:alltime}.
We note that comparing the sequential running time
of different implementations does not seem useful because both \deltas and \ssspours are parameterized.
To get the best sequential performance, one should just use a small $\Delta$ or $\rho$.
The reported time is the sequential performance using the corresponding parameter that performs best in parallel, and it makes more sense just to compare the speedup numbers.
The self-speedup of \ourrho{} is almost always the best among all implementations (\ourdelta{} is close but slightly worse). Hence, the good performance of \ourrho{}, especially on \sfgraph{s}, is partially due to good scalability.  In other words, \ourrho{} achieves the best ``work-span tradeoff'' in practice.



Among the implementations of the same algorithm, \ourbf{} outperforms Ligra on all graphs.
For all \deltas{} algorithms, \ourdelta{} is also the fastest on all graphs.
Overall, our three algorithms outperform existing implementations, indicating the efficiency of stepping algorithm framework for parallel SSSP implementations.

\label{sec:exp:analysis}

\hide{
\begin{figure}[t]
  \centering
  \includegraphics[width=\columnwidth]{figures/heat.pdf}
  \vspace{-.2in}
  \caption{ \small \label{fig:exp:heat} \textbf{Parallel running time relative to the fastest on each graph.} \mdseries Implementations noted with $*$ are ours. ``Ave.'' means average on \sfgraph{s} (over the five graphs) and road graphs (over the two graphs), respectively. (96h) means parallel running time (96 cores, 192 hyperthreads). (1) means the sequential running time of the corresponding parallel version. (SU) means the speedup number.
  \ourrho{}-fix denotes uses a fixed parameter $\rho$ ($2^{21}$ for \sfgraph{s} and $2^{15}$ for road graphs). \ourrho{}-best is the best running time using all values of $\rho$.
  }
  \vspace{-.2in}
\end{figure}
}

\begin{figure*}[t]
\vspace{-.5em}
\begin{tabular}{cccc}
  \includegraphics[width=0.45\columnwidth]{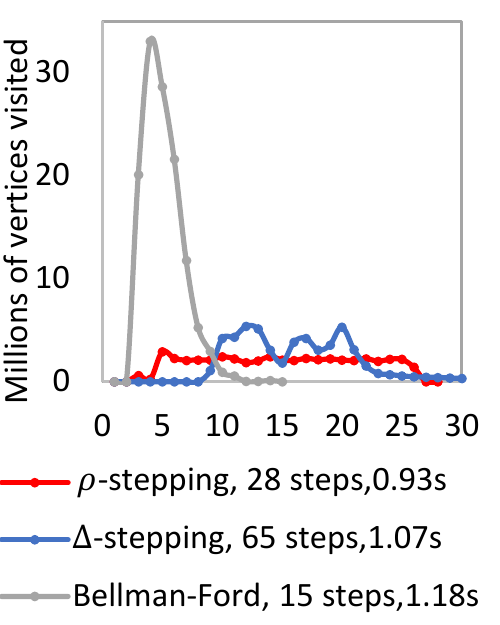}&
  \includegraphics[width=0.45\columnwidth]{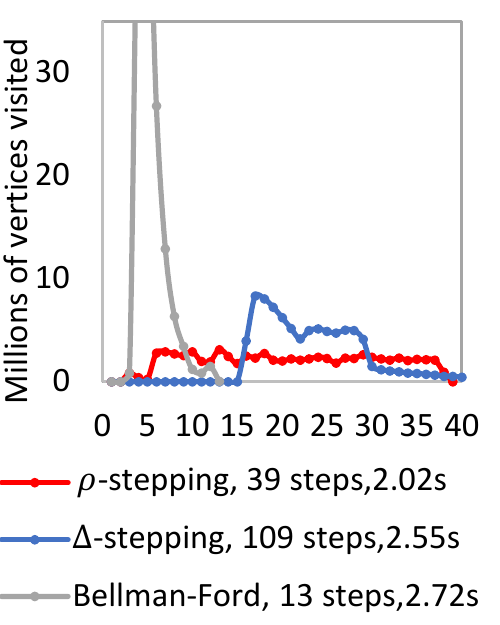}&
  \includegraphics[width=0.45\columnwidth]{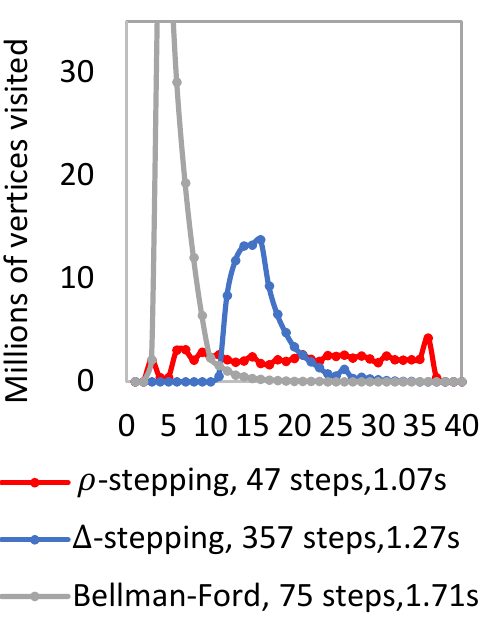}&
  \includegraphics[width=0.45\columnwidth]{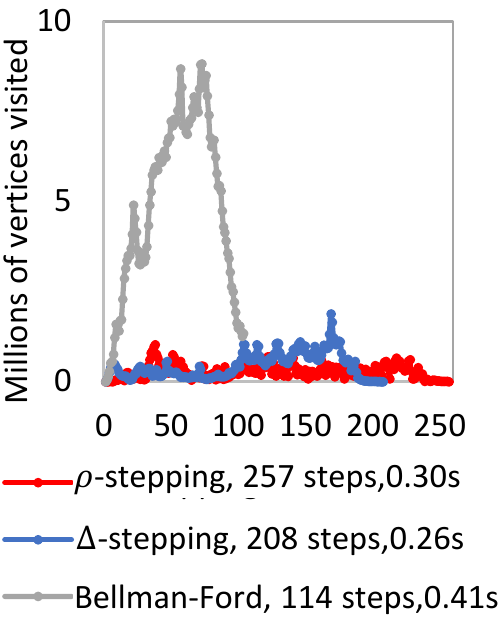}\\
\bf (a). TW& \bf (b). FT & \bf (c). WB& \bf (d). USA \\
\end{tabular}
\hide{
\begin{tabular}{ccc}

  \includegraphics[width=0.5\columnwidth]{figures/per-round/GE.pdf}&

  \includegraphics[width=0.5\columnwidth]{figures/per-round/WB-e.pdf}&
  \includegraphics[width=0.5\columnwidth]{figures/per-round/GE-e.pdf}&
  \includegraphics[width=0.5\columnwidth]{figures/per-round/USA-e.pdf}\\
\bf (e). WB&\bf (f). GE&\bf (g). USA\\
\end{tabular}}
  \caption{\textbf{Number of visited vertices in each step in \ourrho, \ourdelta and \ourbf.} \mdseries Here we only run on one source vertex, since it has unclear meaning to compute the average of multiple runs on each step.  Hence, the runtimes can be different from Table \ref{tab:alltime} (average on 100 runs from 10 source vertices), and some curves are bumpy. We use 96 cores (192 hyperthreads).\label{fig:visted-per-step}}
\vspace{-1em}
\begin{minipage}{1.15\columnwidth}
  \includegraphics[width=\columnwidth]{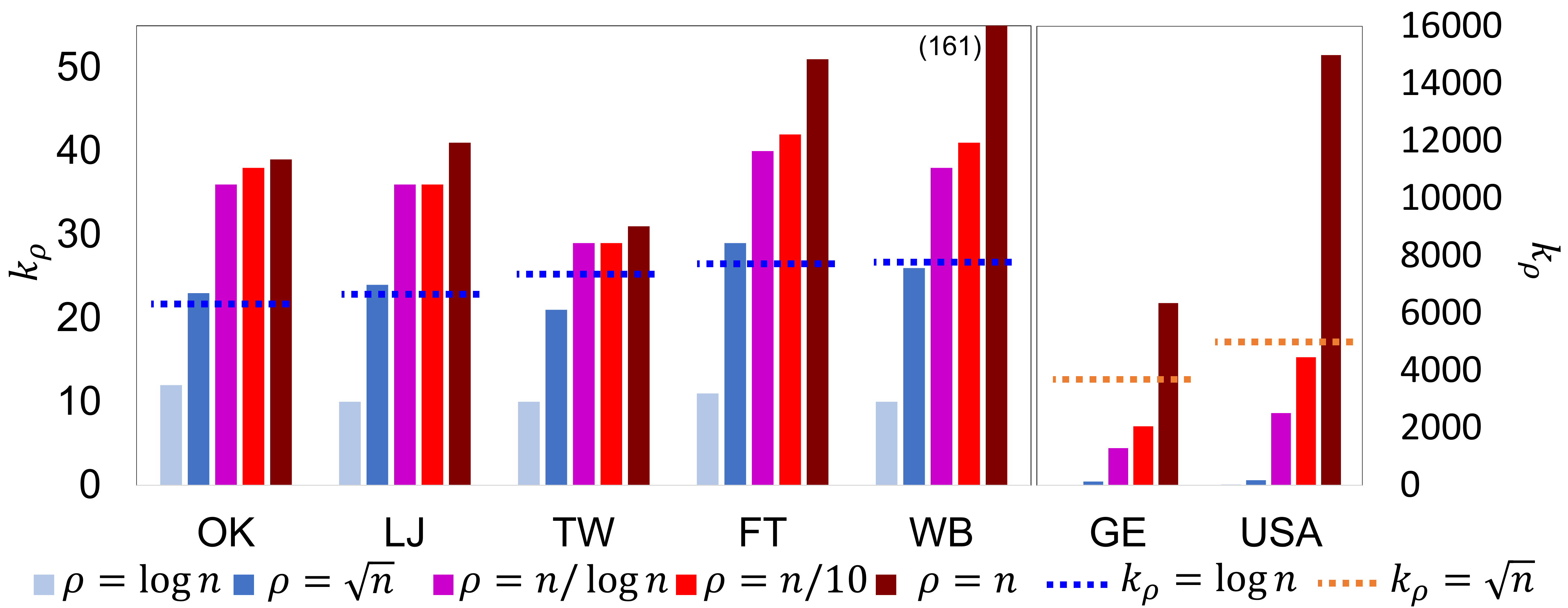}
  \caption{\small\textbf{The values of $k_{\rho}$ with different values of $\rho$ for different graphs.}}\label{fig:k-rho}
\end{minipage}\hfill
\begin{minipage}{0.85\columnwidth}
  \includegraphics[width=.95\columnwidth]{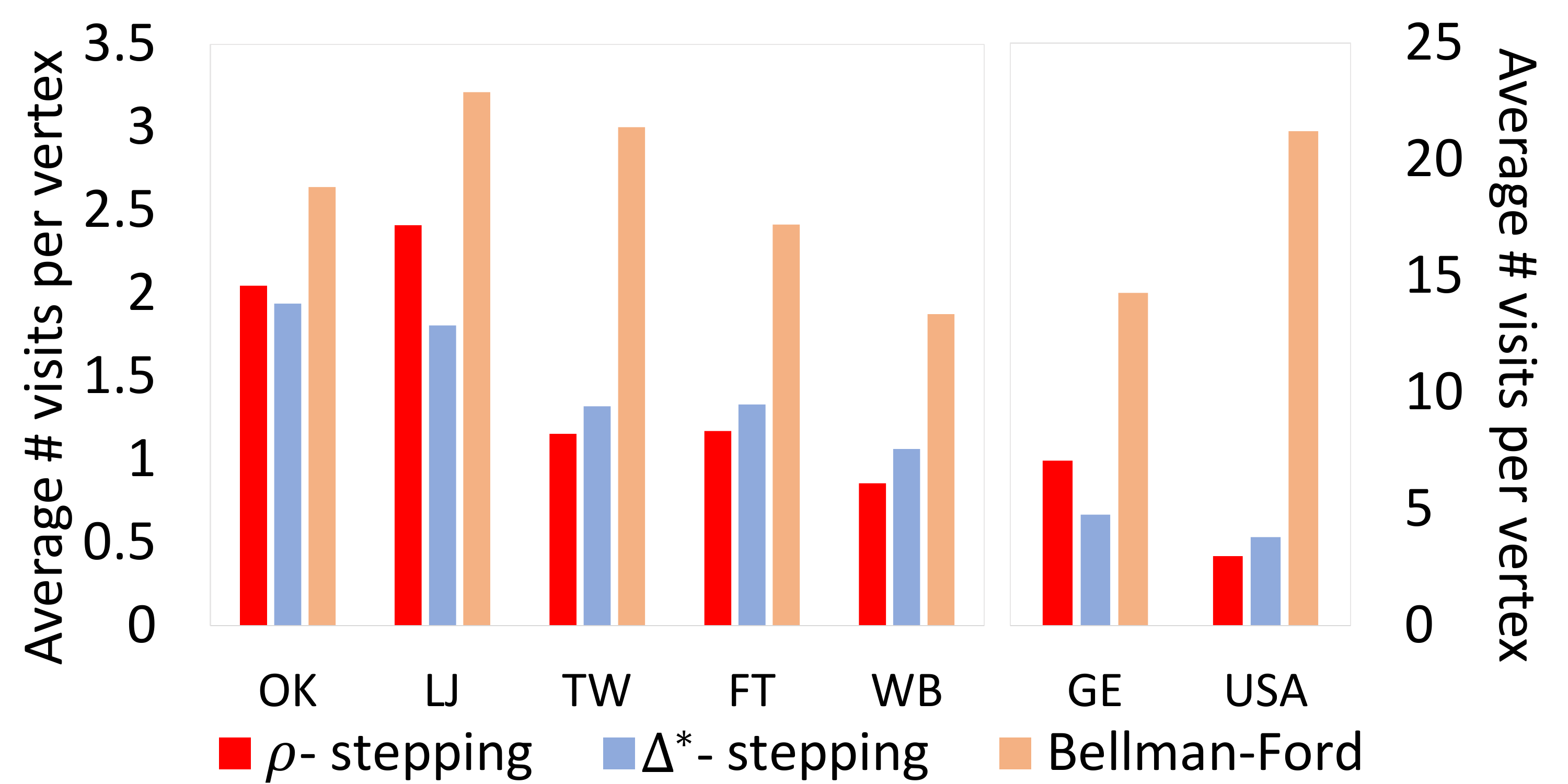}
  \caption{\small\textbf{Number of visits per vertex and per edge, respectively, for \ourrho, \ourdelta and \ourbf on all graphs.} }\label{fig:enqueue}
\end{minipage}
\vspace{-2.5em}
\end{figure*}

\hide{
\begin{figure*}
\centering
  \includegraphics[width=1.3\columnwidth]{figures-new/k-rho/k-rho-social.pdf}
  \caption{\textbf{The values of $k_{\rho}$ with different values of $\rho$ for different graphs.}}\label{fig:k-rho}
  \includegraphics[width=.9\columnwidth]{figures/enqueue/num-enqueue-vertex.pdf}
    \hspace{1em}
\includegraphics[width=.9\columnwidth]{figures/enqueue/num-enqueue-edge.pdf}
  \caption{\textbf{Number of visits per vertex and per edge, respectively, for all our algorithms (\ourrho, \ourdelta and \ourbf) on all graphs.} }\label{fig:enqueue}
  \vspace{2em}
\end{figure*}
}

\hide{
\begin{figure*}
  \begin{tabular}{cccc}
    \multicolumn{4}{c}{\includegraphics[width=1.3\columnwidth]{figures/delta/caption.pdf}}\\
    \includegraphics[width=0.5\columnwidth]{figures/delta/delta_orkut.pdf} & \includegraphics[width=0.5\columnwidth]{figures/delta/delta_LJ.pdf} & \includegraphics[width=0.5\columnwidth]{figures/delta/delta_twitter.pdf} &
    \includegraphics[width=0.5\columnwidth]{figures/delta/delta_friendster.pdf} \\
\bf (a). OK& \bf (b). LJ & \bf (c). TW & \bf (d). FT \\
  \end{tabular}
  \begin{tabular}{ccc}
\includegraphics[width=0.5\columnwidth]{figures/delta/delta_web.pdf} & \includegraphics[width=0.5\columnwidth]{figures/delta/delta_GE.pdf} &
\includegraphics[width=0.5\columnwidth]{figures/delta/delta_US.pdf}\\
\bf (e). WB&\bf (f). GE&\bf (g). USA\\
  \end{tabular}
  \caption{\textbf{$\Delta$-stepping relative running time with varying $\Delta$.} We use 96 cores (192 hyperthreads).}\label{fig:alldelta}

  \includegraphics[width=\columnwidth]{figures/rho-time/rho.pdf}
  \includegraphics[width=\columnwidth]{figures/rho-time/road.pdf}
  \caption{\textbf{\ssspours{} relative running time with varying $\rho$.} We use 96 cores (192 hyperthreads).}\label{fig:allrho}
\end{figure*}
}

\myparagraph{Number of visits to vertices.}
Unlike Dijkstra, other parallel SSSP algorithms can visit each vertex or edge more than once.
While this allows for parallelism, the total work is also increased.
To show how much ``redundant'' work is done for the stepping algorithms, we measure the average number of visits per vertex (\cref{fig:enqueue}), and the number of visited vertices in each step on four representative graphs (\cref{fig:visted-per-step})
\ifx\fullversion\undefined
\footnote{We also measured the number of visited edges, which show very similar trend to the vertices. Due to page limit, we report the numbers in the full version of the paper, which also shows the results for all seven graphs.}.
\else
\footnote{We also measured the number of visited edges, which show very similar trend to the vertices. We present the results in \cref{fig:visted-per-step}}.
We show results for four representative graphs, and full results are in \cref{fig:visted-per-step}.
\fi
We note that the other systems vary a lot in implementation details, and it is hard to directly measure these quantities from their code.
Hence, we compare among our implementations. 
For the same reason, \ourdelta{} may not precisely reflect the numbers of other \deltas{} implementations. In this paragraph, we first focus on the \sfgraph{s}, and discuss road graphs later.

\Cref{fig:enqueue} shows the average number of visits per vertex.
\ifconference
On the two small graphs (OK and LJ), since the work cannot saturate all 192 threads, \ourrho act similar to Bellman-Ford to maximize parallelism and uses visits more vertices than \deltas{}.
\else
On the two small graphs (OK and LJ), since the work cannot saturate all 192 threads, both \ourdelta and \ourrho act similar to Bellman-Ford to maximize parallelism.
\fi
\ifconference
For the larger graphs (TW, FT, and WB), \ourrho{} always triggers the smallest average visit to vertices.
\else
For the three larger graphs (TW, FT, and WB), \ourrho{} always triggers the smallest average visit to vertices and edges.
\fi
The trend showed in \cref{fig:enqueue} exactly matches the sequential time of each implementation.
Hence, one advantage of \ourrho{} over \ourbf{} and \ourdelta{} on \sfgraph{s} is less total work.


\Cref{fig:visted-per-step} shows the number of visited vertices per step.
In \ourbf{}, the numbers always grow quickly to a large value, stay for a few steps, and finish quickly.
Although usually using the fewest steps, \ourbf{} is the slowest, since the dense steps cause many redundant relaxations.
\ourdelta{} usually uses more steps than both \ourbf and \ourrho{}.
In most of the steps (at the beginning and the end), \ourdelta{} visits only a small number of vertices, but the peak values are much higher than \ourrho{}.
\ourrho{} shows a more even pattern across the steps: in most of the steps, it processes a moderate number of vertices, and the peak value is much smaller than \ourdelta{} or \ourbf{}. 

These patterns in \cref{fig:visted-per-step} reflect the nature of the three algorithms.
Bellman-Ford always visits all vertices in the frontier in each step.
This created significant redundant work.
\deltass{} controls work-span tradeoff based on distances.
On \sfgraph{s}, it reaches the peak work in some middle steps, which is significantly higher than other steps.
\ssspours{} controls the work-span tradeoff using the number of vertices processed per step.
We believe on \sfgraph{s}, this quantity is a closer indicator to the actual ``work'' in each step than the distance gap is.
In other words, \ssspours{} controls the work in each step that is minimal to saturate all processors, so it explores sufficient parallelism with minimized redundant work.

\ifx\fullversion\undefined
\else
\myparagraph{$k$-$\rho$ property for different graphs.} To better understand the properties of different graphs, we show the $\rho$-$k_\rho$ curves in \cref{fig:k-rho}. We note that computing the exact value of $k_\rho$ is expensive (which makes \radiuss{} impractical), and we estimate the value using 100 samples.
As mentioned, this indicates how much potential parallelism we can get on the original graph (i.e., without shortcuts). We show the value of $k_\rho$ when $\rho$ is $\log n$, $\sqrt{n}$, $n/\log n$, $n/10$, and $n$, respectively. Note that $k_n$ is the shortest path tree depth.

On all \sfgraph{s}, we observe that $k_n$ is $O(\log n)$ (about $2\log n$). Interestingly, all \sfgraph{s} are $(\log n,\sqrt{n})$-graph, which means that almost all vertices can reach their $\sqrt{n}$ nearest vertices by $\log n$ hops.
This is not surprising for \sfgraph{s}, which have some ``hubs'' that are well connected to other vertices.
These vertices are easy to be reached by any source in a few steps.
Once reached, they can reach a lot of other vertices, which quickly accumulate $\sqrt{n}$ nearest neighbors for any source.
The road graphs have a different $\rho$-$k_\rho$ pattern.
On GE and USA, it takes more than $100$ hops to reach $\sqrt{n}$ nearest vertices, and $O(\sqrt{n})$ steps to reach $O(n)$ nearest vertices.
We believe the result is reasonable since road graphs are (almost) planar graphs.
\fi

\myparagraph{Discussions for Road Graphs.}
Road graphs are (almost) planar and have different $k$-$\rho$ patterns than other graphs, and the shortest-path trees are deep and slim.
Hence, without the special optimizations (e.g., in Ligra and Julienne), the performance is slow.
As mentioned in \cref{sec:implementation}, our optimization expands multiple levels in the shortest-path tree in one step.
This makes the performance of our implementations competitive or better than GAPBS and Galois.

On road graphs, \ourdelta{} is the fastest.
This somehow indicates that expanding with distance may be a good strategy for road graphs.
One possible reason is that they are planar graphs with Euclidean distance.
Hence, setting fixed-width ``annuli'' seems a reasonable work-parallelism tradeoff, when using a proper $\Delta$.
Since the frontier on road graphs is small, \ourrho{} has insufficient frontier size in each step for enough parallelism.
Hence, it is hard for \ourrho{} to control the number of vertices visited precisely, and the performance is slightly slower than \ourdelta.
However, \ourrho{} has more stable performance than \ourdelta{} in the parameter space (\cref{fig:intro:delta,fig:rho-time}).

\myparagraph{\deltastepping and $\Delta$.} We test all \deltas{} algorithms with varying $\Delta$ on all graphs.
For each test case, we normalize the running time to the best time across all $\Delta$ values.
For page limit, we present four graphs in \cref{fig:intro:delta}, and the full results in the full paper~\cite{ssspfull}.

On the same graph, the best choice of $\Delta$ varies a lot
for different systems.
On TW, Julienne's best $\Delta$ is $2^{12}\times$ larger than Galois's.
The best $\Delta$ for one system can make another system up to $4\times$ slower.
The selection of $\Delta$ in one system does not generalize to other systems.
Secondly, even though all \sfgraph{s} have the same edge weight distribution, for the same implementation, the best choice of $\Delta$ varies a lot on different graphs.
Therefore, the selection of $\Delta$ on one graph does not generalize to other graphs.
On the same graph, the performance is sensitive to the value of $\Delta$.
Usually, $2$--$4\times$ off may lead to a 20\% slowdown, and $4$--$8\times$ off may lead to a 50\% slowdown.
A badly-chosen $\Delta$ can largely affect the performance.
As a result, for every graph-implementation combination, we have to search the best parameter $\Delta$. Fortunately, we find out that different sources show relatively stable performance for the same implementation-graph pair. This is also the conventional way of tuning $\Delta$ (we also did so).
\ifx\fullversion\undefined
We present the results in the full paper \cite{ssspfull}.
\else
We present the results in \cref{fig:differentsource}.
\fi

\ifconference
\else
Also, for almost all systems, the performance on road graphs is more sensitive to the value of $\Delta$ than on \sfgraph{s}.
\fi

\myparagraph{\SSSPalgo{} and $\rho$.} We test \ssspours{} with varying $\rho$.
When $\rho$ is small, the running time increases significantly.
This is also due to the lack of parallelism (similar to when \deltas{} uses small $\Delta$).
When $\rho$ gets large, the performance drops by no more than 20\%.
The best choices of $\rho$ are very consistent on different graphs. This is because the choice of $\rho$ in practice depends on the right level of parallelism we want to achieve, instead of the graph structure or edge weight distribution.
As discussed, \ssspours{} distributes work more evenly to each step.
The goal of setting $\rho$ is to enable enough work to exploit full parallelism in each step, but without introducing more redundant work.
The performance on road graphs are less sensitive, probably because the frontier size seldom reaches $\rho$ in road graphs.
We also tested \ssspours{} on various machines. We observe that the best choice of $\rho$ is still relatively consistent among different settings. We will present more results in the full paper.

Generally speaking, using large $\Delta$ or $\rho$ gives better (and more stable) performance than small $\Delta$ or $\rho$ values.
This is not surprising because when these parameters are large, \deltas{} and \ssspours{} degenerate to Bellman-Ford that still has reasonable performance on social networks.
When the parameters are small, \deltas{} and \ssspours{} both degenerate to Dijkstra and loses parallelism.

\myparagraph{Summary. }
\label{sec:exp-sum}
In summary, our \ourrho{} generally achieves the best performance on the five \sfgraph{s}. On average of the five graphs, \ourrho{} is 1.41-1.93$\times$ faster better than existing systems.
On the two road graphs, \ourdelta{} always has the best performance, which is at least 14\% better than existing systems.
The good performance of \ourrho{} on \sfgraph{s} comes from three aspects. The first is scalability, indicated by the good self-speedup. 
Secondly, it visits fewer vertices and edges on large \sfgraph{s}, which indicates less overall work.
Lastly, the work is more evenly distributed to each step, such that each step can exploit sufficient parallelism, and also avoid performing ``ineffective'' work to relax the neighbors of unsettled vertices.
This also indicates that on \sfgraph{s} with uniformly distributed edge weights, controlling the number of vertices visited per step is a good strategy. 
On road graphs with Euclidean distance, \deltas{} shows better performance. 

Our \ourrho{} implementation generally shows stable performance across $\rho$ values on all tested graphs.
A fixed $\rho$ almost always gives performance within 5\% off the performance with the best $\rho$. 

Finally, on all tested graphs, \ourrho{} and \ourdelta{} are faster than all existing SSSP implementations (except for RoadUSA, \ourrho{} is 0.01s slower than Galois).
\ourbf{} is faster than Ligra on all graphs.
This indicates the efficiency of the stepping algorithm framework on implementing and optimizing parallel SSSP algorithms.

\section{Related Work on Parallel SSSP}\label{sec:related}

\myparagraph{Practical parallel SSSP implementations}.
There have been dozens of practical implementations of parallel SSSP.
In this paper, we compared to a few of them.
Galois~\cite{nguyen2013lightweight} uses an approximate priority queue ordered by integer metric with NUMA-optimization
to improve the performance of SSSP.
GraphIt~\cite{zhang2018graphit, zhang2020optimizing} proposed a priority queue abstraction and a new optimization, bucket fusion, to reduce the synchronization overhead of \deltass{}.
The optimizations are later adopted by GAPBS~\cite{beamer2015gap}, which is the one we compared to.
Julienne~\cite{dhulipala2017} proposed and used the bucketing data structure to order the vertices for
\deltas{} based on semisorting~\cite{gu2015top}.
Ligra~\cite{ShunB2013} includes one of the most efficient Bellman-Ford implementations.

There has also been a significant amount of work on other implementations, include those on the distributed setting~\cite{malewicz2010pregel,zhu2016gemini,besta2017push}, GPUs~\cite{davidson2014work,wang2016gunrock}, among many others.
Our reported running time in this paper is much faster than in these papers on the same graphs, and comparing the superiorities of different settings on parallel SSSP is out of the scope of this paper.
Parallel SSSP based on parallel priority queues are reviewed in \cref{sec:bdpq}.

\myparagraph{Theoretical work on parallel SSSP}.
There has been a rich literature of theoretical parallel SSSP algorithms.
Among them, many algorithms~\cite{ullman1991high,klein1997randomized,cohen1997using,Shi1999,cohen2000polylog,spencer1997time} achieve very similar bounds to \radiuss~\cite{blelloch2016parallel} we discussed in this paper, but require adding shortcut edges.
Basically the product of work and span is $\tilde{\Theta}(nm)$ (referred to as the transitive closure bottleneck~\cite{KarpR90}).
Some algorithms are analyzed based on edge weights~\cite{meyer2002buckets,meyer2001heaps}, and many others are on approximate shortest-paths~\cite{elkin2019hopsets,miller2015improved,cao2020efficient,li2020faster,andoni2020parallel} and other models~\cite{augustine2020shortest,kuhn2020computing,ghaffari2018improved}.
While these algorithms are insightful, to the best of our knowledge, none of them have implementations.




\section*{Acknowledgement}
This work is in partial supported by NSF grant CCF-2103483.
\balance
\bibliographystyle{abbrv}

\begin{thebibliography}{10}

\bibitem{roadgraph}
Openstreetmap \copyright{} openstreetmap contributors.
\newblock \url{https://www.openstreetmap.org/}, 2010.

\bibitem{cpam}
Efficient parallel ordered maps using compressed search trees.
\newblock 2021.

\bibitem{acar2019parallel}
U.~A. Acar, D.~Anderson, G.~E. Blelloch, and L.~Dhulipala.
\newblock Parallel batch-dynamic graph connectivity.
\newblock In {\em ACM Symposium on Parallelism in Algorithms and
  Architectures}, pages 381--392, 2019.

\bibitem{Acar02}
U.~A. Acar, G.~E. Blelloch, and R.~D. Blumofe.
\newblock The data locality of work stealing.
\newblock {\em Theoretical Computer Science (TCS)}, 35(3), 2002.

\bibitem{agrawal2014batching}
K.~Agrawal, J.~T. Fineman, K.~Lu, B.~Sheridan, J.~Sukha, and R.~Utterback.
\newblock Provably good scheduling for parallel programs that use data
  structures through implicit batching.
\newblock In {\em {ACM} Symposium on Parallelism in Algorithms and
  Architectures (SPAA)}, 2014.

\bibitem{alistarh2015spraylist}
D.~Alistarh, J.~Kopinsky, J.~Li, and N.~Shavit.
\newblock The spraylist: A scalable relaxed priority queue.
\newblock In {\em {ACM} Symposium on Principles and Practice of Parallel
  Programming (PPOPP)}, pages 11--20, 2015.

\bibitem{anderson2020work}
D.~Anderson, G.~E. Blelloch, and K.~Tangwongsan.
\newblock Work-efficient batch-incremental minimum spanning trees with
  applications to the sliding-window model.
\newblock In {\em {ACM} Symposium on Parallelism in Algorithms and
  Architectures (SPAA)}, 2020.

\bibitem{andoni2020parallel}
A.~Andoni, C.~Stein, and P.~Zhong.
\newblock Parallel approximate undirected shortest paths via low hop emulators.
\newblock In {\em {ACM} Symposium on Theory of Computing (STOC)}, pages
  322--335, 2020.

\bibitem{augustine2020shortest}
J.~Augustine, K.~Hinnenthal, F.~Kuhn, C.~Scheideler, and P.~Schneider.
\newblock Shortest paths in a hybrid network model.
\newblock In {\em {ACM-SIAM} Symposium on Discrete Algorithms (SODA)}, pages
  1280--1299. SIAM, 2020.

\bibitem{backstrom2006group}
L.~Backstrom, D.~Huttenlocher, J.~Kleinberg, and X.~Lan.
\newblock Group formation in large social networks: membership, growth, and
  evolution.
\newblock In {\em ACM International Conference on Knowledge Discovery and Data
  Mining (SIGKDD)}, pages 44--54, 2006.

\bibitem{baumker1996realistic}
A.~B{\"a}umker, W.~Dittrich, F.~Meyer, and I.~Rieping.
\newblock Realistic parallel algorithms: Priority queue operations and
  selection for the bsp* model.
\newblock In {\em European Conference on Parallel Processing}, pages 369--376.
  Springer, 1996.

\bibitem{beamer2015gap}
S.~Beamer, K.~Asanovi{\'c}, and D.~Patterson.
\newblock The gap benchmark suite.
\newblock {\em arXiv preprint arXiv:1508.03619}, 2015.

\bibitem{bellman1958routing}
R.~Bellman.
\newblock On a routing problem.
\newblock {\em Quarterly of applied mathematics}, 16(1):87--90, 1958.

\bibitem{BBFGGMS16}
N.~Ben-David, G.~E. Blelloch, J.~T. Fineman, P.~B. Gibbons, Y.~Gu, C.~McGuffey,
  and J.~Shun.
\newblock Parallel algorithms for asymmetric read-write costs.
\newblock In {\em {ACM} Symposium on Parallelism in Algorithms and
  Architectures (SPAA)}, 2016.

\bibitem{BBFGGMS18}
N.~Ben-David, G.~E. Blelloch, J.~T. Fineman, P.~B. Gibbons, Y.~Gu, C.~McGuffey,
  and J.~Shun.
\newblock Implicit decomposition for write-efficient connectivity algorithms.
\newblock In {\em {IEEE} International Parallel and Distributed Processing
  Symposium (IPDPS)}, 2018.

\bibitem{besta2017push}
M.~Besta, M.~Podstawski, L.~Groner, E.~Solomonik, and T.~Hoefler.
\newblock To push or to pull: On reducing communication and synchronization in
  graph computations.
\newblock In {\em International Symposium on High-Performance Parallel and
  Distributed Computing (HPDC)}, pages 93--104, 2017.

\bibitem{bingmann2015bulk}
T.~Bingmann, T.~Keh, and P.~Sanders.
\newblock A bulk-parallel priority queue in external memory with stxxl.
\newblock In {\em International Symposium on Experimental Algorithms (SEA)},
  pages 28--40. Springer, 2015.

\bibitem{BCGRCK08}
G.~E. Blelloch, R.~A. Chowdhury, P.~B. Gibbons, V.~Ramachandran, S.~Chen, and
  M.~Kozuch.
\newblock Provably good multicore cache performance for divide-and-conquer
  algorithms.
\newblock In {\em {ACM-SIAM} Symposium on Discrete Algorithms (SODA)}, 2008.

\bibitem{Blelloch2016justjoin}
G.~E. Blelloch, D.~Ferizovic, and Y.~Sun.
\newblock Just join for parallel ordered sets.
\newblock In {\em {ACM} Symposium on Parallelism in Algorithms and
  Architectures (SPAA)}, 2016.

\bibitem{BlellochFiGi11}
G.~E. Blelloch, J.~T. Fineman, P.~B. Gibbons, and H.~V. Simhadri.
\newblock Scheduling irregular parallel computations on hierarchical caches.
\newblock In {\em {ACM} Symposium on Parallelism in Algorithms and
  Architectures (SPAA)}, 2011.

\bibitem{blelloch2020optimal}
G.~E. Blelloch, J.~T. Fineman, Y.~Gu, and Y.~Sun.
\newblock Optimal parallel algorithms in the binary-forking model.
\newblock In {\em {ACM} Symposium on Parallelism in Algorithms and
  Architectures (SPAA)}, 2020.

\bibitem{BG04}
G.~E. Blelloch and P.~B. Gibbons.
\newblock Effectively sharing a cache among threads.
\newblock In {\em {ACM} Symposium on Parallelism in Algorithms and
  Architectures (SPAA)}, 2004.

\bibitem{blelloch2010low}
G.~E. Blelloch, P.~B. Gibbons, and H.~V. Simhadri.
\newblock Low depth cache-oblivious algorithms.
\newblock In {\em {ACM} Symposium on Parallelism in Algorithms and
  Architectures (SPAA)}, 2010.

\bibitem{blelloch2018geometry}
G.~E. Blelloch, Y.~Gu, J.~Shun, and Y.~Sun.
\newblock Parallel write-efficient algorithms and data structures for
  computational geometry.
\newblock In {\em {ACM} Symposium on Parallelism in Algorithms and
  Architectures (SPAA)}, 2018.

\bibitem{blelloch2020randomized}
G.~E. Blelloch, Y.~Gu, J.~Shun, and Y.~Sun.
\newblock Randomized incremental convex hull is highly parallel.
\newblock In {\em {ACM} Symposium on Parallelism in Algorithms and
  Architectures (SPAA)}, 2020.

\bibitem{blelloch2016parallel}
G.~E. Blelloch, Y.~Gu, Y.~Sun, and K.~Tangwongsan.
\newblock Parallel shortest paths using radius stepping.
\newblock In {\em {ACM} Symposium on Parallelism in Algorithms and
  Architectures (SPAA)}, 2016.

\bibitem{Blelloch1998}
G.~E. Blelloch and M.~Reid-Miller.
\newblock Fast set operations using treaps.
\newblock In {\em {ACM} Symposium on Parallelism in Algorithms and
  Architectures (SPAA)}, 1998.

\bibitem{blelloch1999pipelining}
G.~E. Blelloch and M.~Reid-Miller.
\newblock Pipelining with futures.
\newblock {\em Theory of Computing Systems (TOCS)}, 32(3), 1999.

\bibitem{BST12}
G.~E. Blelloch, H.~V. Simhadri, and K.~Tangwongsan.
\newblock Parallel and {I/O} efficient set covering algorithms.
\newblock In {\em {ACM} Symposium on Parallelism in Algorithms and
  Architectures (SPAA)}, 2012.

\bibitem{brodal1998parallel}
G.~S. Brodal, J.~L. Tr{\"a}ff, and C.~D. Zaroliagis.
\newblock A parallel priority queue with constant time operations.
\newblock {\em Journal of Parallel and Distributed Computing}, 49(1):4--21,
  1998.

\bibitem{calciu2014adaptive}
I.~Calciu, H.~Mendes, and M.~Herlihy.
\newblock The adaptive priority queue with elimination and combining.
\newblock In {\em International Symposium on Distributed Computing (DISC)},
  pages 406--420. Springer, 2014.

\bibitem{cao2020efficient}
N.~Cao, J.~T. Fineman, and K.~Russell.
\newblock Efficient construction of directed hopsets and parallel approximate
  shortest paths.
\newblock In {\em {ACM} Symposium on Theory of Computing (STOC)}, pages
  336--349, 2020.

\bibitem{chen1994fast}
D.~Z. Chen and X.~S. Hu.
\newblock Fast and efficient operations on parallel priority queues.
\newblock In {\em International Symposium on Algorithms and Computation}, pages
  279--287. Springer, 1994.

\bibitem{chowdhury2017provably}
R.~Chowdhury, P.~Ganapathi, Y.~Tang, and J.~J. Tithi.
\newblock Provably efficient scheduling of cache-oblivious wavefront
  algorithms.
\newblock In {\em {ACM} Symposium on Parallelism in Algorithms and
  Architectures (SPAA)}, pages 339--350, 2017.

\bibitem{CRSB13}
R.~A. Chowdhury, V.~Ramachandran, F.~Silvestri, and B.~Blakeley.
\newblock Oblivious algorithms for multicores and networks of processors.
\newblock {\em Journal of Parallel and Distributed Computing}, 73(7):911--925,
  2013.

\bibitem{cohen1997using}
E.~Cohen.
\newblock Using selective path-doubling for parallel shortest-path
  computations.
\newblock {\em Journal of Algorithms}, 22(1):30--56, 1997.

\bibitem{cohen2000polylog}
E.~Cohen.
\newblock Polylog-time and near-linear work approximation scheme for undirected
  shortest paths.
\newblock {\em Journal of the ACM (JACM)}, 47(1):132--166, 2000.

\bibitem{Cole17}
R.~Cole and V.~Ramachandran.
\newblock Resource oblivious sorting on multicores.
\newblock {\em {ACM} Transactions on Parallel Computing (TOPC)}, 3(4), 2017.

\bibitem{CLRS}
T.~H. Cormen, C.~E. Leiserson, R.~L. Rivest, and C.~Stein.
\newblock {\em Introduction to Algorithms (3rd edition)}.
\newblock MIT Press, 2009.

\bibitem{crauser1998parallelization}
A.~Crauser, K.~Mehlhorn, U.~Meyer, and P.~Sanders.
\newblock A parallelization of dijkstra's shortest path algorithm.
\newblock In {\em International Symposium on Mathematical Foundations of
  Computer Science}, pages 722--731. Springer, 1998.

\bibitem{crupi1996parallel}
V.~A. Crupi, S.~K. Das, and M.~C. Pinotti.
\newblock Parallel and distributed meldable priority queues based on binomial
  heaps.
\newblock In {\em ICPP Workshop on Challenges for Parallel Processing},
  volume~1, pages 255--262. IEEE, 1996.

\bibitem{das1996optimal}
S.~K. Das, M.~C. Pinotti, and F.~Sarkar.
\newblock Optimal and load balanced mapping of parallel priority queues in
  hypercubes.
\newblock {\em {IEEE} Transactions on Parallel and Distributed Systems},
  7(6):555--564, 1996.

\bibitem{davidson2014work}
A.~Davidson, S.~Baxter, M.~Garland, and J.~D. Owens.
\newblock Work-efficient parallel gpu methods for single-source shortest paths.
\newblock In {\em {IEEE} International Parallel and Distributed Processing
  Symposium (IPDPS)}, pages 349--359. IEEE, 2014.

\bibitem{deo1992parallel}
N.~Deo and S.~Prasad.
\newblock Parallel heap: An optimal parallel priority queue.
\newblock {\em The Journal of Supercomputing}, 6(1):87--98, 1992.

\bibitem{dhulipala2017}
L.~Dhulipala, G.~E. Blelloch, and J.~Shun.
\newblock Julienne: {A} framework for parallel graph algorithms using
  work-efficient bucketing.
\newblock In {\em {ACM} Symposium on Parallelism in Algorithms and
  Architectures (SPAA)}, 2017.

\bibitem{Dhulipala2018}
L.~Dhulipala, G.~E. Blelloch, and J.~Shun.
\newblock Theoretically efficient parallel graph algorithms can be fast and
  scalable.
\newblock In {\em {ACM} Symposium on Parallelism in Algorithms and
  Architectures (SPAA)}, 2018.

\bibitem{dhulipala2020semi}
L.~Dhulipala, C.~McGuffey, H.~Kang, Y.~Gu, G.~E. Blelloch, P.~B. Gibbons, and
  J.~Shun.
\newblock Semi-asymmetric parallel graph algorithms for nvrams.
\newblock {\em Proceedings of the VLDB Endowment (PVLDB)}, 13(9), 2020.

\bibitem{dijkstra1959}
E.~W. Dijkstra.
\newblock A note on two problems in connexion with graphs.
\newblock {\em Numerische mathematik}, 1(1), 1959.

\bibitem{dinh2016extending}
D.~Dinh, H.~V. Simhadri, and Y.~Tang.
\newblock Extending the nested parallel model to the nested dataflow model with
  provably efficient schedulers.
\newblock In {\em {ACM} Symposium on Parallelism in Algorithms and
  Architectures (SPAA)}, 2016.

\bibitem{ssspfull}
X.~Dong, Y.~Gu, Y.~Sun, and Y.~Zhang.
\newblock Efficient stepping algorithms and implementations for parallel
  shortest paths.
\newblock {\em arXiv preprint 2105.06145}, 2021.

\bibitem{elkin2019hopsets}
M.~Elkin and O.~Neiman.
\newblock Hopsets with constant hopbound, and applications to approximate
  shortest paths.
\newblock {\em SIAM Journal on Computing}, 48(4):1436--1480, 2019.

\bibitem{ford1956network}
L.~R. Ford~Jr.
\newblock Network flow theory.
\newblock Technical report, Rand Corp Santa Monica Ca, 1956.

\bibitem{fredman1987fibonacci}
M.~L. Fredman and R.~E. Tarjan.
\newblock Fibonacci heaps and their uses in improved network optimization
  algorithms.
\newblock {\em J. {ACM}}, 34(3), 1987.

\bibitem{ghaffari2018improved}
M.~Ghaffari and J.~Li.
\newblock Improved distributed algorithms for exact shortest paths.
\newblock In {\em {ACM} Symposium on Theory of Computing (STOC)}, pages
  431--444, 2018.

\bibitem{gu2021parallel}
Y.~Gu, O.~Obeya, and J.~Shun.
\newblock Parallel in-place algorithms: Theory and practice.
\newblock pages 114--128, 2021.

\bibitem{gu2015top}
Y.~Gu, J.~Shun, Y.~Sun, and G.~E. Blelloch.
\newblock A top-down parallel semisort.
\newblock In {\em {ACM} Symposium on Parallelism in Algorithms and
  Architectures (SPAA)}, 2015.

\bibitem{henzinger2013quantitative}
T.~A. Henzinger, C.~M. Kirsch, H.~Payer, A.~Sezgin, and A.~Sokolova.
\newblock Quantitative relaxation of concurrent data structures.
\newblock In {\em {ACM} Symposium on Principles of Programming Languages
  (POPL)}, pages 317--328, 2013.

\bibitem{KarpR90}
R.~M. Karp and V.~Ramachandran.
\newblock Parallel algorithms for shared-memory machines.
\newblock In {\em Handbook of Theoretical Computer Science, Volume A:
  Algorithms and Complexity (A)}. MIT Press, 1990.

\bibitem{klein1997randomized}
P.~N. Klein and S.~Subramanian.
\newblock A randomized parallel algorithm for single-source shortest paths.
\newblock {\em Journal of Algorithms}, 25(2):205--220, 1997.

\bibitem{kuhn2020computing}
F.~Kuhn and P.~Schneider.
\newblock Computing shortest paths and diameter in the hybrid network model.
\newblock In {\em {ACM-SIAM} Symposium on Discrete Algorithms (SODA)}, pages
  109--118, 2020.

\bibitem{kwak2010twitter}
H.~Kwak, C.~Lee, H.~Park, and S.~Moon.
\newblock What is twitter, a social network or a news media?
\newblock In {\em Proceedings of the 19th international conference on World
  wide web}, pages 591--600, 2010.

\bibitem{li2020faster}
J.~Li.
\newblock Faster parallel algorithm for approximate shortest path.
\newblock In {\em {ACM} Symposium on Theory of Computing (STOC)}, pages
  308--321, 2020.

\bibitem{linden2013skiplist}
J.~Lind{\'e}n and B.~Jonsson.
\newblock A skiplist-based concurrent priority queue with minimal memory
  contention.
\newblock In {\em International Conference On Principles Of Distributed
  Systems}, pages 206--220. Springer, 2013.

\bibitem{liu2012lock}
Y.~Liu and M.~Spear.
\newblock A lock-free, array-based priority queue.
\newblock In {\em {ACM} Symposium on Principles and Practice of Parallel
  Programming (PPOPP)}, pages 323--324, 2012.

\bibitem{malewicz2010pregel}
G.~Malewicz, M.~H. Austern, A.~J. Bik, J.~C. Dehnert, I.~Horn, N.~Leiser, and
  G.~Czajkowski.
\newblock Pregel: a system for large-scale graph processing.
\newblock In {\em Proceedings of the 2010 ACM SIGMOD International Conference
  on Management of data}, pages 135--146, 2010.

\bibitem{webgraph}
R.~Meusel, O.~Lehmberg, and S.~Bizer, Christian~andVigna.
\newblock Web data commons - hyperlink graphs.
\newblock \url{http://webdatacommons.org/hyperlinkgraph}.

\bibitem{meyer2001heaps}
U.~Meyer.
\newblock Heaps are better than buckets: parallel shortest paths on unbalanced
  graphs.
\newblock In {\em European Conference on Parallel Processing}, pages 343--351.
  Springer, 2001.

\bibitem{meyer2001single}
U.~Meyer.
\newblock Single-source shortest-paths on arbitrary directed graphs in linear
  average-case time.
\newblock In {\em {ACM-SIAM} Symposium on Discrete Algorithms (SODA)}, pages
  797--806, 2001.

\bibitem{meyer2002buckets}
U.~Meyer.
\newblock Buckets strike back: Improved parallel shortest-paths.
\newblock In {\em {IEEE} International Parallel and Distributed Processing
  Symposium (IPDPS)}, pages 8--pp. IEEE, 2002.

\bibitem{meyer2003delta}
U.~Meyer and P.~Sanders.
\newblock {$\Delta$-stepping}: a parallelizable shortest path algorithm.
\newblock {\em Journal of Algorithms}, 49(1):114--152, 2003.

\bibitem{miller2015improved}
G.~L. Miller, R.~Peng, A.~Vladu, and S.~C. Xu.
\newblock Improved parallel algorithms for spanners and hopsets.
\newblock In {\em {ACM} Symposium on Parallelism in Algorithms and
  Architectures (SPAA)}, pages 192--201, 2015.

\bibitem{nguyen2013lightweight}
D.~Nguyen, A.~Lenharth, and K.~Pingali.
\newblock A lightweight infrastructure for graph analytics.
\newblock In {\em Proceedings of the Twenty-Fourth ACM Symposium on Operating
  Systems Principles}, pages 456--471, 2013.

\bibitem{pinotti1991parallel}
M.~C. Pinotti and G.~Pucci.
\newblock Parallel priority queues.
\newblock {\em Information Processing Letters}, 40(1):33--40, 1991.

\bibitem{ranade1994parallelism}
A.~Ranade, A.~Cheng, E.~Deprit, J.~Jones, and S.~Shih.
\newblock Parallelism and locality in priority queues.
\newblock In {\em {IEEE} International Parallel and Distributed Processing
  Symposium (IPDPS)}, pages 490--496. IEEE, 1994.

\bibitem{sanders2000fast}
P.~Sanders.
\newblock Fast priority queues for cached memory.
\newblock {\em J. Experimental Algorithmics}.

\bibitem{sanders1998randomized}
P.~Sanders.
\newblock Randomized priority queues for fast parallel access.
\newblock {\em Journal of Parallel and Distributed Computing}, 49(1):86--97,
  1998.

\bibitem{sanders2019sequential}
P.~Sanders, K.~Mehlhorn, M.~Dietzfelbinger, and R.~Dementiev.
\newblock {\em Sequential and Parallel Algorithms and Data Structures}.
\newblock Springer.

\bibitem{shavit2000skiplist}
N.~Shavit and I.~Lotan.
\newblock Skiplist-based concurrent priority queues.
\newblock In {\em {IEEE} International Parallel and Distributed Processing
  Symposium (IPDPS)}, pages 263--268. IEEE, 2000.

\bibitem{Shi1999}
H.~Shi and T.~H. Spencer.
\newblock Time-work tradeoffs of the single-source shortest paths problem.
\newblock {\em Journal of Algorithms}, 30(1):19--32, 1999.

\bibitem{ShunB2013}
J.~Shun and G.~E. Blelloch.
\newblock Ligra: A lightweight graph processing framework for shared memory.
\newblock In {\em {ACM} Symposium on Principles and Practice of Parallel
  Programming (PPOPP)}, 2013.

\bibitem{shun2014phase}
J.~Shun and G.~E. Blelloch.
\newblock Phase-concurrent hash tables for determinism.
\newblock In {\em {ACM} Symposium on Parallelism in Algorithms and
  Architectures (SPAA)}, pages 96--107, 2014.

\bibitem{spencer1997time}
T.~H. Spencer.
\newblock Time-work tradeoffs for parallel algorithms.
\newblock {\em jacm}, 44(5):742--778, 1997.

\bibitem{sun2019implementing}
Y.~Sun and G.~Blelloch.
\newblock Implementing parallel and concurrent tree structures.
\newblock In {\em {ACM} Symposium on Principles and Practice of Parallel
  Programming (PPOPP)}, page 447–450, 2019.

\bibitem{sun2019parallel}
Y.~Sun and G.~E. Blelloch.
\newblock Parallel range, segment and rectangle queries with augmented maps.
\newblock In {\em SIAM Symposium on Algorithm Engineering and Experiments
  (ALENEX)}, pages 159--173, 2019.

\bibitem{sun2018pam}
Y.~Sun, D.~Ferizovic, and G.~E. Blelloch.
\newblock Pam: Parallel augmented maps.
\newblock In {\em {ACM} Symposium on Principles and Practice of Parallel
  Programming (PPOPP)}, 2018.

\bibitem{sundell2005fast}
H.~Sundell and P.~Tsigas.
\newblock Fast and lock-free concurrent priority queues for multi-thread
  systems.
\newblock {\em Journal of Parallel and Distributed Computing}, 65(5):609--627,
  2005.

\bibitem{tseng2019batch}
T.~Tseng, L.~Dhulipala, and G.~Blelloch.
\newblock Batch-parallel euler tour trees.
\newblock In {\em 2019 Proceedings of the Twenty-First Workshop on Algorithm
  Engineering and Experiments (ALENEX)}, pages 92--106. SIAM, 2019.

\bibitem{ullman1991high}
J.~D. Ullman and M.~Yannakakis.
\newblock High-probability parallel transitive-closure algorithms.
\newblock {\em SIAM Journal on Computing}, 20(1):100--125, 1991.

\bibitem{wang2016gunrock}
Y.~Wang, A.~Davidson, Y.~Pan, Y.~Wu, A.~Riffel, and J.~D. Owens.
\newblock Gunrock: A high-performance graph processing library on the gpu.
\newblock In {\em {ACM} Symposium on Principles and Practice of Parallel
  Programming (PPOPP)}, pages 1--12, 2016.

\bibitem{wang2020parallel}
Y.~Wang, S.~Yu, Y.~Gu, and J.~Shun.
\newblock A parallel batch-dynamic data structure for the closest pair problem.
\newblock In {\em ACM Symposium on Computational Geometry (SoCG)}, 2021.

\bibitem{yang2015defining}
J.~Yang and J.~Leskovec.
\newblock Defining and evaluating network communities based on ground-truth.
\newblock {\em Knowledge and Information Systems}, 42(1):181--213, 2015.

\bibitem{zhang2020optimizing}
Y.~Zhang, A.~Brahmakshatriya, X.~Chen, L.~Dhulipala, S.~Kamil, S.~Amarasinghe,
  and J.~Shun.
\newblock Optimizing ordered graph algorithms with graphit.
\newblock In {\em ACM/IEEE International Symposium on Code Generation and
  Optimization (CGO)}, pages 158--170, 2020.

\bibitem{zhang2018graphit}
Y.~Zhang, M.~Yang, R.~Baghdadi, S.~Kamil, J.~Shun, and S.~Amarasinghe.
\newblock Graphit: A high-performance graph dsl.
\newblock {\em Proceedings of the ACM on Programming Languages},
  2(OOPSLA):1--30, 2018.

\bibitem{zhou2019practical}
T.~Zhou, M.~Michael, and M.~Spear.
\newblock A practical, scalable, relaxed priority queue.
\newblock In {\em International Conference on Parallel Processing (ICPP)},
  pages 1--10, 2019.

\bibitem{zhu2016gemini}
X.~Zhu, W.~Chen, W.~Zheng, and X.~Ma.
\newblock Gemini: A computation-centric distributed graph processing system.
\newblock In {\em USENIX conference on Operating Systems Design and
  Implementation (OSDI)}, pages 301--316, 2016.

\end{thebibliography}

\ifx\fullversion\undefined
\else
\appendix

\section{Lower Bound for Batch-Update Work}\label{app:lower-bound}

We now show that in order to preserve the total ordering of all records, we need at least $O(\log (n+b))$ work per update, where $n$ is the number of records in the priority queue.
For a batch update of $b$ unordered records, there are ${b+n\choose b} \cdot b!$ total possible input cases.
In the comparison model, the lower bound for work of the batch update is:
 \begin{align*}
     \log_2\left({{b+n\choose b} \cdot b!}\right)=&\log_2\left({(b+n)!\over b!n!}\cdot b!\right)\\
     =&\log_2\left({\prod_{i=1}^{b}{(n+i)}}\right)\\=&\Omega(b\log (n+b))
 \end{align*}

Hence, to achieve better bounds, like in \cref{thm:main-tour-tree}, we cannot maintain the total ordering of all records.

\section{Finding the $\rho$-th Element in a List}\label{app:cpam}

We now discuss how to find the $\rho$-th element in a list, which is used in \ssspours.
One solution is to pick $s=c(n/\rho+\log n)$ random samples where $c>1$ is a constant.
We can then sort the samples and pick the $(\rho s/n)$-th one.
By setting up the parameters correctly and using Chernoff Bound, we can show that with high probability, the selected output is within $(1-\epsilon)\rho$-th and $(1+\epsilon)\rho$-th element for some constant $0<\epsilon<1$.
We can check if the output is in this range by calling \extract of the \BDPQ{}, and if not, we can just retry until succeed.
We note that instead of finding the exact $\rho$-th element, using an approximate $\rho'$-th element within a constant factor of $\rho$ will not affect any bounds for \ssspours.
In practice, we set $c=10$ and run the entire sampling algorithm sequentially, and this simple approach always gives a satisfactory output.
However, the work and span bounds are high probability bounds, instead of deterministic bounds.
Also, it only applies to the case when $\rho=\Omega(\sqrt{n})$.
Otherwise, the sorting cost will dominate.

We are aware of a data structure~\cite{cpam} (unpublished work) that can find the $\rho$-th element in a list efficiently.
More accurately, it takes $O(\rho)$ work and $O(\log n)$ span to find an approximate $\rho$-th element, and $O(\log (n/b))$ work per element in a batch insertion or a batch deletion of $b$ elements.
Basically, this is a blocked linked list data structure that looks like a binary search tree, but each leaf contains a block of $[\rho,3\rho]$ unsorted elements.
When the leaf grows too big or too small, we split or merge it with an amortized constant work per update and $O(\log n)$ span.
Hence, the cost per update is $O(\log (n/b))$ (to find the leaf node) plus a constant (amortized work for future leaf split or merge).
Since the first leaf contains the smallest $\rho$ to $3\rho$ \record{s}, we can just pick the largest key among this leaf.
This gives us a key ranked within $\rho$ and $3\rho$, with the exception if we have fewer than $\rho$ records in total (in this case the largest is returned).

Using this data structure requires some changes since we have to explicitly generate the batch of updates (relaxations).
Theoretically, this is achievable as described below, but in practice the overhead can be significant so we do not use it.
This step is essentially Ligra's approach to generate the next frontier~\cite{ShunB2013}.
For each step, we allocate an array, and the size is the number of total neighbors of the vertices processed in this step.
Then we mark a flag for each successful relaxation, and after all relaxations are completed, we pack this array, and this is the batch of updates that applies to the next frontier (the priority queue in our case).
This step takes linear work and logarithmic span, so the cost is asymptotically bounded by the relaxation cost.


\section{Applying \BDPQ{} to Shi-Spencer}
\label{app:shi-spencer}

Shi-Spencer algorithm~\cite{Shi1999} is a parallel SSSP algorithm with theoretical guarantee.
This algorithm is complicated, which harms its practicability.
We note that if we use our tournament-tree-based \BDPQ{s} to replace the parallel search trees in Shi-Spencer algorithm, we can improve the work bound by up to a logarithmic factor. This algorithm needs preprocessing to shortcut each vertex to its $\rho$ nearest vertices\footnote{The notation used in the original paper is $k$-nearest vertices, we use $\rho$ here to be consistent with the results in our paper.}.

The main bottleneck in Shi-Spencer algorithm is to maintain a binary search tree for each vertex that keeps track of its $\rho$ nearest neighbors.
Once a vertex is settled, it will be removed from all the lists of their neighbors.
Such removals can happen for $O(m+n\rho)$ times in total.
The algorithm requires querying and extracting the $\rho$ nearest vertices of a vertex, which is used for $O(n)$ times in total.
In their original algorithm, they use parallel 2-3 trees to maintain the closest neighbor sets, which incur a logarithmic cost per update.

We note that the functionalities can be implemented by our \tourt-based \BDPQ{s}.
More importantly, similar to the stepping algorithms, since there are more updates than extractions, we can apply the updates lazily for better work efficiency.
In this case, the worst case is all the updates distribute evenly---each extract needs to first lazily update $O(m/n+\rho)$ previous updates and then apply the $O(n)$ extractions.
By the distribution lemma (\cref{lem:distribute}), the total work is:
$$O\left((m+n\rho)\log\frac{n\cdot n}{m+n\rho}\right)=O\left((m+n\rho)\log\frac{n^2}{m+n\rho}\right)$$
Hence, for dense graphs or if the algorithm picks a large $\rho$, the new work can be improved by up to a logarithm factor.
We note that a few other search trees and data structures are maintained in Shi-Spencer, but the total cost is  asymptotically bounded by the cost discussed in the above paragraph.

\section{Fully Dynamic \BDPQ}\label{sec:fully-dynamic}

In \cref{sec:bdpq} we discuss and analyze the \BDPQ's implementations assuming the universe of the records has a fixed size $n$, which is $|V|$ for SSSP.
This is good enough to derive the bounds in \cref{tab:bounds}.
However, we note that it is not hard to extend our algorithms to deal with the case without this assumption.

To do so, we need an explicit batch of updates, which can include insertions, deletions, and/or just key updates.
Our algorithm will in turn process the same type of operations.
For insertions (say $k$ in total), we can grow the tree size from $2n-1$ to $2(n+k)-1$, copy the leaf nodes from the range of $n$ to $n+k-1$ to the range of $2n$ to $2n+k-1$, insert the new keys to the range of $2n+k$ to $2(n+k)-1$ ($k$ in total), and update the corresponding tree paths.
The cost for $k$ insertions is $O(k+\log n)$.
For deletions (again, say $k$ in total), we can use the last tree leaves to fill in the holes for the deletions, and update the tree paths.
The cost for $k$ deletions is $O(k\log (n/k))$.
For updates, we can directly apply \cref{alg:tour-tree}.
All of them can be parallelized and have $O(\log n)$ span by a divide-and-conquer approach to update tree paths.

\section{Implementation Details for Resizable Hash Tables}\label{app:resize-ht}

As mentioned in \cref{sec:implementation}, we use a resizable hash table \variablename{next\_frontier} to maintain the frontier in our sparse version of stepping algorithm implementations.
Whenever a vertex $v$ is to be added to the next frontier, we pick a slot $i$ uniformly at random, and attempt to write $v$ to \variablename{next\_frontier[i]} using \cas{}, and use linear probing if there is collision or conflict due to concurrency. During the process, we keep an estimation of the size of \variablename{next\_frontier}, and resize accordingly. For frontier size $s$, the space used by the hash table is always $O(s)$.
We use two hash tables, alternating them to be the current and next frontiers.

It is worth noting that, even in resizing we do not explicitly allocate memory or move data. We will start with allocating an array of size $n$, and adjust tail pointer of the array to control the current hash table size. When the load factor of the current hash table reaches a pre-defined constant, we enlarge the hash table. We do this by moving the tail pointer to be twice the current size, and let all future scatters go to the last half of the hash table.
For each step, we always start from a pre-defined \texttt{MIN\_SIZE} of the hash table. Since we only insert to the hash table, we do not need to shrink the hash table. We show a pseudocode below to illustrate the process.


\begin{verbatim}
hash_table {
  vertex* table;
  int offset, tail, est_size;
  double SAMPLE_RATE;
  int MIN_SIZE;

  hash_table(int n) {
    table = new vertex[n];
    init(); }

  increment_size() {
    if (flip_coin(SAMPLE_RATE)) fetch_and_add(&est_size);}

  init() {offset = 0; tail = MIN_SIZE;}

  insert(vertex u) {
    slot = random_slot(u) + offset;
    while(!CAS(&table[slot], empty, u))
      slot = next_slot(slot);
    increment_size();
    if (size > est_size) {
      offset = tail;
      tail = tail*2; }
  }
};

hash_table* current_frontier, next_frontier;

parallel_for (v in current_frontier) {
  parallel_for (u is v's out-neighbor) {
    if (v relaxes u) {
      next_frontier.insert(u);
    } } }
swap(current_frontier, next_frontier);
next_frontier->init();
\end{verbatim}

\hide{
\section{$k$-$\rho$ property for different graphs}
\label{sec:krho}

To better understand the properties of different graphs, we show the $\rho$-$k_\rho$ curves in \cref{fig:k-rho}. We note that computing the exact value of $k_\rho$ is expensive (which makes \radiuss{} impractical), and we estimate the value using 100 samples.
As mentioned, this indicates how much potential parallelism we can get on the original graph (i.e., without shortcuts). We show the value of $k_\rho$ when $\rho$ is $\log n$, $\sqrt{n}$, $n/\log n$, $n/10$, and $n$, respectively. Note that $k_n$ is the shortest path tree depth.

On all \sfgraph{s}, we observe that $k_n$ is $O(\log n)$ (about $2\log n$). Interestingly, all \sfgraph{s} are $(\log n,\sqrt{n})$-graph, which means that almost all vertices can reach their $\sqrt{n}$ nearest vertices by $\log n$ hops.
This is not surprising for \sfgraph{s}, which have some ``hubs'' that are well connected to other vertices.
These vertices are easy to be reached by any source in a few steps.
Once reached, they can reach a lot of other vertices, which quickly accumulate $\sqrt{n}$ nearest neighbors for any source.
The road graphs have a different $\rho$-$k_\rho$ pattern.
On GE and USA, it takes more than $100$ hops to reach $\sqrt{n}$ nearest vertices, and $O(\sqrt{n})$ steps to reach $O(n)$ nearest vertices.
We believe the result is reasonable since road graphs are (almost) planar graphs.
}

\section{Experiment on Different Implementations for \BDPQ{}}
\label{sec:exp:winningtree}
We also use a simple experiment to test the relative performance of array-based and \tourt-based \BDPQ{}.
We note that for \lazyinsert{}, our array-based implementation uses a hash-table-like data structure, which is only $O(1)$ work per insertion, while \tourt{} needs up to logarithmic work.
For \extract{}, the array-based implementation needs $O(n)$ time, while \tourt{} needs $O(\rho\log(n/\rho))$ time to extract the top $\rho$ \record{s}.
To estimate which one will give better performance in practice, we simulate the \extract{} function and compare the performance for the two data structures.

\begin{figure}[th]
  \centering
  \includegraphics[width=\columnwidth]{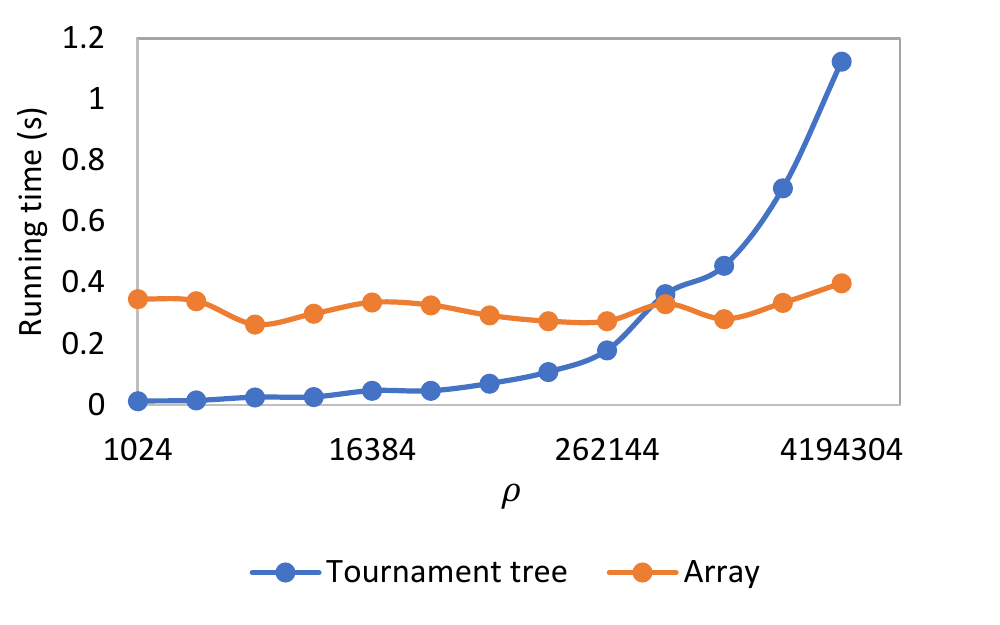}
  \caption{\textbf{Running time of extracting $\rho$ elements from \tourt{}- and array- based implementation of \BDPQ{}, respectively.}}\label{fig:tournament}
\end{figure}

To test the performance of \extract{}, we vary the value of $\rho$. We first initialize the priority queue with $n=10^8$ \record{s}, which is about the same order of magnitude as the number of vertices in the graphs we tested.  Our \tourt{} is also implemented in flat arrays, avoiding expensive pointer-based structure.
We accumulate 10 runs of \extract{}. The result is shown in \cref{fig:tournament}. The running time of array-based implementation is stable and is always around 0.35s. This is because the cost of array-based \BDPQ{} is $O(n)$, which is not affected by the value of $\rho$. The cost of the \tourt{} increases with the value of $\rho$.
When $\rho$ is larger than $2^{19}$, the array-based implementation achieves better performance.
We note that the setting is advantageous to \tourt{}-based implementation since we do not consider the update cost ($O(1)$ for arrays and $O(\log n)$ for \tourt{s}), and we always assume the frontier size to be large ($10^8$).
For the social networks we tested, the value of $\rho$ is usually larger than $2^{19}$. Therefore we always use array-based implementation of \BDPQ{}. For the road graphs, the value of $\rho$ is slightly below $2^{19}$. As mentioned, this experiment does not consider update cost. Also, on road networks, the frontier size is also smaller (much smaller than $10^8$), which is favorable to array-based implementation. Based on these reasons, we choose to use the array-based data structures to implement our stepping algorithms.
From \cref{fig:visted-per-step}, for \deltas{}, in the dense rounds (which dominate total running time), the number of vertices to be extracted is also reasonably large.
It is an interesting future work to see if \tourt{}-based implementation can be more efficient on certain graphs that prefer a small $\rho$ in \ssspours{}.

\section{Experiment on different source vertices}

In previous experiments, we show the average running time on multiple source vertices on each graph.
Now we show if different algorithms perform differently from the sources on the same graph.
To do so, we randomly pick 100 source vertices in Friendster and Twitter, and test the performance from each of them.
The results in shown in \cref{fig:source}.
Although the running time on these vertices and graphs differs, we can see that \ssspours{} is consistently faster than the other two implementations, except for four vertices in Twitter.
We conclude that starting from fringe vertices and hub vertices does not make a significant difference in the relative performance between Bellman-Ford, $\Delta^*$-stepping, and \ssspours{}.

\begin{figure}[ht]
  \centering
  \begin{tabular}{c}
    \includegraphics[width=\columnwidth]{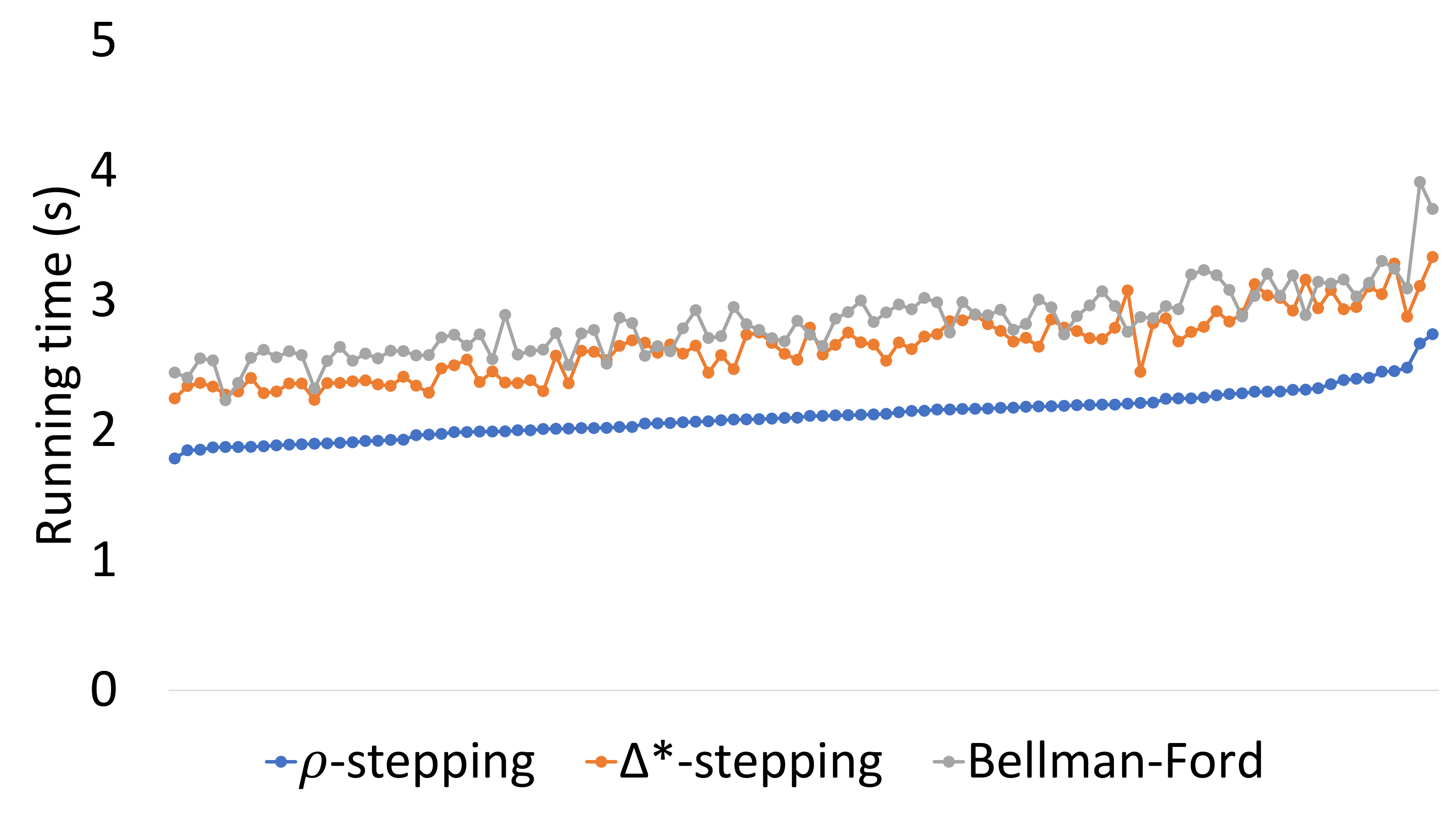} \\
    \bf (a). FT  \\
    \includegraphics[width=\columnwidth]{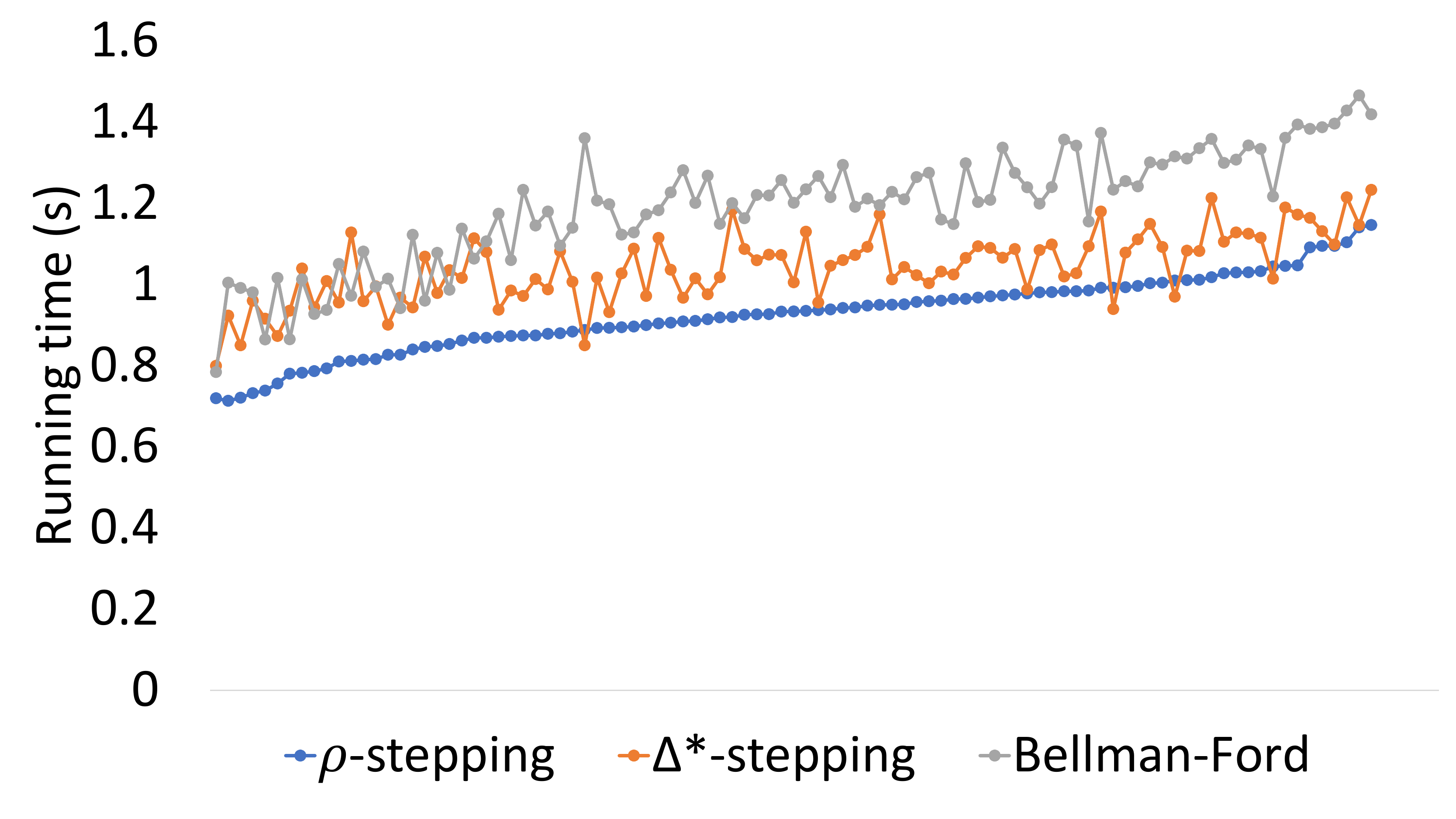} \\
    \bf (b). TW \\
  \end{tabular}
  \caption{\textbf{Running time on difference source vertices.}}\label{fig:source}
\end{figure}

\section{Experiment on a different machine}

We also experiment on whether different machines affect the best $\rho$ choice. We launched a virtual machine in Amazon Web Services (AWS).
This single-socket machine is equipped with Intel Xeon Platinum 8175M CPUs with a total of 24 cores (48 hyperthreads). It has 192GB of main memory and 33MB L3 cache.
Other setting is the same as the previous experiments as mentioned in \cref{sec:exp:setup}. We present the result in \cref{fig:AWS-rho-time}.
Surprisingly, we found out that taking $\rho=2^{21}$ (or any value between $2^{20}$ to $2^{22}$) is still a good choice for all graphs for this different machine, although it has
much fewer cores.
Comparing \cref{fig:AWS-rho-time,fig:rho-time}, the machine with fewer cores shows a less stable performance especially when $\rho$ is small.
We note that there are also other parameters in the implementation that may affect the pattern of the relative performance of different $\rho$ values (e.g., the sparse-dense threshold).
In general, we do note that different hardware settings (e.g., cache size, number of sockets, number of threads) can affect the relative performance with different values of $\rho$, which is more significant for the smaller values of $\rho$.

\begin{figure}[ht]
  \centering
  \includegraphics[width=\columnwidth]{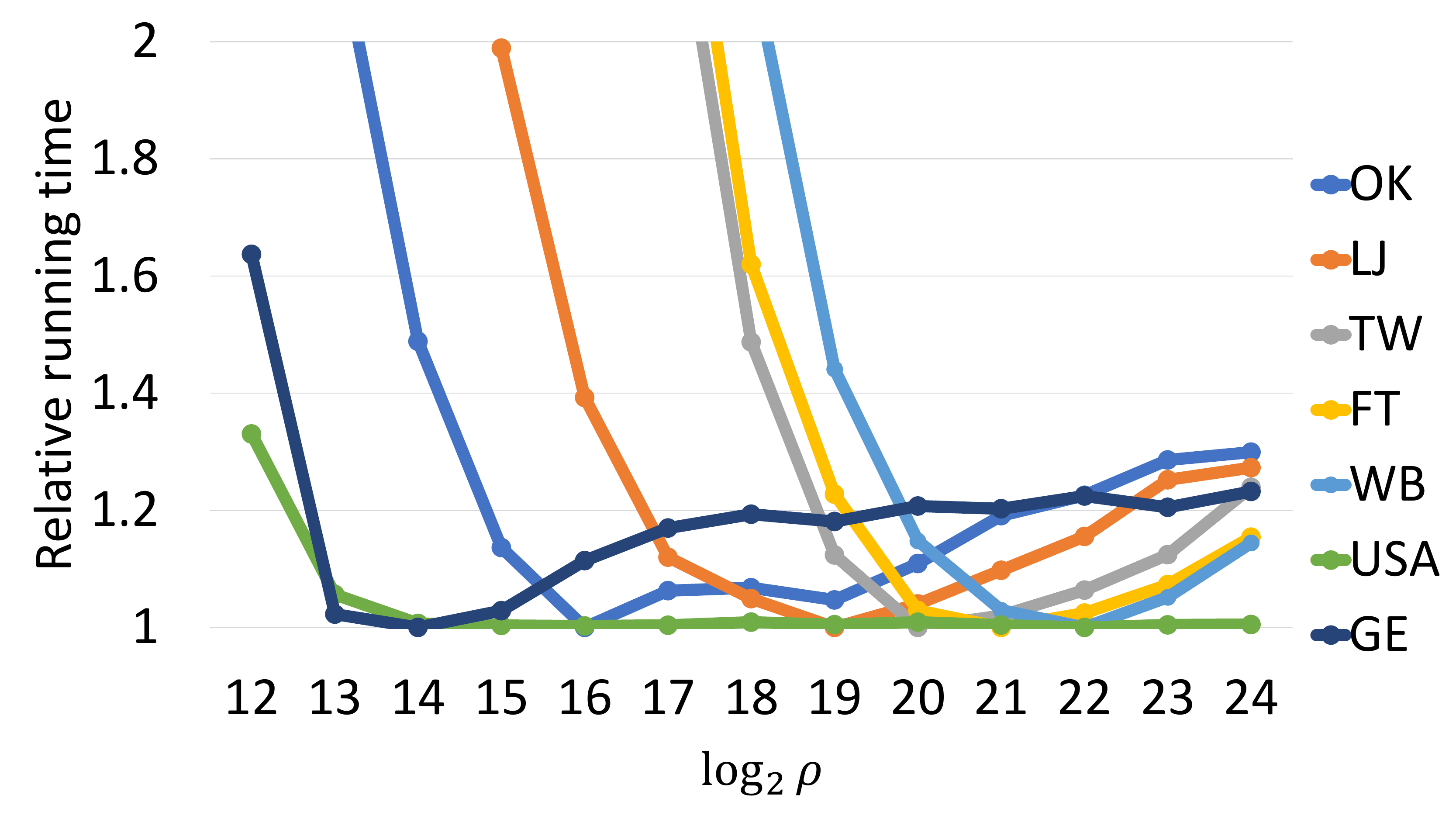}
  \caption{\textbf{Relative running time of \ssspours{} with varied $\rho$ on the AWS machine.}}\label{fig:AWS-rho-time}
\end{figure}

\begin{figure*}[t]
  \centering
  \begin{tabular}{cccc}
    \multicolumn{4}{c}{\bf Friendster (FT):}\\
    \bf GAPBS &\bf Galois&\bf Julinne& \bf Ours (\ourdelta)\\
    \includegraphics[width=0.5\columnwidth]{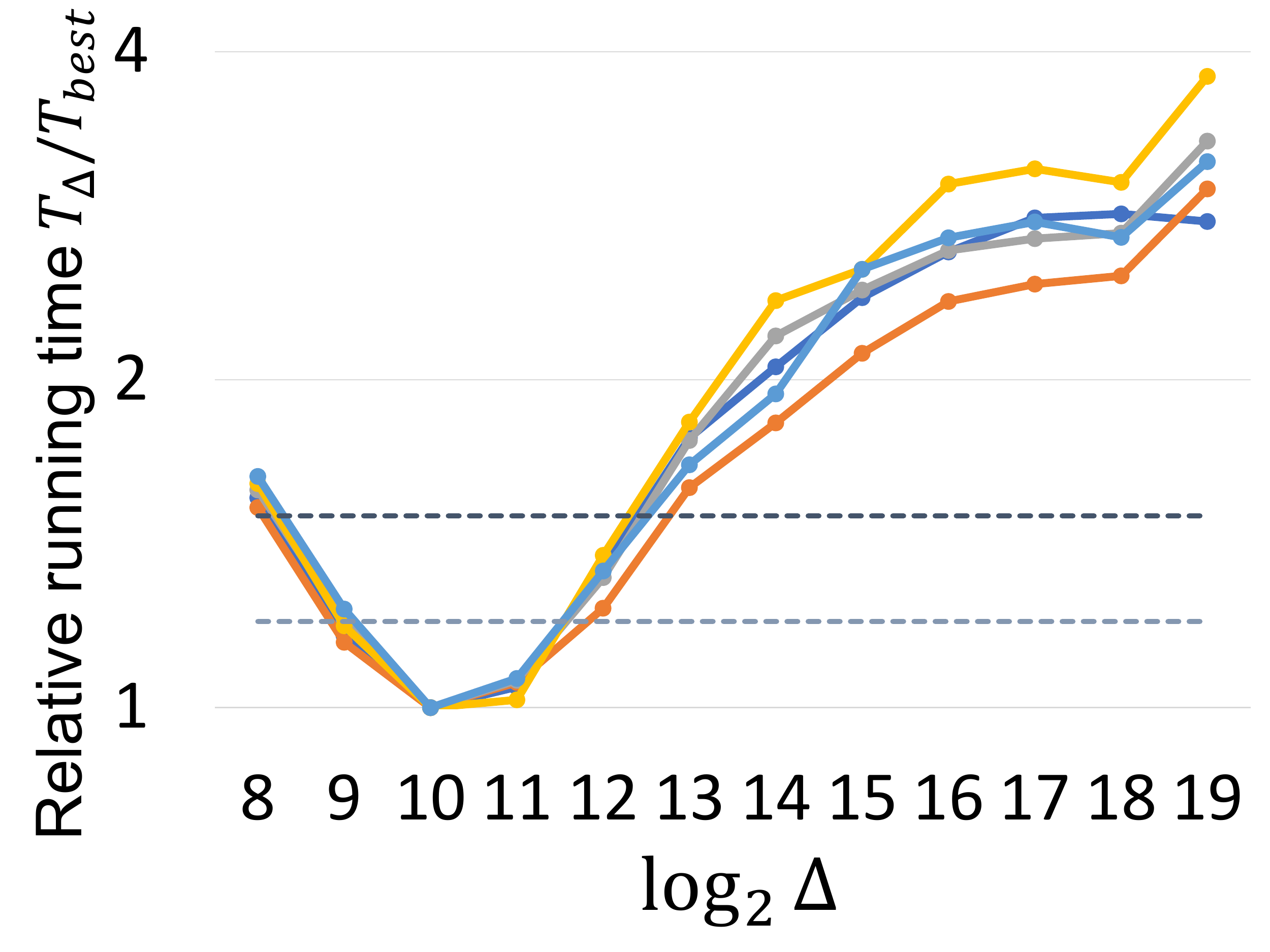} &
    \includegraphics[width=0.5\columnwidth]{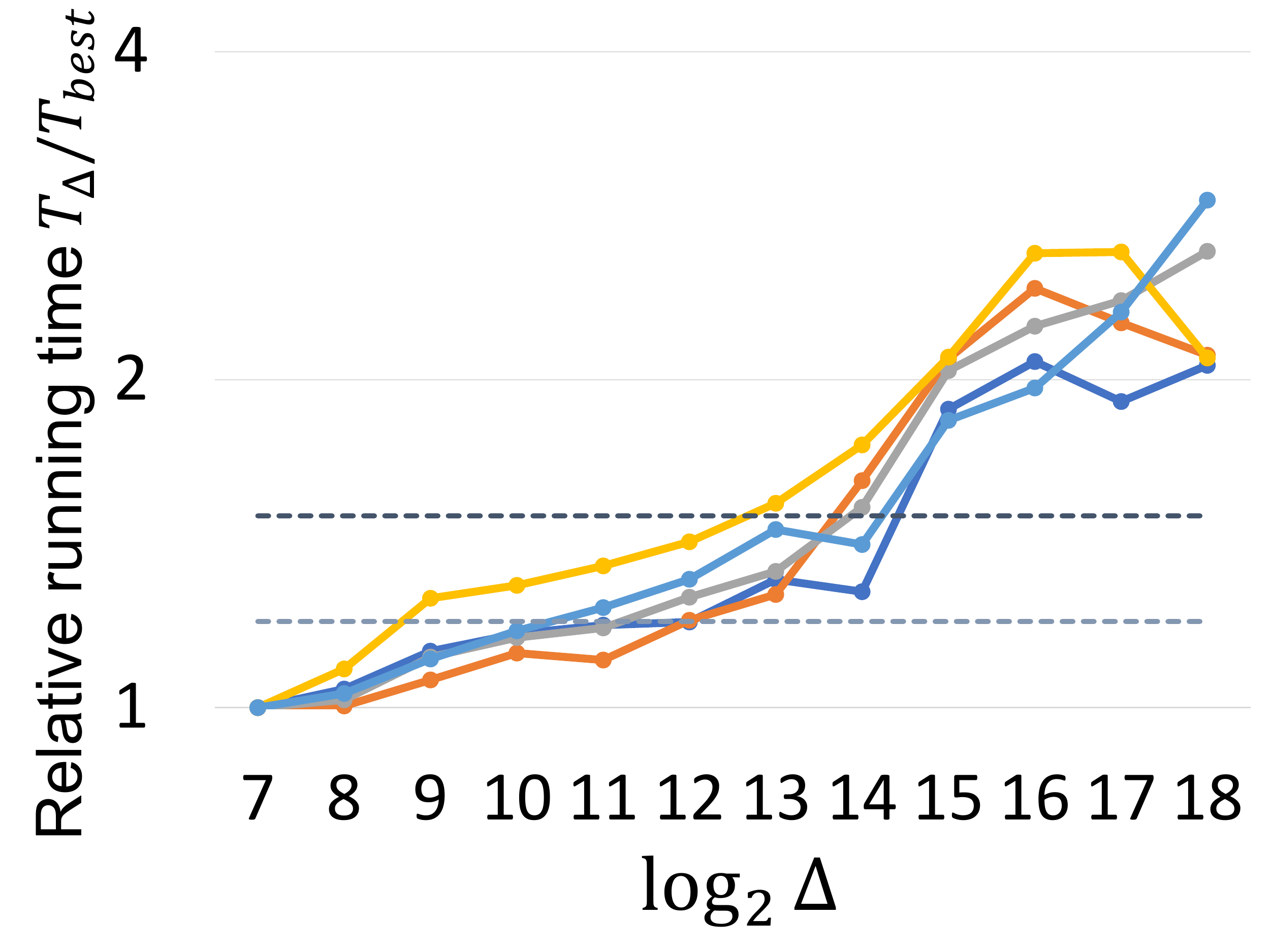} &
    \includegraphics[width=0.5\columnwidth]{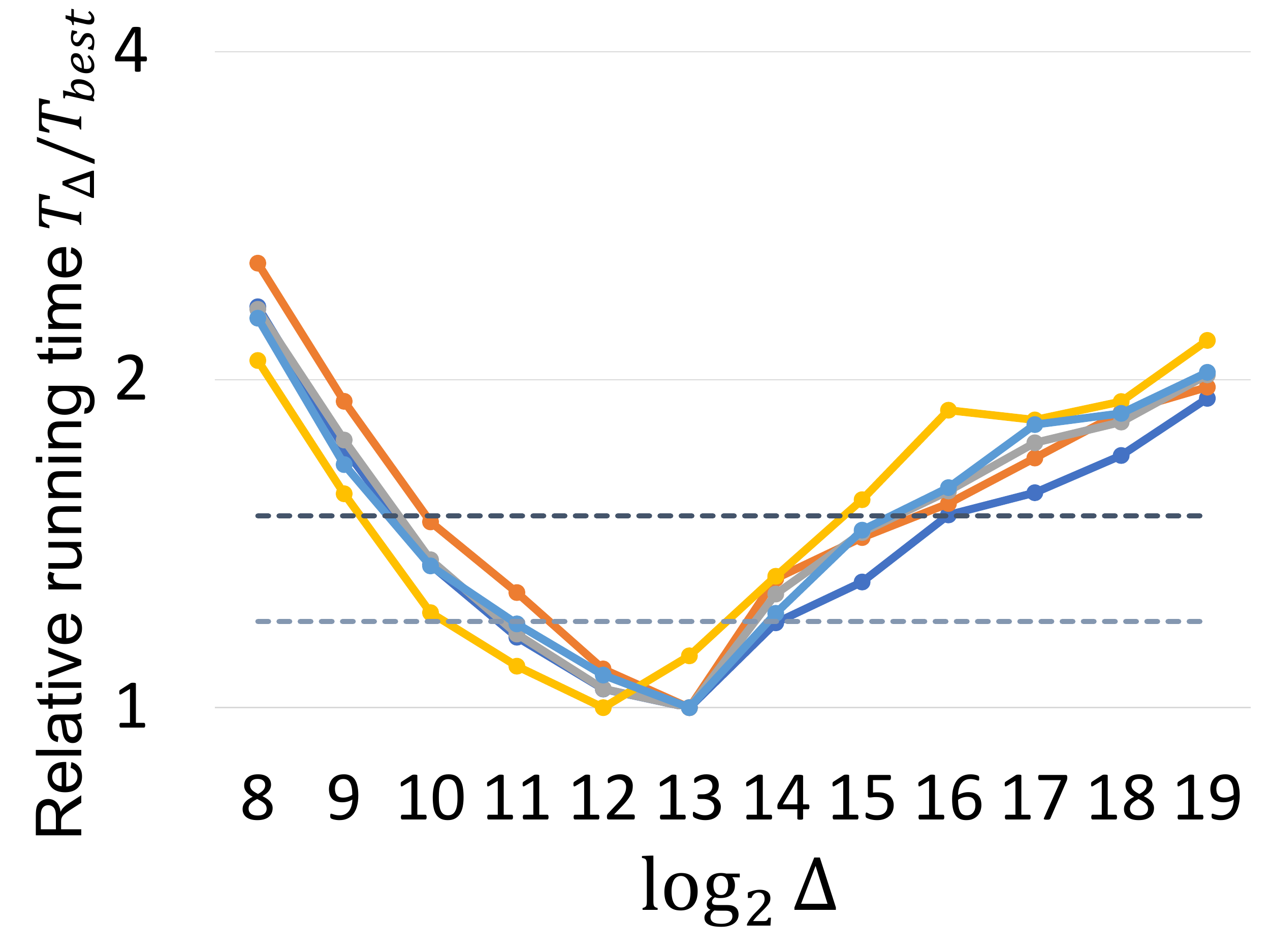} &
    \includegraphics[width=0.5\columnwidth]{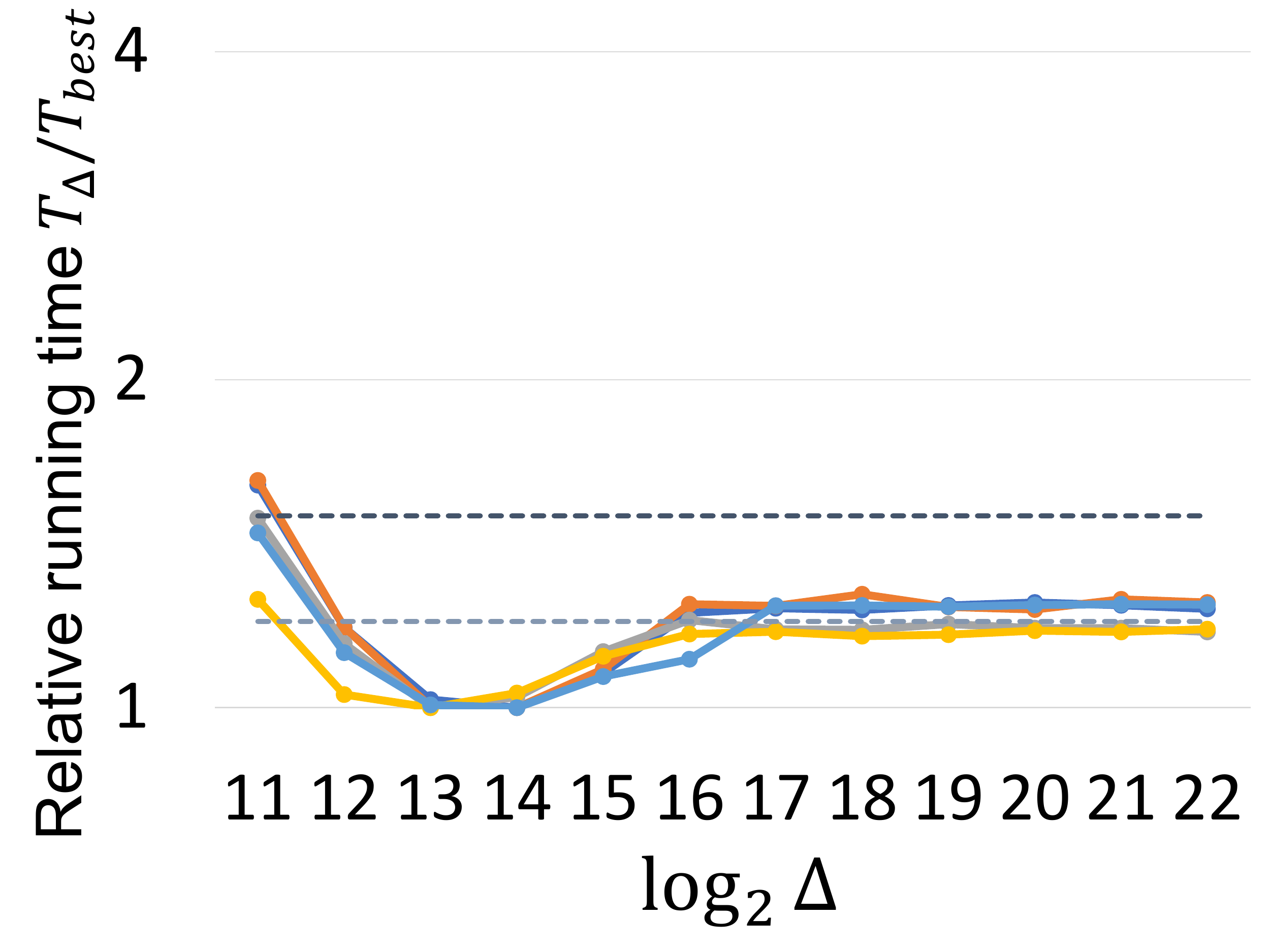}
      \\
    \multicolumn{4}{c}{\bf WebGraph (WB):}\\
    \bf GAPBS &\bf Galois&\bf Julinne& \bf Ours (\ourdelta)\\
    \includegraphics[width=0.5\columnwidth]{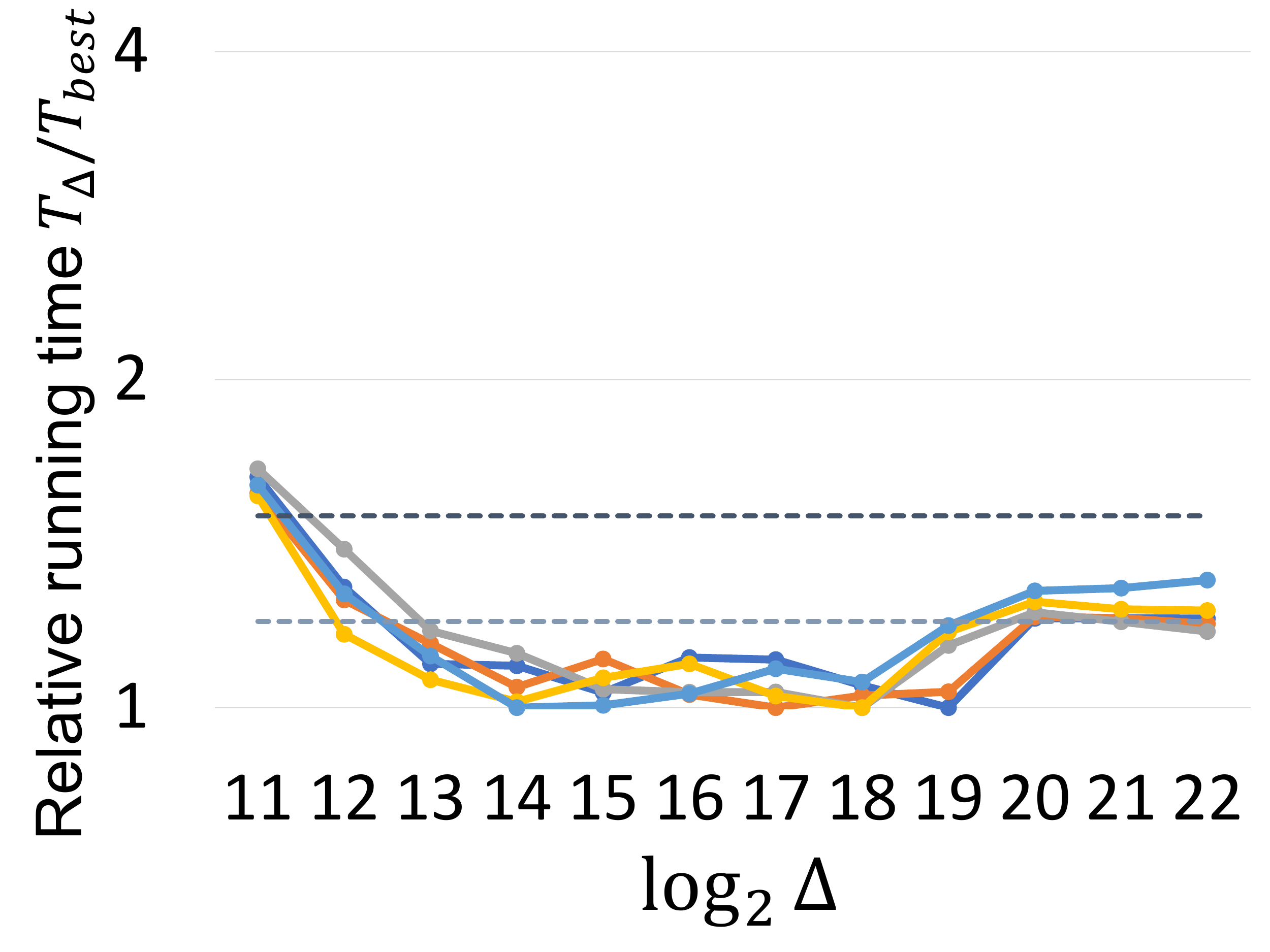} &
    \includegraphics[width=0.5\columnwidth]{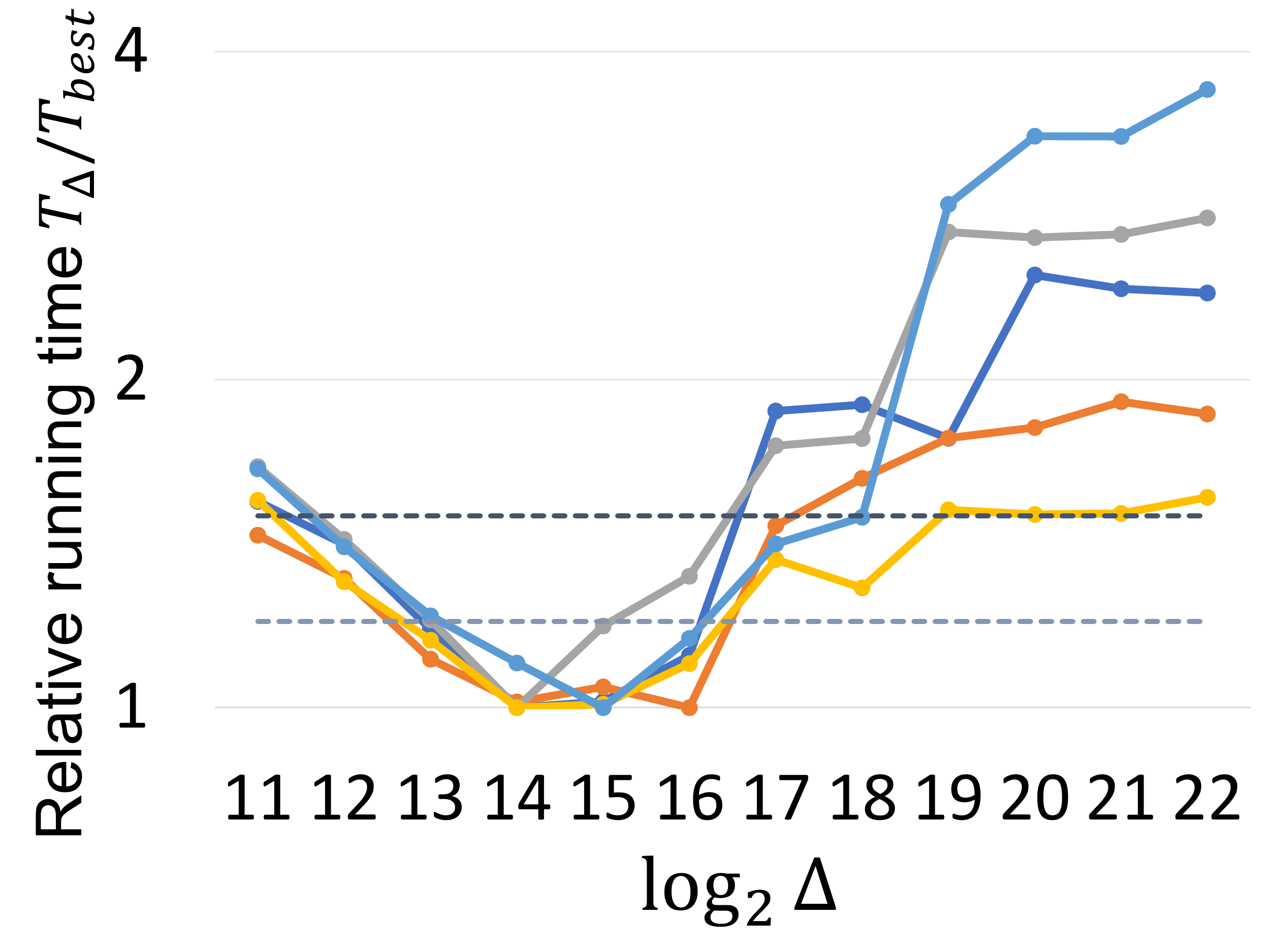} &
    \includegraphics[width=0.5\columnwidth]{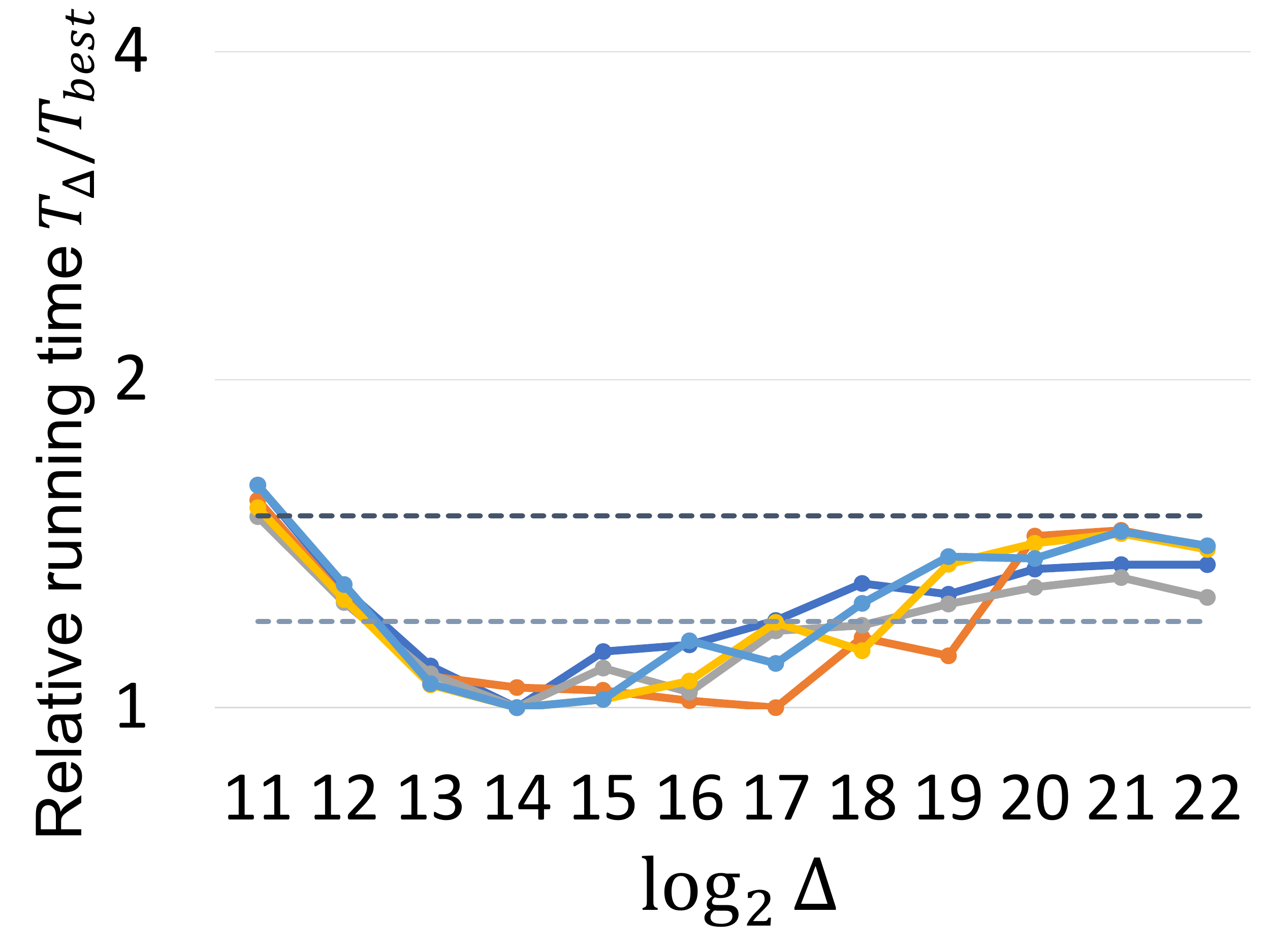} &
    \includegraphics[width=0.5\columnwidth]{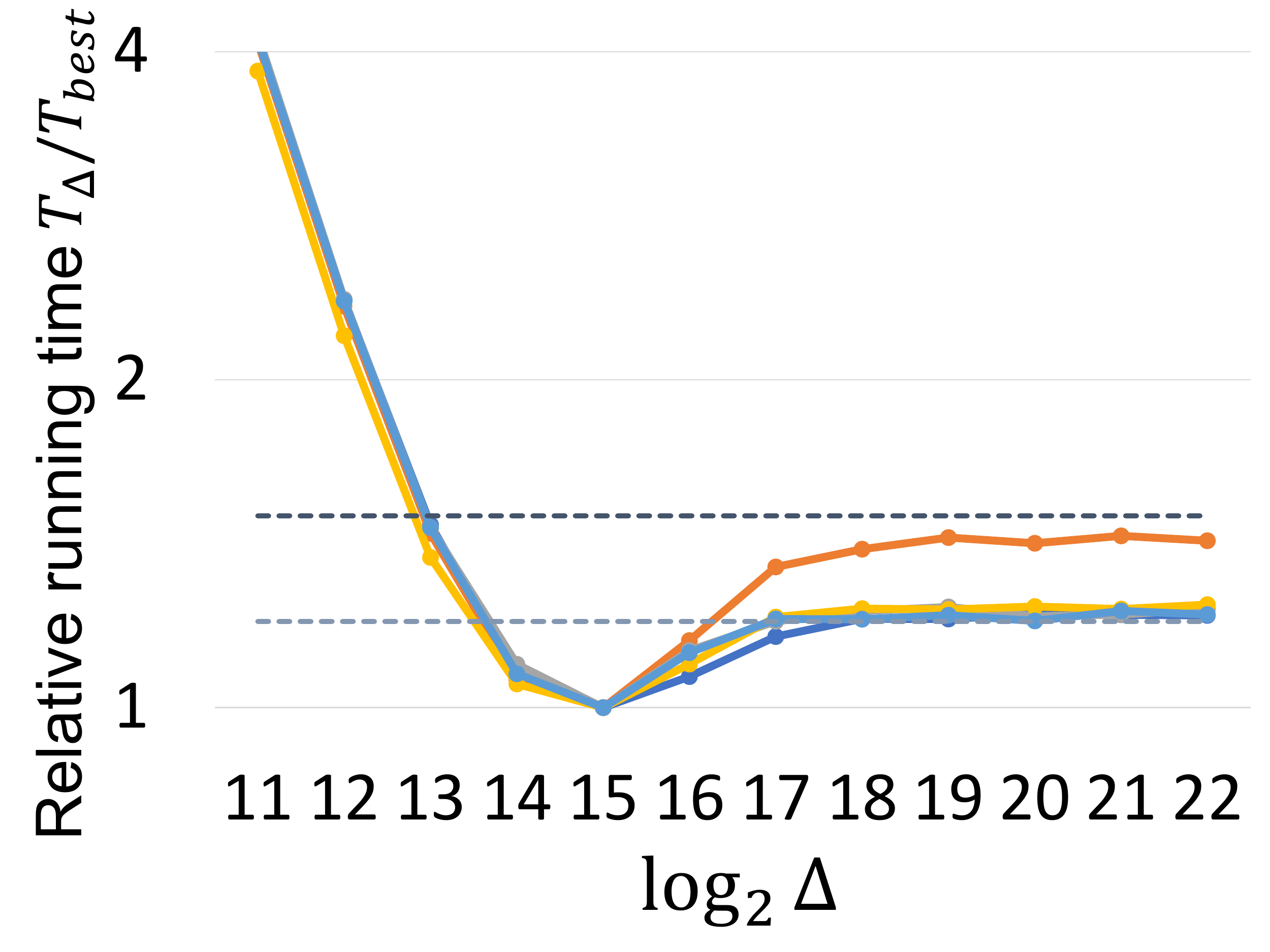}
      \\
  \end{tabular}
  \caption{\textbf{Relative running time with varying $\Delta$ of multiple \deltas{} implementations on different sources.} The five lines with different colors represent five sources. For each source, we normalize the running time to the corresponding fastest time. For each of the implementations, we use a window of $2^{11}$ around the best value of $\Delta$.  The best value of $\Delta$ is relatively stable for all implementations, except for a few instances (e.g., GAPBS and Julienne on WebGraph).
  Generally speaking, the best $\Delta$ for one source node makes another node at most 20\% slower.
  Also, it seems the result on directed graphs (WB) is more unstable than undirected graphs (FT).
  On both WB and FT, our \ourdelta{} based on \deltass{} shows very stable performance.
  }\label{fig:differentsource}
\end{figure*}

\begin{figure*}[t]
  \begin{tabular}{cccc}
    \multicolumn{4}{c}{\includegraphics[width=1.3\columnwidth]{figures-new/delta/caption.pdf}}\\
    \includegraphics[width=0.5\columnwidth]{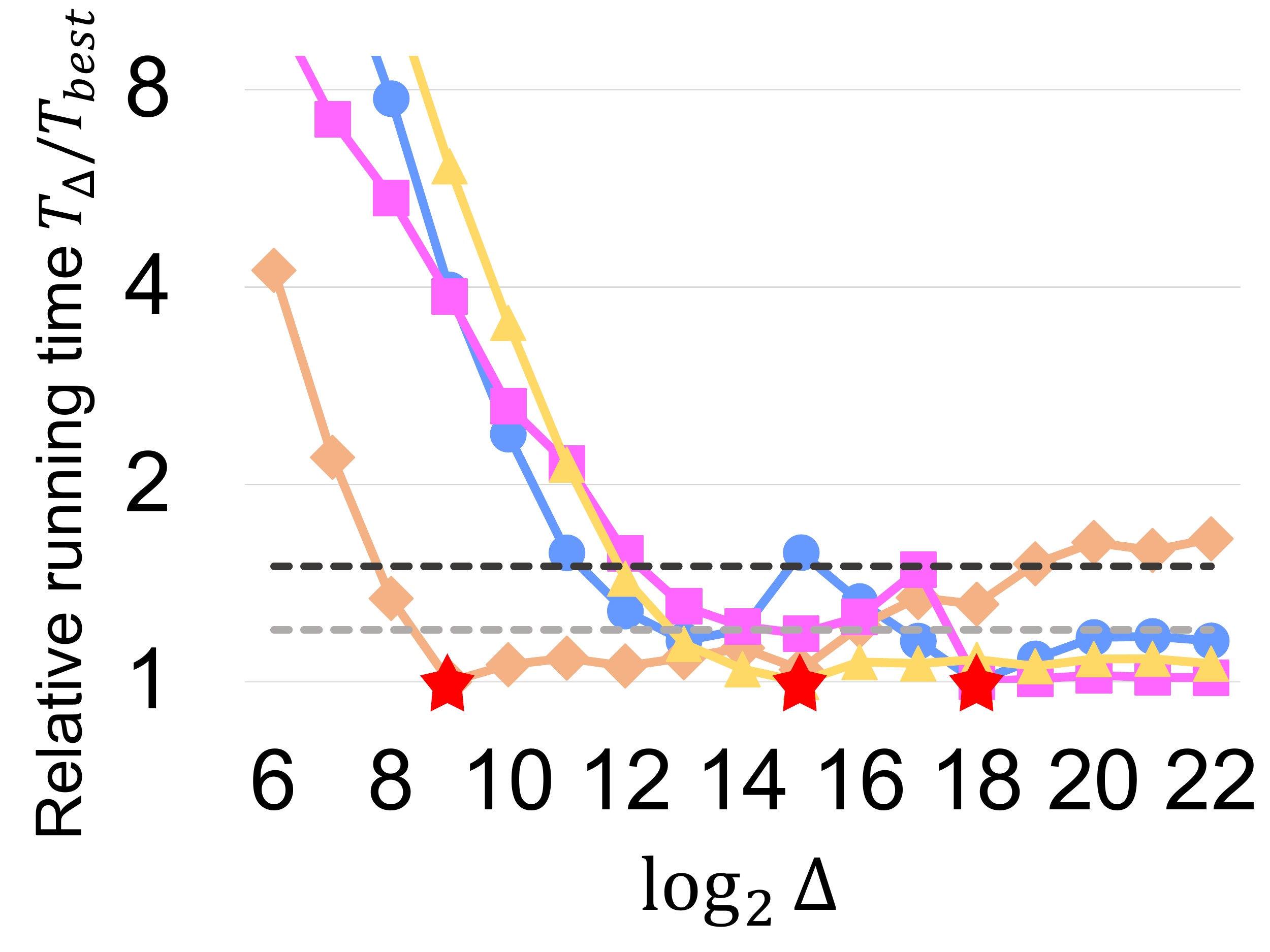} & \includegraphics[width=0.5\columnwidth]{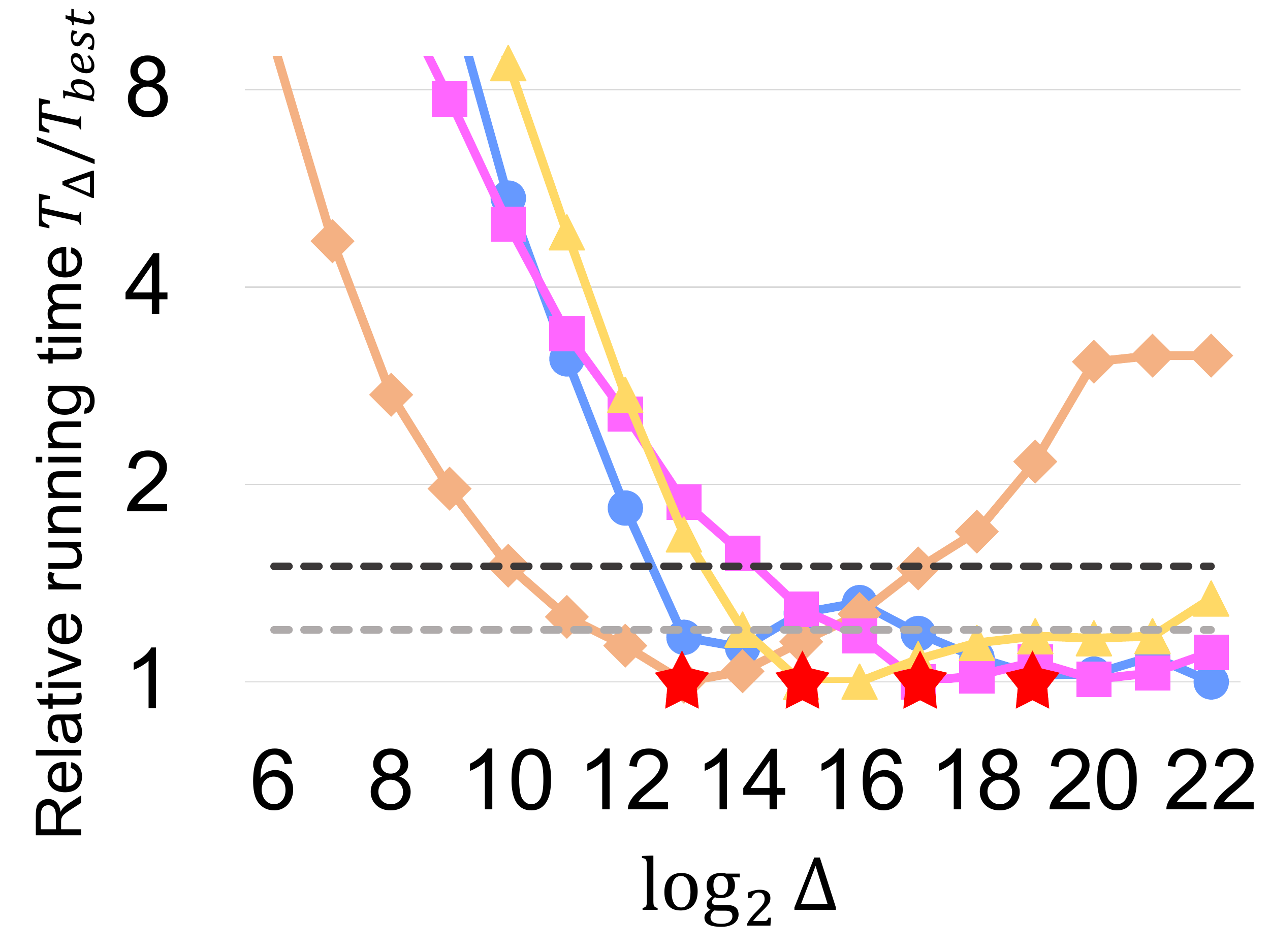} & \includegraphics[width=0.5\columnwidth]{figures-new/delta/delta_twitter.pdf} &
    \includegraphics[width=0.5\columnwidth]{figures-new/delta/delta_friendster.pdf} \\
\bf (a). OK& \bf (b). LJ & \bf (c). TW & \bf (d). FT \\
  \end{tabular}
  \begin{tabular}{ccc}
\includegraphics[width=0.5\columnwidth]{figures-new/delta/delta_web.pdf} & \includegraphics[width=0.5\columnwidth]{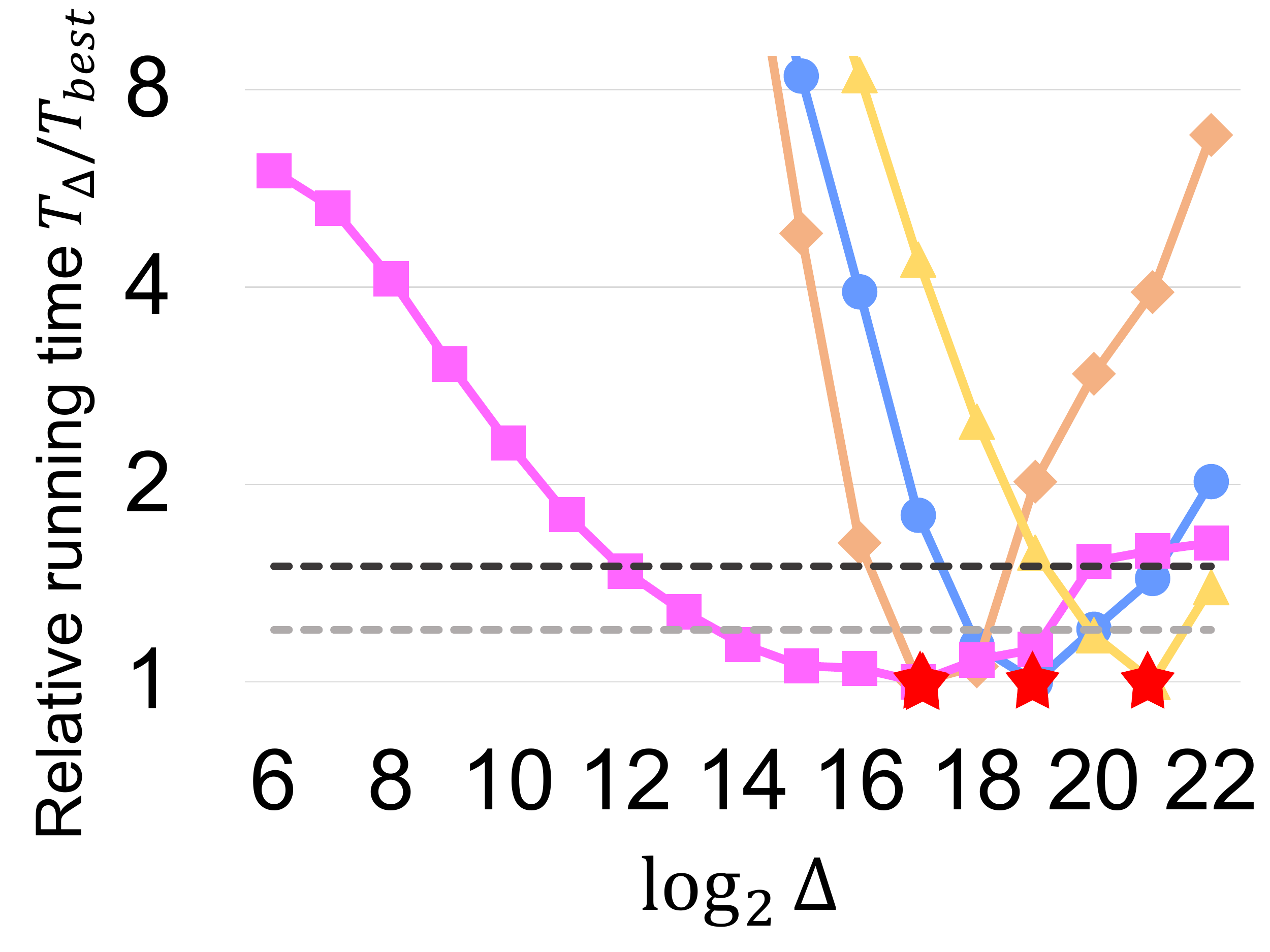} &
\includegraphics[width=0.5\columnwidth]{figures-new/delta/delta_US.pdf}\\
\bf (e). WB&\bf (f). GE&\bf (g). USA\\
  \end{tabular}
  \caption{\textbf{$\Delta$-stepping relative running time with varying $\Delta$.} We use 96 cores (192 hyperthreads).}\label{fig:alldelta}

\end{figure*}

\begin{figure*}
\begin{tabular}{cccc}
  \includegraphics[width=0.5\columnwidth]{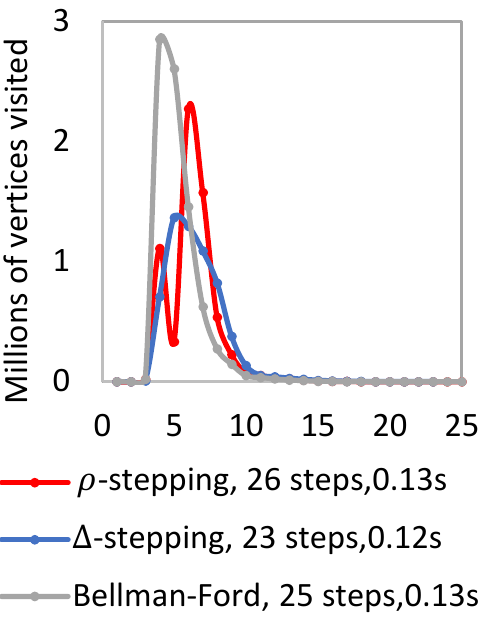}&
  \includegraphics[width=0.5\columnwidth]{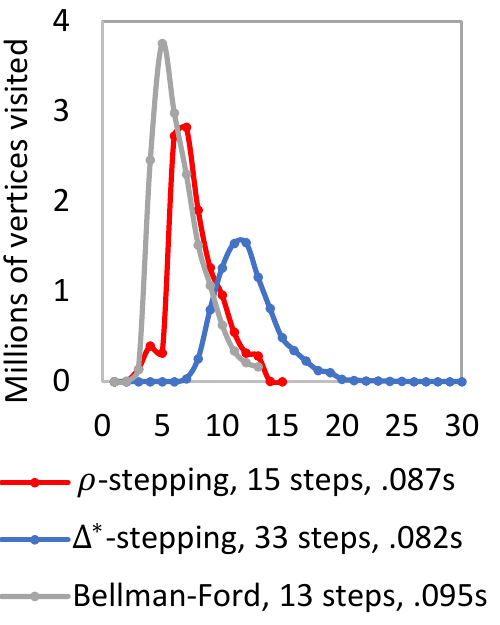}&
  \includegraphics[width=0.5\columnwidth]{figures-new/per-round/TW.pdf}&
  \includegraphics[width=0.5\columnwidth]{figures-new/per-round/FT.pdf}\\
  \includegraphics[width=0.5\columnwidth]{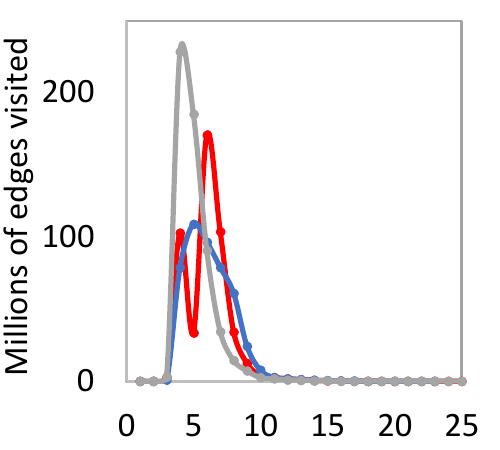}&
  \includegraphics[width=0.5\columnwidth]{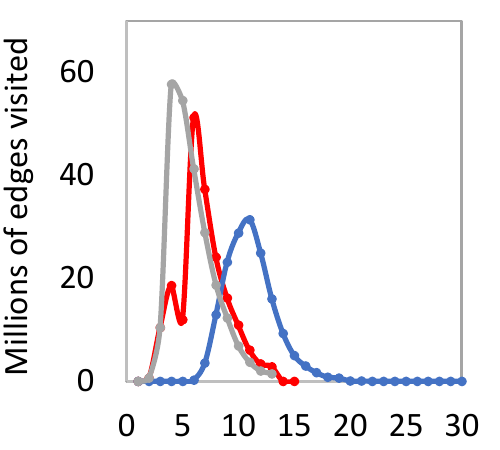}&
  \includegraphics[width=0.5\columnwidth]{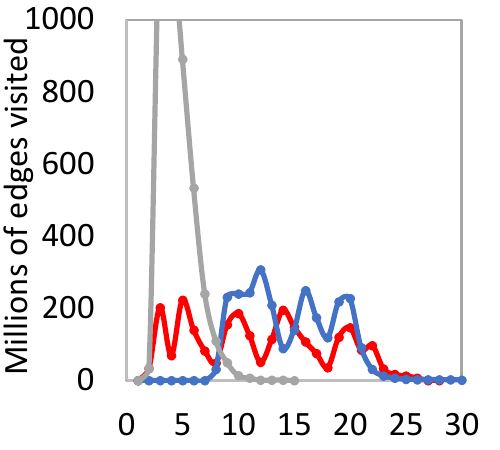}&
  \includegraphics[width=0.5\columnwidth]{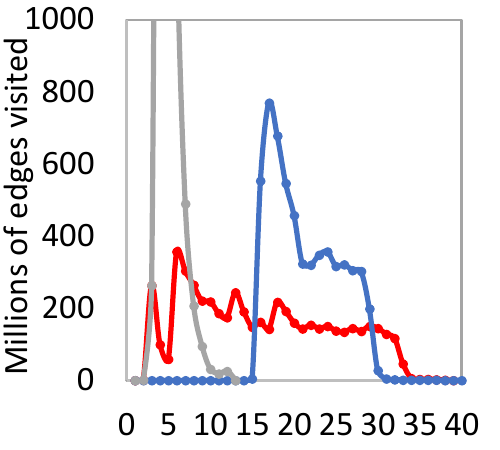}\\
\bf (a). OK& \bf (b). LJ & \bf (c). TW& \bf (d). FT \\
\end{tabular}
\begin{tabular}{ccc}
  \includegraphics[width=0.5\columnwidth]{figures-new/per-round/WB.pdf}&
  \includegraphics[width=0.5\columnwidth]{figures-new/per-round/USA.pdf}&
  \includegraphics[width=0.5\columnwidth]{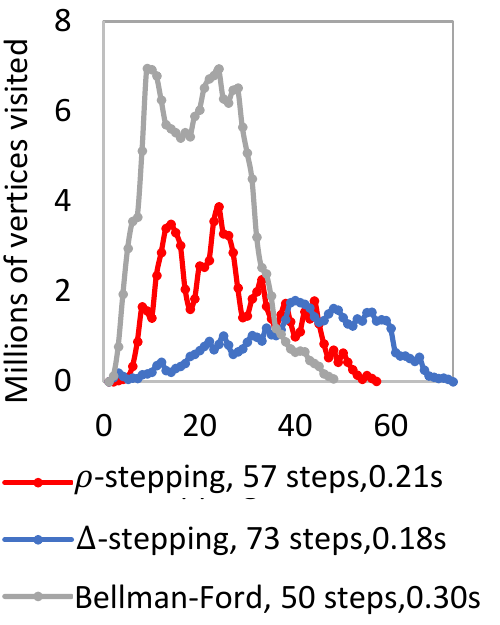}\\
  \includegraphics[width=0.5\columnwidth]{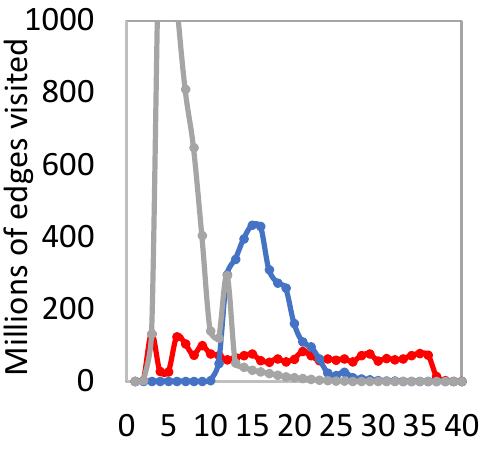}&
  \includegraphics[width=0.5\columnwidth]{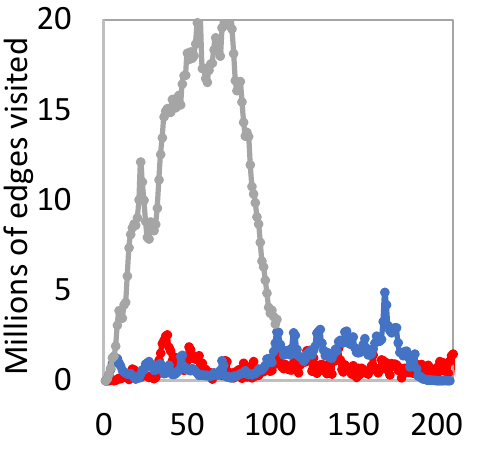}&
  \includegraphics[width=0.5\columnwidth]{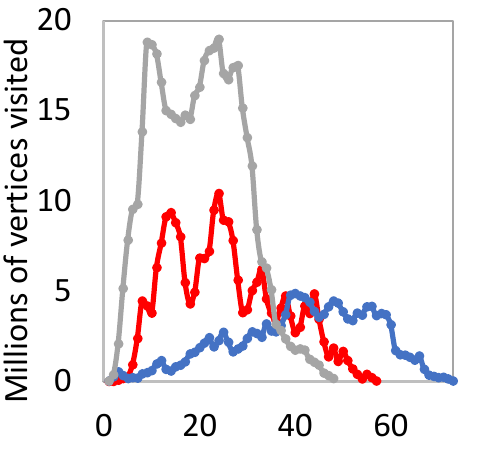}\\
\bf (e). WB& \bf (f). USA & \bf (g). GE\\
\end{tabular}
\caption{\textbf{Number of visited vertices in each step in \ourrho, \ourdelta and \ourbf.} \mdseries Here we only run on one source vertex, since it has unclear meaning to compute the average of multiple runs on each step.  Hence, the runtimes can be different from Table \ref{tab:alltime} (average on 100 runs from 10 source vertices), and some curves are bumpy. We use 96 cores (192 hyperthreads).\label{fig:visted-per-step-full}}
\end{figure*}

\fi

\end{document}